\documentclass[11pt,letterpaper]{article}
\pdfoutput=1
\usepackage[margin=1in]{geometry}
\usepackage[ascii]{inputenc}
\usepackage{mathtools}
\usepackage[dvipsnames,svgnames]{xcolor}
\usepackage{amsmath,amssymb,amsthm,mathtools}
\usepackage{mathrsfs}
\usepackage{enumitem}
\usepackage[bookmarksnumbered,linktocpage,hypertexnames=false,colorlinks=true,linkcolor=NavyBlue,urlcolor=NavyBlue,citecolor=ForestGreen,anchorcolor=green,breaklinks=true,pagebackref=true,pdfusetitle]{hyperref}
\usepackage[nameinlink,capitalize]{cleveref}
\numberwithin{equation}{section}
\newtheorem{theorem}{Theorem}[section]
\newtheorem{lemma}[theorem]{Lemma}

\newtheorem{corollary}[theorem]{Corollary}
\newtheorem{definition}[theorem]{Definition}
\newtheorem{remark}[theorem]{Remark}
\AddToHook{env/theorem/begin}{\crefalias{theorem}{theorem}}
\AddToHook{env/lemma/begin}{\crefalias{theorem}{lemma}}
\AddToHook{env/proposition/begin}{\crefalias{theorem}{proposition}}
\AddToHook{env/corollary/begin}{\crefalias{theorem}{corollary}}
\AddToHook{env/definition/begin}{\crefalias{theorem}{definition}}
\AddToHook{env/remark/begin}{\crefalias{theorem}{remark}}

\usepackage{macros}
\usepackage{bbm}
\usepackage{quantikz}
\newcommand{\calH}{\mathcal H}
\newcommand{\Ell}{\mathrm L}
\newcommand{\U}{\mathrm U}
\newcommand{\Or}{\mathrm O}
\newcommand{\SO}{\mathrm{SO}}
\newcommand{\eps}{\varepsilon}
\DeclareMathOperator{\Pin}{Pin}
\DeclareMathOperator{\End}{End}
\allowdisplaybreaks[4]
\title{The Grothendieck Game and Self-Testing Continuous Groups}
\author{Alexander Kulpe\thanks{Ruhr University Bochum, LMU Munich \& MCQST, \url{alexander.kulpe@rub.de}} \and Giulio Malavolta\thanks{Bocconi University, \url{giulio.malavolta@unibocconi.it}} \and Simon Schmidt\thanks{Ruhr University Bochum, \url{s.schmidt@rub.de}} \and Michael Walter\thanks{LMU Munich \& MCQST, \url{michael.walter@lmu.de}}}
\date{}

\begin{document}
\maketitle

\begin{abstract}
Self-testing certifies quantum states and measurements from observed correlations alone, without requiring a description of the devices that produce them. In this work we establish, to the best of our knowledge, the first robust self-test of the \emph{full continuous} family of measurements generated by Majorana operators. We achieve this by showing that the Grothendieck game, a two-player nonlocal game in which the players receive directions on a sphere and return binary answers, is a self-test of Majorana observables. That is, any near-optimal performance certifies an entire continuously parameterized family of quantum measurements. We further extend this certification to products of Majorana observables, obtaining a robust self-test of the spin representation of the continuous pin group.
\end{abstract}

\tableofcontents

\section{Introduction}\label{sec:intro}
How can we certify a quantum device when its internal workings are unknown?
Bell inequalities~\cite{Bell1964,CHSH1969} turn this question into an experiment that distinguishes quantum correlations from classical explanations.
\emph{Self-testing} identifies the state and quantum measurements on that state from certain ideal quantum correlations, up to local isometries.
Robust self-testing gives quantitative closeness guarantees for near-ideal correlations.
Since the work of Mayers and Yao~\cite{MayersYao1998,MayersYao2004}, this principle has led to robust certification of entanglement and measurements~\cite{McKagueYangScarani2012,ColadangeloGohScarani2017,SupicBowles2020}, which in turn allow protocols to use uncharacterized devices for certified randomness and key distribution~\cite{PironioEtAl2010,VaziraniVidick2014}, classical verification of quantum computations~\cite{ReichardtUngerVazirani2013}, and the control of entangled provers underlying $\mathrm{MIP}^*=\mathrm{RE}$~\cite{JiEtAl2020}.

Self-testing is often studied in the framework of nonlocal games.
A nonlocal game consists of two (or more) cooperative players, called Alice and Bob, that interact with a referee.
The players receive questions and respond with answers.
The referee decides whether the players win or lose based on the rules of the game.
During the game, the players are not allowed to communicate, but quantum players may share an entangled state and perform local quantum measurements to maximize their winning probability.
In this work we are especially interested in XOR games, where the winning condition depends only on the XOR of the players' answers, which are single bits.

Group theory naturally arises in the context of self-testing:
Many games that are self-tests use relations among the observables comprising an optimal quantum strategy to identify a group representation.
For example, the canonical optimal strategy for the CHSH game \cite{CHSH1969} contains the Pauli matrices $X$ and $Z$, which generate the unique irreducible 2-dimensional representation of the real Pauli group. The Pauli matrices are self-adjoint unitaries and anticommute, that is, it holds $X^2=Z^2=I$ and $XZ+ZX=0$.
More generally, for $n=2m$, there are self-adjoint matrices $\gamma_1,\ldots,\gamma_n$ on $\mathbb C^{2^m}$, satisfying
\begin{align}
\gamma_i\gamma_j+\gamma_j\gamma_i=2\delta_{ij}I,\label{eq:majoranarelations}
\end{align}
which are often called \emph{Majorana (fermion) operators}\footnote{A concrete realization is given by $\gamma_{2k+1} = Z^{\ot k} \ot X \ot I^{\ot m-(k+1)}$, $\gamma_{2k+2} = Z^{\ot k} \ot Y \ot I^{\ot m-(k+1)}$, $0\leq k \leq m-1$, where $X, Y$ and $Z$ are the Pauli matrices.}.
These generate the unique irreducible $2^m$-dimensional representation of a finite group~$C_n$ which in mathematics is usually called the \emph{Clifford group} (not to be confused with the Clifford group in quantum information theory), with presentation
\begin{align}
    C_n = \left\langle J, c_1, \dots, c_n \mid J^2 = c_i^2 = 1, [J,c_i] = 1, c_ic_j = Jc_jc_i \; (i \neq j)\right\rangle.\label{eq:Cliffordgroup}
\end{align}
Games that self-test the Majorana operators have been studied in several works.
In~\cite{slofstra2011lower}, Slofstra shows that near-optimal strategies for an XOR game correspond to approximate representations of a certain algebra associated with the game.
It is proven that any collection of self-adjoint matrices that almost anticommute are already close to matrices in the aforementioned representation of the group~$C_n$.
Additionally, as a natural generalization of the CHSH game, the $\mathrm{CHSH}(n)$ game is introduced.
Ostrev~\cite{ostrev2016} then shows that the Majorana operators~$\gamma_i$ together with the maximally entangled state give an optimal strategy for $\mathrm{CHSH}(n)$, and that the operators in any near-optimal strategy are close to those in the optimal strategy.
This gives a robust self-test for the Majorana operators~$\gamma_i$.
Ji~\cite{ji2013binary} gives a binary constraint system game arising from gluing together several magic square games that likewise tests the Majorana operators (but does not establish robustness).

Majorana operators have also been studied in other self-testing frameworks.
In \cite{IrfanEtAl2020} a contextuality-based approach for certifying two-qubit Majorana operators is presented.
Furthermore, a self-test for Majorana operators from Bell inequalities (which generalize nonlocal games) is given in~\cite{MichalskiEtAl2025}. They study row-orthogonal, column-normalized (ROCN) matrices and show that certain ROCN Bell inequalities give the desired self-tests.

Besides self-tests for representations of the Clifford group, another celebrated self-testing example is the Pauli braiding test~\cite{NatarajanVidick2017}.
It can be interpreted as a self-test of a representation of the finite Weyl--Heisenberg group.
Such rigidity arguments also support verification with a single computationally bounded quantum prover under cryptographic assumptions: compiled CHSH and XOR games~\cite{NatarajanZhang2023,CuiEtAl2026}, protocols based on oblivious state preparation and trapdoor claw-free functions~\cite{BartusekKhurana2025,BachoEtAl2025}, and succinct arguments for QMA~\cite{BartusekLiuMalavolta2026}.

The games we have seen so far have finitely many questions and answers and therefore self-test only finitely many observables.
In this work we are interested in the fundamental question:
\begin{center}
    \emph{Is robust self-testing possible for a continuous family of measurements (by a single game)?}
\end{center}
For this, we have to study games with infinitely many questions.
What are the guarantees that one can hope for in such a setting?
In a finite test, every designated measurement occurs with positive probability, whereas under a continuous distribution of questions, any single question has probability zero:
thus an optimal strategy can be changed arbitrarily on a fixed finite set of questions without affecting its value, and hence no fixed finite set of generators can be certified.
Moreover, even certifying finitely many generators would not by itself constrain the devices' measurements at the remaining questions.
The guarantee one should aim for is therefore an \emph{probabilistic} one:
a robust self-test should certify that the continuous family of observables of a near-optimal strategy approximate, on average or even with high probability over the choice of questions, an ideal implementation, under one pair of local isometries.

To our knowledge, self-testing a continuous family of measurements from a near-optimal strategy for a \emph{single} nonlocal game has not previously been studied.
The closest related work is perhaps~\cite{chen2024all}, that certifies arbitrary real projective measurements by using an iterative self-testing procedure. In the iterative protocol, a previously self-tested strategy is extended to include an additional measurement. They show that this can be done in a way that eventually one can include any real projective measurement in such a self-test.

\subsection{Our Results}
Many games in the literature have a finite set of questions and answers, so it is not immediately clear which game is suitable for certifying a continuous family of measurements. Our starting point is the little Grothendieck inequality, originally proved by Grothendieck~\cite{Grothendieck1953}; see also \cite{Pisier2012}. In the language of nonlocal games, the little Grothendieck theorem bounds the quantum-to-classical bias ratio of any XOR game with positive semidefinite cost matrix by \(\frac{\pi}{2}\). Here, the bias measures how much more likely Alice and Bob are to win than to lose: a strategy with winning probability $p$ has bias $2p-1$. The optimal classical (quantum) bias is the supremum of this quantity over classical (quantum) strategies. Following \cite{Briet2011, briet2011generalized}, the construction establishing the sharpness of the little Grothendieck inequality can be interpreted as a sequence of XOR games with infinitely many questions, whose quantum-to-classical bias ratios converge to \(\frac{\pi}{2}\).
Following \cite{Briet2011}, we call the $n$-th game in this sequence the \emph{Grothendieck game} $G_n$. It proceeds as follows.
The referee samples a question pair \((x,y)\) of vectors $x,y\in S^{n-1}$ in the unit sphere with density \( \abs{\ip{x}{y}} / z_n\) with respect to the uniform product measure and sends question \(x\) to Alice and question \(y\) to Bob. Here, $z_n \sim \sqrt{2/(\pi n)}$ is the normalization constant.
Alice and Bob win if their answers \(a,b\in\{\pm1\}\) satisfy
\(ab=\operatorname{sgn}\langle x,y\rangle\), where $\operatorname{sgn}(.)$ is the sign function.
In other words, their answers should agree when the inner product is positive and differ when it is negative.

In fact, the sharpness proof in \cite{Briet2011, briet2011generalized} shows that $G_n$ is universal in the sense that its optimal strategy gives rise to optimal quantum strategies for any finite positive semidefinite XOR game. For us, the universal nature of $G_n$ makes it a natural candidate for self-testing a continuous family of measurements. Until now, it was studied in the context of finding games that need a high amount of entanglement when played optimally. Surprisingly, its self-testing properties have not been studied in the literature. 

For the Grothendieck game, the optimal classical bias is \(\beta_\mathrm{c}(G_n) = z_n\) and its optimal quantum bias is \(\beta_\mathrm{q}(G_n) = \frac{1}{nz_n}\) \cite{Briet2011}. An optimal strategy is given by a continuous family of Majorana operators, measured by Alice and Bob on a maximally entangled state. 
More precisely, the operators have the following form. For even \(n = 2m \geq 2\), let \(\gamma_1, \dots, \gamma_n\) be the Majorana operators introduced in \Cref{eq:majoranarelations}, and define
\begin{align}
    \gamma(x) = \sum_{i = 1}^n x_i \gamma_i, \quad x \in S^{n-1},\label{eq:lincombMajorana}
\end{align}
where \(S^{n-1}\) denotes the unit sphere in \(\R^n\). The Majorana relations imply \(\gamma(x)^* = \gamma(x)\), \(\gamma(x)^2 = I\), and \(\{\gamma(x), \gamma(y)\} = 2 \ip{x}{y}I\). Thus every \(\gamma(x)\) is a binary observable. Any choice of orthonormal basis of $\R^n$ gives us operators that satisfy the relations in \Cref{eq:majoranarelations}. Thus, varying the orthonormal basis yields a continuous family of equivalent representations of \(C_n\). Self-tests for \(n\) Majorana operators satisfying \Cref{eq:majoranarelations} are known, but certifying the entire continuous family simultaneously has not previously been achieved.

In this paper, we show that $G_n$ indeed has strong self-testing properties.
As mentioned before, an optimal quantum strategy for $G_n$ attaining the quantum bias uses a maximally entangled state, with Alice and Bob measuring \(\gamma(x)\) and \(\overline{\gamma(y)}\) on their respective local Hilbert spaces.
We show that every near-optimal quantum strategy is close to this canonical strategy.

\begin{theorem}[Majorana self-testing, \cref{thm:self-testing}]\label{thm:intro-self-testing}
For \(n=2m\), consider a finite-dimensional strategy for the Grothendieck game $G_n$ with state \(\ket\psi\) and observables \(A(x),B(y)\) for $x,y \in S^{n-1}$. If the strategy achieves bias at least \((1-\varepsilon)\beta_\mathrm{q}(G_n)\), then there exist local isometries \(\Phi_A,\Phi_B\) and an auxiliary state
\(\ket{\mathrm{aux}}\) such that 
    \begin{align*}
        \norm[\big]{(\Phi_A \otimes \Phi_B)\ket{\psi} - \ket{\psi^+_{2^m}} \otimes \ket{\mathrm{aux}}}^2 &\leq \bigO(n^2\varepsilon), \\
        \E_{x \sim \sigma_n} \norm[\big]{ ((\Phi_A A(x) - (\gamma(x) \otimes I)\Phi_A) \otimes I)\ket\psi}^2 &\leq \bigO(n\varepsilon), \\
        \E_{y \sim \sigma_n} \norm[\big]{ (I\otimes (\Phi_B B(y) - (\overline{\gamma(y)} \otimes I)\Phi_B))\ket\psi}^2 &\leq \bigO(n\varepsilon)
    \end{align*}
where \(\ket{\psi^+_{2^m}}\) is the $2^m$-dimensional maximally entangled state.
\end{theorem}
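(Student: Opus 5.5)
The plan is to reduce the Grothendieck game to a finite-dimensional "linear" structure via the SOS certificate of the quantum bias, and then to invoke known robust rigidity of approximately anticommuting observables. Write the bias as
\[
\beta = \frac{1}{z_n}\E_{x,y}\,\operatorname{sgn}\ip{x}{y}\,\abs{\ip{x}{y}}\,\bra\psi A(x)\otimes B(y)\ket\psi = \frac{1}{z_n}\E_{x,y}\ip{x}{y}\bra\psi A(x)\otimes B(y)\ket\psi.
\]
This is bilinear in the vector operators $\hat A_i := \sqrt{n}\,\E_x x_i A(x)$ and $\hat B_i := \sqrt{n}\,\E_y y_i B(y)$. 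Concretely, $\beta = \frac{1}{n z_n}\sum_i \bra\psi \hat A_i\otimes \hat B_i\ket\psi$. With $a_i=(\hat A_i\otimes I)\ket\psi$ and $b_i=(I\otimes\hat B_i)\ket\psi$, Cauchy--Schwarz gives $\sum_i\ip{a_i}{b_i}\le \sqrt{\sum\norm{a_i}^2}\sqrt{\sum\norm{b_i}^2}$. Moreover,
\[
\sum_i \norm{a_i}^2 = n\,\E_{x,x'}\ip{x}{x'}\bra\psi A(x)A(x')\ket\psi \le 1 .
\]
The inequality holds because the Gram kernel $\ip{x}{x'}$ is PSD and $A(x)$ are contractions: the sum equals $n\norm{\E_x x\otimes A(x)\ket\psi}^2$, and the map $f\mapsto \E_x x f(x)$ is the orthogonal projection onto degree-one harmonics, which has norm $1/\sqrt n$ on $L^2$. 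This recovers $\beta\le 1/(nz_n)$. Near-optimality $\beta\ge(1-\eps)/(nz_n)$ then forces three things: $b_i\approx a_i$ in the sense $\sum_i\norm{a_i-b_i}^2=\bigO(\eps)$; $\sum\norm{a_i}^2\ge 1-\bigO(\eps)$; and, from the projection inequality being tight, $(A(x)\otimes I)\ket\psi\approx \sqrt n\sum_i x_i a_i$ on average over $x$, with error $\E_x\norm{(A(x)\otimes I)\ket\psi-\sum_i x_i\hat A_i\ket\psi}^2=\bigO(\eps)$. The same holds for $B$.

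The next step is to extract approximate Majorana relations for the finitely many operators $\hat A_i$ in the state-dependent norm $\norm{\cdot}_\psi$. Since $A(x)^2=I$, the previous step gives
\[
\E_x\norm{(\hat A(x)^2-I)\ket\psi}\;\text{small},\qquad \hat A(x)=\sum_i x_i\hat A_i .
\]
Since $\hat A(x)^2=\sum_{ij}x_ix_j\hat A_i\hat A_j$, averaging against the quadratic harmonics in $x$ separates the components. This yields $\norm{(\hat A_i^2-I)\ket\psi}$ and $\norm{(\{\hat A_i,\hat A_j\})\ket\psi}$ small, summed over $i,j$, with loss factors polynomial in $n$. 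The factors of $n$ in the statement come from this averaging, specifically $\E x_i^2x_j^2\sim n^{-2}$. To control $\hat A_i$ being only contractions rather than unitaries, I would replace $\hat A_i$ by its sign or polar part $\tilde A_i$, using $\norm{(I-\hat A_i^2)\ket\psi}$ small, and transfer errors to Bob's side via $a_i\approx b_i$ in the standard way. This gives genuine observables $\tilde A_1,\dots,\tilde A_n$ that approximately anticommute on $\ket\psi$, with total error $\bigO(n\eps)$ in the squared sum.

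Then I would apply the robust self-test for approximately anticommuting observables (Slofstra / Ostrev's $\mathrm{CHSH}(n)$ rigidity, assumed available) to obtain isometries $\Phi_A,\Phi_B$ with $\Phi_A\tilde A_i\ket\psi\approx(\gamma_i\otimes I)\Phi_A\ket\psi$ and the state close to $\ket{\psi^+_{2^m}}\otimes\ket{\mathrm{aux}}$. The standard construction is a swap isometry built from the $\tilde A_i$ pairs and Fourier averaging over $C_n$. Finally, linearity lifts the result back to all of $S^{n-1}$:
\[
\E_x\norm{(\Phi_AA(x)-\gamma(x)\Phi_A)\ket\psi}^2\le 2\,\E_x\norm{(A(x)-\tilde A(x))\ket\psi}^2+2\,\E_x\Big\|\sum_i x_i(\Phi_A\tilde A_i-\gamma_i\Phi_A)\ket\psi\Big\|^2 .
\]
The second term equals $\frac{2}{n}\sum_i\norm{\cdot}^2$ by $\E x_ix_j=\delta_{ij}/n$, which is how the better $\bigO(n\eps)$ rate for observables arises, in contrast with $\bigO(n^2\eps)$ for the state. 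The main obstacle I expect is the second step: getting approximate anticommutation from the single scalar bias condition with only polynomial loss in $n$ requires careful harmonic analysis on the sphere and tracking of state-dependent norms. It also requires ensuring that the rigidity theorem for $C_n$ has dimension-independent constants after summing over all pairs.
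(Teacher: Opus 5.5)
Your first step is correct and is essentially the paper's common-linear-map lemma in Fourier coordinates. Bessel's inequality for the orthonormal functions $\sqrt n\,x_i$ gives $\sum_i\|a_i\|^2\le 1$. Near-optimality then gives $\sum_i\|a_i-b_i\|^2\le 2\varepsilon$ and $\mathbb{E}_x\|(A(x)\otimes I)\ket{\psi}-\sqrt n\sum_i x_i a_i\|^2=1-\sum_i\|a_i\|^2\le 2\varepsilon$; your second display drops the $\sqrt n$.

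\textbf{The gap is in the second step.} In the ideal strategy $\hat A_i=\gamma_i/\sqrt n$, so $\hat A(x)^2\approx I/n$, not $I$. The operators that should be approximate Majorana operators are therefore $M_i=\sqrt n\,\hat A_i=n\,\mathbb{E}_x x_iA(x)$ and $M(x)=\sum_i x_iM_i$. These are not contractions: the only a priori bound is $\|M(x)\|\le nz_n=\Theta(\sqrt n)$.

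To pass from $A(x)^2=I$ and $\mathbb{E}_x\|((A(x)-M(x))\otimes I)\ket{\psi}\|^2=O(\varepsilon)$ to $\mathbb{E}_x\|((M(x)^2-I)\otimes I)\ket{\psi}\|^2=O(\varepsilon)$, you must control $M(x)(M(x)-A(x))\ket{\psi}$. The natural transfers to Bob's side each leave an operator of norm $\Theta(\sqrt n)$ acting on an error vector, which costs a factor of order $n$ in squared norms.

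The same problem recurs when you replace $M_i$ by $\operatorname{sgn}(M_i)$. Moving the anticommutation relations from $M_i$ to $\operatorname{sgn}(M_i)$ produces terms such as $(\operatorname{sgn}(M_i)-M_i)(M_j-\operatorname{sgn}(M_j))\ket{\psi}$, again with a $\Theta(\sqrt n)$ operator in front of a small vector. Even the first of these losses, fed through your harmonic separation, already gives per-generator anticommutation error $O(n^2\varepsilon)$ rather than $O(n\varepsilon)$. So the stated rates $O(n\varepsilon)$ for observables and $O(n^2\varepsilon)$ for the state do not follow. You flag this step as the main obstacle but do not resolve it. The third step also defers to Slofstra/Ostrev precisely the $n$-dependence that decides the rates.

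\textbf{The missing idea is to avoid the averaged operators and work only with genuine observables.} Take a Haar-random orthonormal pair $(x,y)$ and $z=(x+y)/\sqrt2$. Each of $x,y,z$ is marginally uniform, and the vector $\sqrt n\sum_i x_ia_i$ is linear in $x$. So your first step already gives $\mathbb{E}_{x\perp y}\|A(z)-\tfrac{1}{\sqrt2}(A(x)+A(y))\|_{\rho_A}^2=O(\varepsilon)$, without ever using $M(x)$ as an operator. Expanding $A(z)^2=I$ and using consistency then yields $\mathbb{E}_{x\perp y}\|\{A(x),A(y)\}\|_{\rho_A}^2=O(\varepsilon)$, with only operators of norm at most $\sqrt2$ involved.

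The paper then proceeds as follows:
\begin{itemize}
\item Taking the unitaries $A(q_i)$ for a Haar-random basis $Q$ as generators removes the need for rounding.
\item The linear extension $\sum_i\langle q_i,x\rangle A(q_i)$ is $O(\varepsilon)$-close to $A(x)$ on average over $x$ and $Q$.
\item The averaging bound $\mathbb{E}_{a}\|(R_1^{a_1}\cdots R_k^{a_k}-I)\ket{\phi}\|^2\le\sum_j\|(R_j-I)\ket{\phi}\|^2$ makes the normal-form function on $C_n$ an approximate representation, with $O(n\varepsilon)$ error per generator.
\item Gowers--Hatami with the condition $f(gJ)=-f(g)$ extracts the $\gamma_i$.
\item The state bound comes from $\sum_i(\gamma_i\otimes I-I\otimes\overline{\gamma_i})^2\ge 4(I-P)$, which costs exactly one further factor of $n$.
\end{itemize}
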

Changing the measurements at any finite set of questions in a strategy leaves its bias for the game unchanged, so the expectation over the questions in the theorem is needed. To certify measurements, it is important that we can obtain self-testing guarantees for fixed operators. In \Cref{cor:HighProb}, we deduce from \Cref{thm:intro-self-testing} that in a Haar-random orthonormal basis \(Q = (q_1, \dots, q_n)\) of \(\R^n\)  all \(2n\) observables \(A(q_i), B(q_i)\) are simultaneously certified with high probability.

\Cref{thm:intro-self-testing} gives us the desired self-test for a continuous family of observables. But can even self-test an irreducible representation of a \emph{continuous group}? 
 The set of operators in \Cref{eq:lincombMajorana} can be extended by taking products of the elements $\gamma(x)$. This yields an irreducible representation of the pin group, which we will describe now. Consider self-adjoint unitaries~$c(x)$ for~$x \in S^{n-1}$ subject to the relation $c(x)c(y)+c(y)c(x)=2\langle x, y \rangle $. The pin group $\Pin(n)$ is generated by the elements $c(x)$ and the representation
 \begin{align*}
    \Gamma \colon \Pin(n) \to \U(\C^{2^m}), \quad c(x_1) \cdots c(x_k) \mapsto \gamma(x_1) \cdots \gamma(x_k), \quad k\in\N, \, x_1,\dots,x_k \in S^{n-1},
\end{align*}
is called the \emph{spin representation} of $\Pin(n)$. Any element $g\in \Pin(n)$ can be written as signed product of at most $n$ generators $c(x)$. 
The main challenge in lifting the self-testing result from the generators to the group is that such
a decomposition is not unique. A priori it is not clear how the operators $A(x)$ and $B(y)$ of Alice and Bob can be extended to an approximate representation of the pin group. This is solved by taking rotations $R \in \SO(n)$ and their associated automorphisms on the pin group into account. We obtain unitaries $A_R(g)$ and $B_R(g)$ for $g \in \Pin(n)$, $R\in \SO(n)$ which are, crucially, implementable by Alice and Bob when given unitary access to their observables. We show that the Grothendieck game is a self-test for the spin representation of the pin group.

\begin{theorem}[Spin representation self-test, \Cref{thm:spinselftest}]\label{thm:intro-spinselftest}
In the situation of \cref{thm:intro-self-testing}, we also have (for the same local isometries and state):
    \begin{align*}
        \E_{R \in \SO(n)} \E_{g \in \Pin(n)} \norm[\big]{((\Phi_A A_R(g) - (\Gamma(g) \otimes I)\Phi_A) \otimes I)\ket\psi}^2 &\leq \bigO(n^3 \varepsilon), \\
        \E_{R \in \SO(n)} \E_{g \in \Pin(n)} \norm[\big]{(I \otimes (\Phi_B B_R(g) - (\overline{\Gamma(g)}\otimes I)\Phi_B))\ket{\psi}}^2 &\leq \bigO(n^3 \varepsilon).
    \end{align*}
\end{theorem}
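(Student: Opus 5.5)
The plan is to reduce the group-level statement to the generator-level guarantees of \cref{thm:intro-self-testing}. This works because $A_R(g)$ will be defined as an explicit product of at most $n$ of Alice's observables.

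\textbf{Step 1: the construction.} Every $g \in \Pin(n)$ can be written as $g = \pm c(e_{i_1}) \cdots c(e_{i_k})$ with $i_1 < \dots < i_k$ in the standard basis, up to an element of $\SO(n)$ acting through the automorphism $c(x) \mapsto c(Rx)$. More precisely, for a rotation $R$ set $q_i = Re_i$. Then $\Pin(n)$ is covered by products $\pm\, c(q_{i_1}) \cdots c(q_{i_k})$ twisted by an element $h$ of the compact group. I expect the paper's $A_R(g)$ to be of the form
\[
A_R(g) = \pm A(x_1) \cdots A(x_k),
\]
where $x_1, \dots, x_k$ are determined by $g$ and $R$, and where, for $R$ Haar random and $g$ Haar random, each individual $x_j$ is distributed as $\sigma_n$. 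I will take this as the defining property: the map $(R,g) \mapsto x_j$ pushes the product Haar measure forward to the uniform measure on $S^{n-1}$. This follows from the invariance of the Haar measure on $\SO(n)$ under left translation.

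\textbf{Step 2: telescoping.} Write $\tilde A(x) = \Phi_A A(x) \Phi_A^*$ and $\tilde\gamma(x) = \gamma(x) \otimes I$, and let $\ket{\tilde\psi} = (\Phi_A \otimes \Phi_B)\ket\psi$. The first step is to convert operator closeness on the state into a statement that can be iterated. The standard trick is to move operators to Bob's side. For the ideal state, $(\gamma(x) \otimes I)\ket{\psi^+} = (I \otimes \gamma(x)^T)\ket{\psi^+}$. Using the state bound $O(n^2\varepsilon)$ together with Bob's bound $O(n\varepsilon)$, I derive
\[
\E_x \norm{(A(x) \otimes I)\ket\psi - (I \otimes B'(x))\ket\psi}^2 \le O(n\varepsilon)
\]
for a suitable Bob operator $B'(x)$. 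One can also obtain this directly from the near-optimality of the bias, as in the XOR-game SOS argument.

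Now peel off one factor of the product at a time:
\[
A(x_1) \cdots A(x_k) \otimes I \ket\psi \approx A(x_1) \cdots A(x_{k-1}) \otimes B'(x_k) \ket\psi,
\]
and continue until every factor sits on Bob's side in reversed order. Do the same for the ideal operators $\gamma$ on the ideal state. Each swap is unitary-invariant, because the remaining operators on the other side are unitaries and leave norms unchanged. Each swap therefore costs error $O(n\varepsilon)$ in squared norm on average, since each $x_j$ is marginally uniform. By Cauchy--Schwarz, the squared errors of $k \le n$ swaps sum to at most $k^2 \cdot O(n\varepsilon)/k$ per term. This gives $O(n^2\varepsilon)$ from the swaps, and $O(n^3\varepsilon)$ is also acceptable if the accounting is cruder. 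The state-replacement error contributes a further $O(n^2\varepsilon)$. Bob's side is symmetric, with complex conjugates.

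\textbf{Main obstacle.} The delicate point is Step 1. One must check that $A_R(g)$ is well defined independently of the decomposition, which is why $R$ is randomized. One must also check that the marginal distribution of every factor $x_j$ is exactly $\sigma_n$ when $g$ and $R$ are Haar. Correlations between the $x_j$ are harmless, because only the marginals enter the linearity of expectation. The remaining care is in bounding the accumulated error when $k$ is large: apply the triangle inequality with $(a_1 + \dots + a_k)^2 \le k \sum_j a_j^2$, which yields the stated $O(n^3\varepsilon)$ with $k \le n$ and per-step cost $O(n\varepsilon)$ on average plus the $O(n^2\varepsilon)$ state term.
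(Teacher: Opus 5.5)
Your skeleton is sound: $A_R(g)$ is a signed product of at most $n$ observables $A(Rx_j(h))$ with $h=\alpha_R^{-1}(g)$, each $Rx_j(h)$ is uniform on $S^{n-1}$, and Cauchy--Schwarz over $k\le n$ factors costs a factor $k$. The gap is the step that links the product of Alice's observables to the product of ideal ones. You move every factor to Bob's side in both chains, so the chains only meet if the swap partner $B'(x)$ is Bob's \emph{ideal} observable $\overline{\gamma(x)}$ on his extracted register, acting on $\ket{\tilde\psi}=(\Phi_A\otimes\Phi_B)\ket\psi$. Neither derivation you offer delivers this with the error you claim.
\begin{itemize}
\item If $B'(x)=B(x)$, which is what near-optimality gives directly, the real chain ends at $(\Phi_A\otimes\Phi_B B(x_k)\cdots B(x_1))\ket\psi$. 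The ideal chain ends, up to state error, at $\overline{\gamma(x_k)}\cdots\overline{\gamma(x_1)}$ on Bob's side applied to $\ket{\tilde\psi}$. Comparing these two is the original problem moved to Bob, so the argument is circular.
\item If you route each swap through the state bound, each swap costs $\bigO(n^2\varepsilon)$, not $\bigO(n\varepsilon)$; an $\bigO(n^2\varepsilon)$ ingredient cannot yield an $\bigO(n\varepsilon)$ bound. Then $k\le n$ swaps give only $\bigO(n^4\varepsilon)$.
\end{itemize}
The swap must also be done in the extracted frame. On $\mathcal H_B$, the pulled-back operator $\Phi_B^\dagger(\overline{\gamma(x)}\otimes I)\Phi_B$ is merely a contraction, and products of such operators are not the pull-back of the product.

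The missing ingredient is the $n$-independent consistency bound $\E_{x\sim\sigma_n}\norm{(A(x)\otimes I-I\otimes B(x))\ket\psi}^2\le\bigO(\varepsilon)$ from \cref{lemma:average-clifford-relations}. Combined with Bob's generator bound, and without the state bound, it gives $\E_{x\sim\sigma_n}\norm{(\Phi_A A(x)\otimes\Phi_B)\ket\psi-(\Phi_A\otimes(\overline{\gamma(x)}\otimes I)\Phi_B)\ket\psi}^2\le\bigO(n\varepsilon)$. With this swap your plan closes. Alice's remaining factors are contractions and the parked ideal factors are unitaries on Bob's side, so $k$ swaps cost $k\cdot k\cdot\bigO(n\varepsilon)=\bigO(n^3\varepsilon)$; your intermediate figure of $\bigO(n^2\varepsilon)$ is an arithmetic slip. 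The ideal product reaches the same vector through two state replacements, costing $\bigO(n^2\varepsilon)$.

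The paper avoids the state bound altogether. It parks $A(x_2),\dots,A(x_k)$ on Bob's side as $B(x_k)\cdots B(x_2)$ using consistency. It then replaces $\Phi_A A(x_1)$ by $(\gamma(x_1)\otimes I)\Phi_A$ using Alice's generator bound, brings $B(x_2)$ back, and repeats.

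Two claims in Step 1 are also false, though neither is load-bearing.
\begin{itemize}
\item Products $\pm c(Re_{i_1})\cdots c(Re_{i_k})$ of mutually orthogonal generators only lift involutions in $\Or(n)$, so they do not exhaust $\Pin(n)$.
\item $A_R(g)$ is not decomposition-independent. One fixes a measurable decomposition of $h$, and randomizing $R$ serves only to make each $Rx_j(h)$ uniform. This uses that $\alpha_R$ preserves Haar measure, so $h$ is Haar-distributed and independent of $R$.
\end{itemize}
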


Rounding an approximate representation of the Clifford group \(C_n\) to an exact one is a fundamental step in the proof of the self-testing statements. 
In \cite{vidicknotes} a suitable state-dependent version of the Gowers--Hatami stability theorem \cite{GowersHatami2017} is given for finite groups.
In \Cref{sect:bonus}, we present a simple and \emph{effective} proof that applies to finite and compact ``continuous'' groups alike, allows targeting a specific irreducible representation, and clarifies what is required so that the isometry can be implemented efficiently.
The isometry is efficient whenever the controlled approximate representation and preparation of a uniform superposition over the group are efficient.
The quantum Fourier transform for the group is needed only to extract a desired irreducible representation.

We also establish rigidity results for classical and vector strategies of the Grothendieck game~$G_n$.
An optimal deterministic strategy is given by the measureable functions \(a(x) =  \sgn\ip{c}{x}\) and \(b(y) = \sgn\ip{c}{y}\), where \(c\in S^{n-1}\). We show in \cref{thm:classical-rigidity} that any near-optimal deterministic strategy is close to such a strategy.
Furthermore, we establish rigidity for vector strategies, which assign real unit vectors \(a(x)\) and \(b(y)\) to the questions, with correlation \(\ip{a(x)}{b(y)}\).
By Tsirelson's theorem~\cite{tsirel1987quantum}, the optimal vector and quantum biases coincide.
In \cref{thm:vector-rigidity}, we prove that every near-optimal vector strategy is close to the vector strategy \(a(x) = x, b(y) = y\).

\subsection{Technical Outline}
We sketch the proof of \cref{thm:intro-self-testing}.
An optimal strategy for Alice and Bob in the Grothendieck game~$G_n$ is to share a maximally entangled state and, upon receiving questions $x$ and $y$, measure the Majorana observables $\gamma(x)$ and $\overline{\gamma(y)}$ (see \Cref{eq:lincombMajorana}), respectively. Our goal is to show that any near-optimal strategy is close to this optimal strategy. 

Consider a near-optimal strategy consisting of a state $\ket{\psi}$ and observables $A(x),B(y)$ for $x,y\in S^{n-1}$. From near-optimality we obtain that, on average, it holds
\begin{align*}
    (A(z)\otimes I)\ket{\psi}\approx \left(\frac{A(x)+A(y)}{\sqrt{2}}\otimes I\right)\ket{\psi}, \quad
     (A(x)\otimes I)\ket{\psi}\approx (I\otimes B(x))\ket{\psi},
\end{align*}
for a Haar-random orthogonal pair $(x,y)$ and $z=\frac{x+y}{\sqrt{2}}$. Those approximate equations routinely imply the anticommutativity relations
\begin{align*}
    (\{A(x),A(y)\}\otimes I)\ket{\psi}\approx 0,\quad
    (I\otimes \{B(x),B(y)\})\ket{\psi}\approx 0,
\end{align*}
for orthogonal pairs $(x,y)$, where we write $\{P,Q\}=PQ+QP$ for the anticommutator.

At this stage, it is tempting to apply the Gowers--Hatami theorem (\Cref{thm:GH irrep}) directly to the pin group $\Pin(n)$. However, we only have observables $A(x)$ and $B(x)$ for its generators $c(x)$, and have not extended them to approximate representations of the whole group. We solve this by choosing a Haar-random orthonormal basis of $\R^n$ and using the corresponding observables to construct approximate representations of the finite Clifford group $C_n$, to which the theorem applies. Randomness is important in the argument, because measurements at a prescribed finite set can be changed without affecting the bias of the strategy. We take this approach to obtain good robustness bounds for the Majorana observables. As we will see, these bounds then allow us to extend the self-test to the whole pin group by taking products of the observables.

Let $Q=(q_i)$ be a Haar-random orthonormal basis of $\R^n$. Define $\tilde A_Q(x)=\sum_i\ip{q_i}{x}A(q_i)$ and similarly for Bob. To make our approach work, we first show
\begin{align}
(\widetilde A_Q(x)\otimes I)\ket{\psi}\approx(A(x)\otimes I)\ket{\psi},\quad 
(I\otimes\widetilde B_Q(x))\ket{\psi}\approx (I\otimes B(x))\ket{\psi},\label{eq:sameoperators}
\end{align}
on average over a Haar-random basis $Q$ and a uniformly random question $x$. Thus, certifying the observables $A(q_i)$ and $B(q_i)$ will identify the whole measurement family.

The $2^{n+1}$ elements of the Clifford group $C_n$ (see \Cref{eq:Cliffordgroup}) have unique normal forms, on which we define
\[
f_A^Q(J^{a_0}c_1^{a_1}\cdots c_n^{a_n})=(-1)^{a_0}A(q_1)^{a_1}\cdots A(q_n)^{a_n},
\]
for $a_0,\dots,a_n\in\{0,1\}$, and similarly $f_B^Q$. We show that these are approximate representations of $C_n$. Averaging over group words allows us to control the errors that arise when reordering the generators. In particular, we obtain
\begin{align*}
    f_A^Q(h)A(q_i)\approx f_A^Q(hc_i), \quad f_A^Q(hJ)=-f_A^Q(h),
\end{align*}
on the state $\ket{\psi}$, on average over $h$ and for a random basis $Q$, and similarly for $f_B^Q$. Now, we can apply the Gowers-Hatami theorem to these approximate representations of the Clifford group $C_n$. For $n=2m$, the Majorana representation $c_i\mapsto\gamma_i$ is the unique irreducible representation with $J\mapsto-I$. Therefore, we obtain local isometries $\Phi_A, \Phi_B$ such that 
\begin{align*}
\Phi_A A(q_i)\approx(\gamma(q_i)\otimes I)\Phi_A,\quad
\Phi_B B(q_i)\approx(\overline{\gamma(q_i)}\otimes I)\Phi_B
\end{align*}
on the state $\ket{\psi}$. By linearity, the approximate equations extend to $\tilde A_Q(x)$ and $\tilde B_Q(x)$, and using \Cref{eq:sameoperators} yields that they are also true for the  operators $A(x)$ and $B(x)$ in the near-optimal strategy.

Finally, the state $\ket{\phi}=(\Phi_A\otimes \Phi_B)\ket{\psi}$ satisfies
$(\gamma(q_i)\otimes I)\ket{\phi}\approx
(I\otimes\overline{\gamma(q_i)})\ket{\phi},$
averaged over $i$. By a standard argument using consistency of the ideal Majorana observables, we deduce
\begin{align*}
(\Phi_A\otimes \Phi_B)\ket{\psi}
\approx
\ket{\psi_{2^m}^+}\otimes\ket{\mathrm{aux}} 
\end{align*}
for some auxiliary state $\ket{\mathrm{aux}}$. 

 We want to extend this self-test to the pin group to prove \Cref{thm:intro-spinselftest}. Fix a decomposition for each element of the pin group into a signed product of at most $n$ generators. The challenge for lifting the self-testing result is that for a Haar-random group element, the unit vectors appearing in the chosen decomposition need not be uniformly distributed on the sphere, so the averaged guarantees for the Majorana observables do not apply directly. We solve this by choosing an independent Haar-random rotation $R$, applying the decomposition in the rotated coordinate system, and rotating the resulting unit vectors back. Replacing the corresponding generators by the players' observables gives the unitaries $A_R(g)$ and $B_R(g)$. Averaging over both the group element and the rotation makes each resulting unit vector uniform on the sphere. We can therefore apply the Majorana self-test to each observable appearing in the product decompositions of $A_R(g)$ and $B_R(g)$. We then use the relation between Alice's and Bob's observables on the shared state to control the error in these products, extending the self-test to the spin representation under the same local isometries.

\section{Preliminaries}\label{sec:preliminaries}
We recall XOR games with measurable question spaces and define the Grothendieck game.

\subsection{XOR Games}
In the following, we will consider two-player nonlocal games with measurable question sets and binary answer sets.
The question spaces may be finite or infinite.
Let \(\mathcal{X}, \mathcal{Y}\) be measurable question spaces and let \(\mu\) be a measure on \(\mathcal{X} \times \mathcal{Y}\).
Let \(\pi \colon \mathcal X \times \mathcal Y \to [0, \infty)\) be a measurable probability density with respect to the measure \(\mu\).
Let \(g\colon \mathcal X \times \mathcal Y \to \set{\pm 1}\) be measurable and define \(V \colon \{\pm 1\}^2 \times \mathcal X \times \mathcal Y \to \set{0,1}\) by \(V(a,b \mid x,y) = \frac{1}{2}(1 + abg(x,y))\).
Such a nonlocal game \({G}\) can be described by the tuple \({G} = (\mathcal{X}, \mathcal{Y}, V, \mu, \pi)\) describing a scenario consisting of two non-communicating players, Alice and Bob, interacting with a referee.

In the game, the referee samples a pair of questions \((x,y) \in \mathcal{X} \times \mathcal{Y}\) according to \(\pi\), sending question \(x\) to Alice and question \(y\) to Bob.
Then, Alice (resp.~Bob) returns answer \(a \in \{\pm 1\}\) (resp.~\(b\in \{\pm 1\}\)) to the referee.
They win if \(V(a,b \mid x,y) = 1\) and lose otherwise.

All information about the game \({G}\) is available to the players before the game begins, allowing them to decide on a strategy beforehand.
However, once the game begins they are not allowed to communicate anymore.
The behavior of the players can be modeled by the probabilities \(p_S(a,b \mid x,y)\) of the answers \(a,b\) determined by the strategy \(S\) given questions \(x,y\).
The probability of winning the game \({G}\) under a strategy \(S\) is given by
\[
    \omega({G}, S) = \E_{(x,y) \sim \pi} \sum_{a,b \in \{\pm 1\}} V(a,b \mid x,y) p_S(a,b \mid x,y)
\]
and the bias is given by
\[
    \beta( G, S) \coloneq 2\omega( G, S) - 1 = \E_{(x,y) \sim \pi} g(x,y) \sum_{a,b} abp_S(a,b \mid x,y).
\]

In this paper, we will consider deterministic, classical, vector, and quantum strategies.
\begin{definition}
    A \emph{deterministic (classical) strategy} \(S=(a,b)\) for an XOR game \({G}\) consists of a pair of measurable functions \(a \colon \mathcal{X} \to \set{\pm 1}, b \colon \mathcal{Y} \to \set{\pm 1}\).
\end{definition}
Upon receiving the questions, the players apply their respective functions on their questions and output the outcomes.
The bias of a deterministic strategy \(S\) is given by
\[
    \beta( G, S) = \E_{(x,y) \sim \pi} g(x,y) a(x)b(y).
\]
The deterministic bias of a game \({G}\) is given by
\[
    \beta_\mathrm{det}( G) = \sup_{S \text{ deterministic strategy}} \beta({G}, S).
\]

\begin{definition}
    A \emph{classical strategy} \(S=(a,b)\) for an XOR game \({G}\) consists of
    \begin{itemize}
        \item a probability space \((\Omega, \mathcal{F}, \nu)\) describing shared randomness independent of the questions, along with
        \item a family of deterministic strategies \(\set{(a_w,b_w)}_{w\in \Omega}\) such that the maps \((w,x) \mapsto a_w(x)\) and \((w,y) \mapsto b_w(y)\) are jointly measurable.
    \end{itemize}
\end{definition}
Upon receiving questions \(x\) and \(y\), Alice and Bob use their shared randomness \(w \in \Omega\), apply their respective functions on their questions and output the outcomes.
The bias of a classical strategy \(S\) for a game \({G}\) is given by
\[
    \beta( G, S) = \E_{(x,y) \sim \pi} \E_{w \sim \nu} g(x,y) a_w(x)b_w(y)
\]
and the \emph{classical} bias of the game is given by
\[
    \beta_\mathrm{cl}( G) = \sup_{S \text{ classical strategy}} \beta({G}, S).
\]
Since for every classical strategy there exists a deterministic strategy with the same or higher winning probability it follows that $\beta_\mathrm{cl}( G) = \beta_\mathrm{det}( G)$.

\begin{definition}
    A \emph{quantum strategy} \(S=(A,B,\ket\psi)\) for an XOR game \({G}\) consists of
    \begin{itemize}
        \item a pure quantum state \(\ket{\psi} \in \mathcal{H}_A \otimes \mathcal{H}_B\) in the tensor product of two finite-dimensional Hilbert spaces \(\mathcal{H}_A\) and \(\mathcal{H}_B\),
        \item measurable functions $A \colon \mathcal X \to \U(\mathcal H_A)$ and $B \colon \mathcal Y \to \U(\mathcal H_B)$ satisfying \(A(x)^2 = B(y)^2 = I\) for all $x \in \mathcal X$ and $y \in \mathcal Y$.
    \end{itemize}
\end{definition}
Upon receiving the questions, the players measure their respective observables and output the outcomes.
The bias of a quantum strategy \(S\) for a game \({G}\) is given by
\[
    \beta({G}, S) = \E_{(x,y) \sim \pi} g(x,y) \bra{\psi}A(x) \otimes B(y)\ket{\psi}
\]
and the \emph{quantum} bias of a game \({G}\) is given by
\[
    \beta_\mathrm{q}( G) = \sup_{S \text{ quantum strategy}} \beta({G}, S).
\]

\begin{definition}\label{def:vectorstrat}
    A \emph{vector strategy} \(S=(a,b)\) for an XOR game \( G\) consists of an integer \(N \geq 1\) and measurable maps \(a \colon \mathcal{X} \to S^{N-1}, b \colon \mathcal Y \to S^{N-1}\) where \(S^{N-1}\) is the unit sphere in \(\R^N\).
\end{definition}
The bias of a vector strategy \(S\)  for a game \( G\) is given by
\[
    \beta({G}, S) = \E_{(x,y) \sim \pi} g(x,y) \ip{a(x)}{b(y)}
\]
and the \emph{vector} bias of the game is given by
\[
    \beta_\mathrm{vec}( G) = \sup_{S \text{ vector strategy}} \beta({G}, S).
\]

\begin{definition}[$\eps$-optimality]
For \(\varepsilon \geq 0\), a classical, quantum or vector strategy \(S\) for an XOR game \(G\) is called \emph{\(\varepsilon\)-optimal} if its bias satisfies
\[
    \beta(G, S) \geq (1-\varepsilon)\beta_t(G)
\]
for \(t \in \{\mathrm{cl},\mathrm{q},\mathrm{vec}\}\), respectively.
\end{definition}

Tsirelson's theorem establishes a correspondence between quantum and vector strategies.
Originally formulated for finite question sets, this correspondence extends directly to continuous question spaces, since the constructions apply pointwise and preserve measurability.

\begin{theorem}[{\cite[Thm.~2.1]{tsirel1987quantum}}]\label{thm:tsirelson}
    Let \(G\) be an XOR game with measurable question spaces~\(\mathcal X\) and~\(\mathcal Y\).
    Every quantum strategy $(A,B,\ket\psi)$ admits a vector strategy $(a,b)$ satisfying
    \[
        \bra\psi A(x) \otimes B(y) \ket\psi = \ip{a(x)}{b(y)}
    \]
    for every \((x,y)\in \mathcal X \times \mathcal Y\), and conversely.
    In particular,
    \[
        \beta_{\mathrm q}(G) = \beta_{\mathrm{vec}}(G).
    \]
\end{theorem}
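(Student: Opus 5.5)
The statement to prove is Tsirelson's correspondence extended to measurable question spaces. The plan is to apply the finite-question constructions pointwise in each direction and to check that measurability survives.

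\emph{Quantum to vector.} Given $(A,B,\ket\psi)$ with $\mathcal H_A\otimes\mathcal H_B\cong\C^{D}$, define
\[
a(x)=(A(x)\otimes I)\ket\psi,\qquad b(y)=(I\otimes B(y))\ket\psi .
\]
These are unit vectors in $\C^{D}$, and $\langle a(x),b(y)\rangle=\bra\psi A(x)\otimes B(y)\ket\psi$. Because $A(x)$ and $B(y)$ are self-adjoint and commute on the tensor product, this inner product is real. To land in a real sphere, use the realification $\C^{D}\cong\R^{2D}$, $v\mapsto(\mathrm{Re}\,v,\mathrm{Im}\,v)$. The real inner product of the images equals $\mathrm{Re}\langle v,w\rangle$, which here is the full complex inner product since it is real. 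Measurability is immediate: $x\mapsto A(x)$ is measurable and $M\mapsto(M\otimes I)\ket\psi$ is continuous and linear.

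\emph{Vector to quantum.} Given $a,b$ with values in $S^{N-1}$, pad to an even dimension $N'=2m\ge N$ and use the Majorana operators $\gamma_1,\dots,\gamma_{N'}$ of \Cref{eq:majoranarelations}. Set $A(x)=\gamma(a(x))$ and $B(y)=\overline{\gamma(b(y))}$, measured on $\ket{\psi^+_{2^m}}$.
\begin{itemize}
\item By the Majorana relations, each $A(x)$ and $B(y)$ is a self-adjoint unitary, i.e.\ a binary observable.
\item Using $(M\otimes\overline N)$ on the maximally entangled state, $\bra{\psi^+}M\otimes\overline N\ket{\psi^+}=2^{-m}\operatorname{tr}(MN^*)$, together with $\operatorname{tr}(\gamma_i\gamma_j)=2^m\delta_{ij}$ (which follows from the anticommutation relations), we get $\bra{\psi^+}A(x)\otimes B(y)\ket{\psi^+}=\langle a(x),b(y)\rangle$.
\item The maps $x\mapsto\gamma(a(x))$ and $y\mapsto\overline{\gamma(b(y))}$ are continuous linear functions of measurable maps, hence measurable.
\end{itemize}

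\emph{Equality of biases.} Since the correlations agree pointwise, integrating $g(x,y)$ times the correlation against $\pi$ gives equal biases for corresponding strategies. Taking suprema on both sides yields $\beta_{\mathrm q}(G)=\beta_{\mathrm{vec}}(G)$.

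\emph{Main subtlety.} The step needing the most care is the reality and measurability bookkeeping, in particular confirming that the realification preserves exactly the needed inner products. The trace identity and the padding to even dimension are routine.
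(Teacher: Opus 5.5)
Your proposal is correct and follows the same route as the paper. The paper only cites Tsirelson and remarks that the finite-question constructions apply pointwise and preserve measurability. The vector-to-quantum realization it later invokes (the observables $\gamma(a(x))$ on a maximally entangled state, with Bob measuring the complex conjugate) is exactly your construction, and your quantum-to-vector map $x\mapsto (A(x)\otimes I)\ket\psi$ with the real inner product $\operatorname{Re}\langle\cdot,\cdot\rangle$ matches how the paper treats $\mathcal H_A\otimes\mathcal H_B$ in \Cref{lem:common-linear-map}.
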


\subsection{Grothendieck Game}
We recall the definition of a nonlocal game, which we refer to as the \emph{Grothendieck game}~\(G_n\), based on the spherical construction studied in \cite{briet2011generalized,Briet2011}.  Let \(n \geq 1\), let \(\sigma_n\) denote the normalized uniform measure on \(S^{n-1}\), and define
\begin{align}
    z_n = \E_{x,y \sim \sigma_n} \abs{\ip{x}{y}} = \E_{x \sim \sigma_n} \abs{\ip{x}{c}} = \E_{x \sim \sigma_n} \abs{x_1} = \frac{\Gamma(n/2)}{\sqrt \pi \Gamma((n+1)/2)}\label{def:zn}
\end{align}
where \(x,y\) are sampled independently from \(\sigma_n\), \(c \in S^{n-1}\) is arbitrary, and where the second equality follows by rotational invariance.

The game is defined as follows:
\begin{itemize}
    \item \textbf{Questions:} The referee samples questions \( (x,y) \in S^{n-1} \times S^{n-1}\) according to the density \( q_n(x,y) = \frac{1}{z_n} \abs{\ip{x}{y}}\) with respect to \(\sigma_n \otimes \sigma_n\).
    He sends $x$ to Alice and $y$ to Bob.
    \item \textbf{Answers:} Alice (resp.~Bob) responds with \(a \in \set{\pm 1}\) (resp.~\(b \in \set{\pm 1}\)).
    \item \textbf{Win Condition:} They win if and only if \( a \cdot b = \sgn(\ip{x}{y})\) where we set \(\sgn (0) = 1\).
\end{itemize}

In the following sections, we study classical, quantum, and vector strategies for this game. In particular, we show in \Cref{sec:self-test} that it has strong self-testing properties.

\section{Classical Rigidity}
In this section, we study classical strategies for the Grothendieck game. We compute the optimal classical bias of \(G_n\) and show that any optimal deterministic strategy has the following form. It consists of
\[
    a(x) = \sgn \ip{c}{x}, \quad b(y) = \sgn\ip{c}{y},
\]
for some \( c \in S^{n-1}\).

We then prove that this characterization is robust. That is,
every \(\varepsilon\)-optimal deterministic strategy is close to one of these strategies in the sense of \cref{thm:classical-rigidity}.

\subsection{Classical Bias}
Since shared randomness does not increase the optimal bias, it suffices to consider deterministic strategies to compute \(\beta_\mathrm{cl}(G_n)\).
Let \( (a,b)\) be a deterministic strategy and define the maps $T_a=\E_{x\sim\sigma_n}a(x)\ket{x}$ and $T_b=\E_{y\sim\sigma_n}b(y)\ket{y}$.
The bias of this strategy is
\begin{equation}\label{eq:classical-bias}
\begin{aligned}
    \beta(G_n, (a,b)) = \E_{x,y\sim\sigma_n}q_n(x,y)\sgn\ip{x}{y}a(x)b(y) =\frac1{z_n}\E_{x,y\sim\sigma_n}\ip{x}{y}a(x)b(y)
    =\frac1{z_n}\ip{T_a}{T_b}.
\end{aligned}
\end{equation}
By rotational invariance,
\begin{equation}\label{eq:classical-moment}
\begin{aligned}
    \norm{T_a}_2
    =\max_{\norm{c}_2=1}\E_{x\sim\sigma_n}a(x)\ip{c}{x}\le\max_{\norm{c}_2=1}\E_{x\sim\sigma_n}\abs{\ip{c}{x}}=z_n.
\end{aligned}
\end{equation}
The same bound holds for \(T_b\). Hence, by Cauchy--Schwarz, the classical bias of the Grothendieck game is bounded by
\[
    \beta_{\mathrm{cl}}(G_n) = \sup_{a,b} \E_{x,y \sim \sigma_n} q_n(x,y) \sgn{\ip{x}{y}}a(x)b(y)
    =\frac1{z_n}\sup_{a,b}\ip{T_a}{T_b}
    \le\frac1{z_n}\sup_{a,b}\norm{T_a}_2\norm{T_b}_2
    \le z_n
\]
where the suprema range over measurable functions \(a,b\colon S^{n-1}\to\set{\pm1}\).
Thus, a strategy attains bias \(z_n\) if and only if \(T_a=T_b\) and \(\norm{T_a}_2=z_n\).

Suppose a strategy attains bias \(z_n\).
Then, \(T_a = T_b = z_n S\) for the unit vector \(S = T_a / \norm{T_a}_2\).
Further, it holds that
\[
    z_n = \norm{T_a}_2 = \E_{x \sim \sigma_n} a(x) \ip{S}{x} \leq \E_{x \sim \sigma_n} \abs{\ip{S}{x}} = z_n.
\]
In particular, it follows that \(a(x) \ip{S}{x} = \abs{\ip{S}{x}}\) almost everywhere.
Thus, \(a(x) = \sgn{\ip{S}{x}}\) almost everywhere and analogously \(b(y) = \sgn{\ip{S}{y}}\) almost everywhere.
Consequently, a measurable strategy \((a,b)\) is optimal if and only if there exists \(S \in S^{n-1}\) such that
\[
    a(x) = a_S(x) \coloneq  \sgn\ip{S}{x}, b(y) = b_S(y) \coloneq \sgn \ip{S}{y}
\]
for \(\sigma_n\)-almost every \(x\) and \(y\), respectively.
Conversely, rotational invariance gives \(T_{a_S} = T_{b_S} = z_n S\) and thus \(\beta_{\mathrm{cl}}(G_n) = z_n\).
The values of \(a\) and \(b\) on the equator \( \{ u \in S^{n-1} \colon \ip{S}{u} = 0\}\) may be chosen arbitrarily in \(\{\pm 1\}\).
Modifying either answer on any \(\sigma_n\)-null set preserves optimality.

\subsection{Robust Classical Rigidity}
We show that any deterministic strategy that approximately attains the optimal classical bias must be close to an optimal deterministic strategy.

\begin{theorem}\label{thm:classical-rigidity}
  Let $\varepsilon\geq 0$.
  Consider an \(\varepsilon\)-optimal deterministic strategy \(S = (a,b)\) for the Grothendieck game~\(G_n\).
    Then there exists \(c \in S^{n-1}\) such that
    \[
        \Pr_{x \sim \sigma_n}(a(x) \neq \sgn \ip{c}{x}) + \Pr_{y \sim \sigma_n}(b(y) \neq \sgn \ip{c}{y}) \leq 2\sqrt{z_nK_n}\sqrt\varepsilon
    \]
    where \(K_1 = K_2 = 1, K_n = (n-1)z_n \leq \sqrt n\) for all \(n \geq 3\).
\end{theorem}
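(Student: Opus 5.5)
We reuse the notation of the classical bias computation: $T_a=\E_x a(x)\ket{x}$, $T_b=\E_y b(y)\ket{y}$, and $\beta(G_n,(a,b))=\ip{T_a}{T_b}/z_n$. The proof has three steps: extract a good direction from $\eps$-optimality, relate the resulting ``bias deficit'' to the disagreement probability through an anti-concentration estimate for $\ip{c}{x}$ near the equator, and combine the two players.

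\textbf{Step 1 (choice of $c$).} By $\eps$-optimality, $\ip{T_a}{T_b}\ge (1-\eps)z_n^2$. Since $\norm{T_a}_2,\norm{T_b}_2\le z_n$ by \eqref{eq:classical-moment}, we get $\norm{T_a}_2,\norm{T_b}_2\ge(1-\eps)z_n$. Take $c=T_a/\norm{T_a}_2$ (if $T_a=0$ the hypothesis forces $\eps\ge1$, and the bound is trivial since the left side is at most $2$). Then
\[
\E_x a(x)\ip{c}{x}=\norm{T_a}_2\ge(1-\eps)z_n .
\]
For Bob, expand $\norm{T_a-T_b}^2\le z_n^2+z_n^2-2(1-\eps)z_n^2=2\eps z_n^2$. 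This gives
\[
\E_y b(y)\ip{c}{y}=\ip{c}{T_b}\ge \norm{T_a}_2-\norm{T_a-T_b}_2\ge z_n(1-\eps-\sqrt{2\eps}).
\]
To avoid the $\sqrt\eps$ loss for Bob, it is better to pick $c$ as the unit vector along $T_a+T_b$. Then
\[
\ip{c}{T_a}+\ip{c}{T_b}=\norm{T_a+T_b}_2\ge 2\sqrt{(1-\eps)}\,z_n\ge 2(1-\eps)z_n,
\]
using $\norm{T_a+T_b}^2=\norm{T_a}^2+\norm{T_b}^2+2\ip{T_a}{T_b}\ge 4(1-\eps)z_n^2$. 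Writing $\delta_a=z_n-\E_x a(x)\ip{c}{x}\ge0$ and $\delta_b$ analogously, we obtain $\delta_a+\delta_b\le 2\eps z_n$.

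\textbf{Step 2 (deficit versus disagreement).} Let $D_a=\{x: a(x)\ne\sgn\ip{c}{x}\}$ and $p_a=\sigma_n(D_a)$. Then
\[
\delta_a=\E_x\bigl(\abs{\ip{c}{x}}-a(x)\ip{c}{x}\bigr)=2\,\E_x\bigl[\abs{\ip{c}{x}}\mathbbm 1_{D_a}\bigr].
\]
Among all sets of measure $p$, this quantity is minimized by the band $\{\abs{\ip{c}{x}}\le t\}$ of measure $p$. The main obstacle is the anti-concentration lower bound on $x_1=\ip{c}{x}$, i.e.\ an upper bound on its density. For $n\ge3$, the density of $x_1$ is $\frac{1}{\sqrt\pi}\frac{\Gamma(n/2)}{\Gamma((n-1)/2)}(1-s^2)^{(n-3)/2}$. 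Its maximum at $s=0$ equals $(n-1)z_n/2=K_n/2$, using $\Gamma(n/2)/\Gamma((n-1)/2)\cdot\Gamma((n+1)/2)^{-1}\Gamma(n/2)$-type identities, namely $\Gamma((n+1)/2)=\frac{n-1}{2}\Gamma((n-1)/2)$. So the density of $\abs{x_1}$ is at most $K_n$. For $n=1,2$ the density of $\abs{x_1}$ is a point mass or $\frac{2}{\pi\sqrt{1-s^2}}$. Bathtub-type reasoning with the bound $\Pr(\abs{x_1}\le t)\le K_n t$ (checked directly for $n=2$ near $0$, and trivial for $n=1$ since $\abs{x_1}=1$) gives
\[
\E[\abs{x_1}\mathbbm 1_D]\ge \int_0^{p/K_n}K_n s\,ds=\frac{p^2}{2K_n}.
\]
The case $n=2$ needs a small separate check, because its density of $\abs{x_1}$ starts at $2/\pi\le1$ but increases; the bathtub bound should instead use the distribution function $\Pr(\abs{x_1}\le t)=\frac{2}{\pi}\arcsin t\le t$ for the relevant range, and I would verify this. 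Hence $\delta_a\ge p_a^2/K_n$, and likewise $\delta_b\ge p_b^2/K_n$.

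\textbf{Step 3 (combine).} We have
\[
p_a+p_b\le\sqrt{2(p_a^2+p_b^2)}\le\sqrt{2K_n(\delta_a+\delta_b)}\le\sqrt{4K_n z_n\eps}=2\sqrt{z_nK_n}\sqrt\eps,
\]
which is the claimed bound. The remaining bound $K_n=(n-1)z_n\le\sqrt n$ follows from the standard Gautschi-type estimate $\Gamma(n/2)/\Gamma((n+1)/2)\le\sqrt{2/(n-1)}$ together with a routine check. The step most likely to need care is the exact density constant and the $n=2$ case in Step 2.
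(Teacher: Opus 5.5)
\textbf{Overall.} Your proof is essentially the paper's. It uses the same $c=(T_a+T_b)/\norm{T_a+T_b}_2$ and the same identity $\delta_a=2\,\E_x[\abs{\ip{c}{x}}\mathbbm{1}_{D_a}]$; the paper's \eqref{eq:classical-weighted-errors} is exactly $\tfrac12(\delta_a+\delta_b)$. It also uses the same small-ball estimate $\Pr(\abs{x_1}\le t)\le K_n t$, turned into $\E[\abs{x_1}\mathbbm{1}_D]\ge p^2/(2K_n)$, and the same final step $p_a+p_b\le\sqrt{2(p_a^2+p_b^2)}$.

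The differences are cosmetic:
\begin{itemize}
\item You use $\sqrt{1-\varepsilon}\ge 1-\varepsilon$ where the paper uses $1-\sqrt{1-\varepsilon}\le\varepsilon$.
\item You bound the density of $x_1$ pointwise instead of integrating it.
\item You get $K_n\le\sqrt n$ from Gautschi, whereas the paper uses $z_n\le(\E x_1^2)^{1/2}=1/\sqrt n$.
\item The $n=2$ check you defer is just convexity of $\arcsin$: $\tfrac2\pi\arcsin t\le t$ holds on all of $[0,1]$, not only a ``relevant range''.
\item You implicitly need $K_n\ge 1$ so that $p/K_n\le 1$. This follows from the small-ball bound at $t=1$.
\end{itemize}

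\textbf{The gap: the case $\varepsilon\ge1$.} You claim this case is ``trivial since the left side is at most $2$''. That is false. For $n\ge2$ one has $z_nK_n<1$; for example $z_3=\tfrac12$ and $K_3=1$, so at $\varepsilon=1$ the right-hand side is $\sqrt2<2$. The case is not vacuous either: any strategy with $T_a=0$, such as $a(x)=\sgn(x_1)\sgn(x_2)$, is $1$-optimal. In addition, your main chain passes through $2\sqrt{1-\varepsilon}\,z_n$, which needs $\varepsilon\le1$, and $c$ is undefined when $T_a+T_b=0$.

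There are two ways to close this.
\begin{itemize}
\item \emph{The paper's route.} For $c$ and $-c$, the two left-hand sides sum to exactly $2$, so one of them is at most $1$. Then $1\le 2\sqrt{z_nK_n\varepsilon}$ because $z_nK_n\ge\tfrac12$, which follows from $z_n=\int_0^1\Pr(\abs{x_1}>t)\,\mathrm{d}t\ge 1/(2K_n)$.
\item \emph{Your own argument, stated carefully.} Take $c$ along $T_a+T_b$, or any unit vector if $T_a+T_b=0$. In all cases $\delta_a+\delta_b=2z_n-\norm{T_a+T_b}_2\le 2z_n\le 2\varepsilon z_n$ for $\varepsilon\ge1$. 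Steps 2 and 3 then go through unchanged, so no case split is needed at all.
\end{itemize}
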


\begin{proof}
We first prove
\begin{equation}\label{eq:classical-small-ball}
    \Pr_{x \sim \sigma_n}(\abs{x_1}\le\delta)\le K_n\delta,
    \qquad 0\le\delta\le1.
\end{equation}
For \(n=1\), this follows from \(\abs{x_1}=1\). For \(n\ge2\), use the parametrization
\(\Phi(u,v)=(u,\sqrt{1-u^2}v)\), with \(u\in(-1,1)\) and \(v\in S^{n-2}\).
For an orthonormal basis \(v_1,\dots,v_{n-2}\) of \(T_vS^{n-2}\),
\begin{align*}
    \partial_u\Phi&=\paren*{1,-\frac{u}{\sqrt{1-u^2}}v},&
    D\Phi(0,v_i)&=\paren*{0,\sqrt{1-u^2}v_i}.
\end{align*}
These vectors are mutually orthogonal, so
\begin{align*}
    (D\Phi)^T(D\Phi)
    &=\mathrm{diag}\paren*{\frac1{1-u^2},1-u^2,\dots,1-u^2},\\
    \sqrt{\det\paren*{(D\Phi)^T(D\Phi)}}
    &=(1-u^2)^{(n-3)/2}.
\end{align*}
Normalizing and using \(\Gamma((n+1)/2)=\frac{n-1}{2}\Gamma((n-1)/2)\) gives the density of \(x_1\):
\[
    f_n(u)=\frac{(n-1)z_n}{2}(1-u^2)^{(n-3)/2}.
\]
Consequently, for \(0\le\delta\le1\),
\begin{align*}
    \Pr_{x \sim \sigma_n}(\abs{x_1}\le\delta)
    =(n-1)z_n\int_0^\delta(1-t^2)^{(n-3)/2}\,\mathrm{d}t\le\begin{cases}
        \delta,&n=2,\\
        (n-1)z_n\delta,&n\ge3.
    \end{cases}
\end{align*}
For \(n = 2\), the expression is \(2 \arcsin(\delta)/\pi\).
Since \(\arcsin\) is convex on \([0,1]\) and vanishes at zero, \(\arcsin(\delta) \leq \delta\arcsin(1) \leq \pi\delta/2\), so the expression is at most \(\delta\).
For \(n\ge3\), the integrand is at most \(1\). This proves \Cref{eq:classical-small-ball}.
Moreover, \(K_n\ge1\) by setting \(\delta=1\) in \Cref{eq:classical-small-ball}, consequently
\[
    z_n=\int_0^1\Pr_{x \sim \sigma_n}(\abs{x_1}>t)\,\mathrm{d}t
    \ge\int_0^{1/K_n}(1-K_nt)\,\mathrm{d}t
    =\frac1{2K_n}.
\]
For \(n\ge3\), Cauchy--Schwarz also gives
\[
    K_n=(n-1)z_n
    \le(n-1)\sqrt{\E_{x\sim\sigma_n}x_1^2}
    =\frac{n-1}{\sqrt n}\le\sqrt n.
\]
Next, we prove the rigidity statement. The case \(\varepsilon=0\) follows from the characterization of optimal strategies. 
For \(\varepsilon \geq 1\), consider an arbitrary unit vector \(c\).
Since
\[
    \sum_{\tilde c \in \{c, -c\}} \paren*{\Pr_{x\sim \sigma_n} (a(x) \neq \sgn \ip{\tilde c}{x}) + \Pr_{y \sim \sigma_n} (b(y) \neq \sgn \ip{\tilde c}{y})} = 2
\]
there is at least one \(\tilde c \in \{c, -c\}\) such that
\[
    \Pr_{x\sim \sigma_n} (a(x) \neq \sgn\ip{\tilde c}{x}) + \Pr_{y\sim \sigma_n} (b(y) \neq \sgn \ip{\tilde c}{y}) \leq 1 \leq 2\sqrt{z_n K_n}\sqrt{\varepsilon}
\]
since \(z_n K_n \geq 1/2\).
Assume henceforth that \(0<\varepsilon<1\).
By \Cref{eq:classical-bias} and \(\varepsilon\)-optimality it follows that
\[
    (1-\varepsilon)z_n^2 \leq \ip{T_a}{T_b}.
\]
Thus,
\[
    \norm{T_a + T_b}_2^2 = \norm{T_a - T_b}_2^2 + 4 \ip{T_a}{T_b} \geq 4\ip{T_a}{T_b} \geq 4(1-\varepsilon)z_n^2,
\]
and we choose \(c = (T_a + T_b)/\norm{T_a + T_b}_2\).
Then,
\[
    \ip{c}{T_a + T_b} = \norm{T_a + T_b}_2 \geq 2\sqrt{(1-\varepsilon)}z_n.
\]
Since \(a(x),b(y)\in\set{\pm1}\), rotational invariance gives
\begin{equation}\label{eq:classical-weighted-errors}
\begin{aligned}
    \E_{x\sim\sigma_n}\mathbbm{1}_{a(x)\ne\sgn\ip{c}{x}}\abs{\ip{c}{x}} + \E_{y \sim \sigma_n}\mathbbm{1}_{b(y)\ne\sgn\ip{c}{y}}\abs{\ip{c}{y}} &=
    \frac 12 \E_{t \sim \sigma_n} \paren*{2\abs{\ip{c}{t}} - a(t)\ip{c}{t} -b(t)\ip{c}{t}} \\ &= z_n - \frac 12 \ip{c}{T_a + T_b} \leq z_n(1 - \sqrt{1-\varepsilon})
\end{aligned}
\end{equation}
We can also find a lower-bound for the expectations as follows. 
Since \( 0 \leq \abs{\ip{c}{x}} \leq 1\) we can write
\begin{align*}
    \E_{x \sim \sigma_n}\mathbbm{1}_{a(x) \neq \sgn \ip{c}{x}}\abs{\ip{c}{x}} &= \int_0^1 \Pr_{x \sim \sigma_n}(a(x) \neq \sgn \ip{c}{x}, \abs{\ip{c}{x}} > t) \,\mathrm{d}t \\
    &\geq \int_0^{\Pr_{x \sim \sigma_n}(a(x) \neq \sgn \ip{c}{x})/K_n} \paren*{\Pr_{x \sim \sigma_n}(a(x) \neq \sgn\ip{c}{x}) - \Pr_{x \sim \sigma_n}(\abs{\ip{c}{x}} \leq t)}\,\mathrm{d}t \\
    &\geq \int_0^{\Pr_{x \sim \sigma_n}(a(x) \neq \sgn \ip{c}{x})/K_n} \paren*{\Pr_{x \sim \sigma_n}(a(x) \neq \sgn\ip{c}{x}) - K_nt} \,\mathrm{d}t \\
    &= \frac{\Pr(a(x) \neq \sgn\ip{c}{x})^2}{2K_n}
\end{align*}
where in the first inequality we used that \( \Pr_{x \sim \sigma_n}(a(x) \neq \sgn \ip{c}{x}) \leq 1\), \(K_n \geq 1\), and \( \Pr(E \cap \{\abs{\ip{c}{x}} > t\}) = \Pr(E) - \Pr(E \cap \{\abs{\ip{c}{x}}\leq t\})\geq \Pr(E) - \Pr(\{\abs{\ip{c}{x}} \leq t\})\) for any event \(E\) and in the last inequality we use \Cref{eq:classical-small-ball}.
Similarly, we get
\[
    \E_{y \sim \sigma_n} \mathbbm{1}_{b(y) \neq \sgn\ip{c}{y}} \abs{\ip{c}{y}}\geq \frac{\Pr(b(y) \neq \sgn \ip{c}{y})^2}{2K_n}
\]
Using \( (a+b) \leq \sqrt{2a^2 + 2b^2}\) and combining this inequality with \Cref{eq:classical-weighted-errors} gives us
\[
    \Pr_{x \sim \sigma_n}(a(x) \neq \sgn\ip{c}{x}) + \Pr_{y \sim \sigma_n}(b(y) \neq \sgn\ip{c}{y}) \leq 2\sqrt{z_nK_n(1-\sqrt{1-\varepsilon})} \leq 2\sqrt{z_nK_n\varepsilon}.
\]
\end{proof}

For classical strategies, the vector \(c \in S^{n-1}\) may depend on the shared randomness.
For each deterministic component \((a_w, b_w)\), define
\[
    \varepsilon_w \coloneq 1 - \frac{\beta(G_n,(a_w, b_w))}{z_n} \geq 0.
\]
Since the classical strategy is \(\varepsilon\)-optimal, linearity of the bias gives \(\E_{w \sim \nu} \varepsilon_w \leq \varepsilon\).
Applying the theorem to each component and using Jensen's inequality therefore gives
\[
    \E_{w \sim \nu} \paren*{ \Pr_{x \sim \sigma_n} (a_w(x) \neq \sgn \ip{c_w}{x}) + \Pr_{y \sim \sigma_n} (b_w(y) \neq \sgn \ip{c_w}{y})} \leq 2 \sqrt{z_n K_n}\E_{w \sim \nu}\sqrt{\varepsilon_w} \leq 2 \sqrt{z_nK_n}\sqrt{\varepsilon}.
\]

\section{Vector Rigidity}
The vector strategy \(a(x) = x\), \(b(y) = y\) attains bias \(1/(nz_n)\) in the Grothendieck game \(G_n\).
We prove that this is the optimal vector bias and that every optimal vector strategy has the form \(a(x) = Vx\), \(b(y) = Vy\) almost everywhere, for a common isometry \(V\).

We then prove that this characterization is robust:
every \(\varepsilon\)-optimal vector strategy is close  to one of these strategies in the sense of \cref{thm:vector-rigidity}.
We use the polar decomposition to construct a common isometry \(V\) for Alice and Bob.
A quantitative concavity estimate then bounds the error in approximating \(a(x)\) and \(b(y)\) by \(Vx\) and \(Vy\).

\subsection{Tsirelson Bound}

The optimal vector bias of the Grothendieck game $G_n$ is given by
\begin{align*}
    \beta_\mathrm{vec}(G_n) = \frac 1{z_n} \sup_N \sup_{\substack{a,b \colon S^{n-1} \to S^{N-1} \\ a,b \text{ measurable}}} \E_{x,y\sim \sigma_n} \ip{x}{y} \ip{a(x)}{b(y)} = \frac 1{z_n} \sup_N \sup_{\substack{a,b \colon S^{n-1} \to S^{N-1} \\ a,b \text{ measurable}}} \ip{T_a}{T_b}_\mathrm{HS}
\end{align*}
where \(T_a =\E_{x\sim\sigma_n} \ket{a(x)}\bra{x}, T_b = \E_{y \sim \sigma_n} \ket{b(y)}\bra{y}\).
We give a direct proof of the following value, which also follows from \cite[Lemma~4]{briet2011generalized}.

\begin{theorem}\label{thm:vectorupperbound}
For the vector bias $\beta_\mathrm{vec}(G_n)$ it holds that
\begin{align*}
    \beta_\mathrm{vec}(G_n) = \frac 1{nz_n},
\end{align*}
where $z_n$ is given in \Cref{def:zn}.
\end{theorem}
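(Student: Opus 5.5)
We prove matching lower and upper bounds, working with the operators \(T_a=\E_{x\sim\sigma_n}\ket{a(x)}\bra{x}\) and \(T_b\), in parallel with the classical computation in \Cref{eq:classical-bias,eq:classical-moment}.

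\emph{Lower bound.} Take \(N=n\), \(a(x)=x\), \(b(y)=y\). Then \(T_a=T_b=\E_{x\sim\sigma_n}\ket{x}\bra{x}=\frac1n I_n\) by rotational invariance, since the matrix commutes with all rotations and has trace \(1\). Hence
\[
\ip{T_a}{T_b}_{\mathrm{HS}}=\tfrac{1}{n^2}\operatorname{tr} I_n=\tfrac1n ,
\]
and this strategy has bias \(\frac{1}{nz_n}\).

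\emph{Upper bound.} By Cauchy--Schwarz for the Hilbert--Schmidt inner product, \(\ip{T_a}{T_b}_{\mathrm{HS}}\le\norm{T_a}_{\mathrm{HS}}\norm{T_b}_{\mathrm{HS}}\). It therefore suffices to show \(\norm{T_a}_{\mathrm{HS}}^2\le\frac1n\) for every measurable \(a\colon S^{n-1}\to S^{N-1}\). Expanding the square,
\[
\norm{T_a}_{\mathrm{HS}}^2=\E_{x,y\sim\sigma_n}\ip{a(x)}{a(y)}\ip{x}{y}.
\]
We bound this by a variational argument. For any \(M\in\R^{N\times n}\) with \(\norm{M}_{\mathrm{HS}}\le1\),
\[
\ip{M}{T_a}_{\mathrm{HS}}=\E_{x}\ip{a(x)}{Mx}\le\E_x\norm{Mx}_2\le\sqrt{\E_x\norm{Mx}_2^2}=\sqrt{\tfrac1n\operatorname{tr}(M^TM)}\le\tfrac1{\sqrt n}.
\]
The first inequality is Cauchy--Schwarz pointwise, using that \(a(x)\) is a unit vector. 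The second is Jensen. The equality uses \(\E_x\ket x\bra x=\frac1n I\) again. Taking the supremum over such \(M\) gives \(\norm{T_a}_{\mathrm{HS}}\le 1/\sqrt n\). The same bound holds for \(T_b\), so \(\beta_{\mathrm{vec}}(G_n)\le\frac1{z_n}\cdot\frac1n\).

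\emph{Main obstacle.} The only real step is the bound \(\norm{T_a}_{\mathrm{HS}}\le 1/\sqrt n\). It plays the role that the bound \(\norm{T_a}_2\le z_n\) played in \Cref{eq:classical-moment}. Here, however, the maximizing direction is a matrix, and \(\E\abs{\ip{c}{x}}\) is replaced by \(\E\norm{Mx}_2\) together with Jensen. The remaining details are routine. Measurability makes \(T_a\) a well-defined Bochner integral. The supremum over \(N\) causes no issue, because the bound is independent of \(N\).

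Tracking the equality cases in this argument will be needed later for rigidity. Equality requires \(a(x)\) parallel to \(Mx\) and \(\norm{Mx}_2\) constant, which forces \(M\) proportional to an isometry \(V\) and hence \(a(x)=Vx\). Equality in the Hilbert--Schmidt Cauchy--Schwarz step forces \(T_a=T_b\). These conditions are exactly those announced before the theorem.
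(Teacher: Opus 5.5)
Your proof is correct and follows the paper's argument essentially step for step. The lower bound uses \(a(x)=b(x)=x\) with \(T_a=T_b=I/n\). The upper bound combines Hilbert--Schmidt Cauchy--Schwarz, the variational formula \(\norm{T_a}_{\mathrm{HS}}=\max_{\norm{M}_{\mathrm{HS}}=1}\ip{M}{T_a}_{\mathrm{HS}}\), pointwise Cauchy--Schwarz, Jensen, and \(\E_x\ket{x}\bra{x}=I/n\).

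Your closing remark on the exact equality case is also sound: constant \(\norm{Mx}_2\) forces \(M^TM\propto I\). For that case it is slightly more direct than the paper's appeal to the strong-concavity lemma.
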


\begin{proof}
Using Cauchy-Schwarz we get \allowdisplaybreaks
\begin{align*}
    \beta_\mathrm{vec}(G_n) \leq \frac 1{z_n} \sup_N \paren*{ \sup_{\substack{a \colon S^{n-1} \to S^{N-1} \\ a \text{ measurable}}} \norm{T_a}_\mathrm{HS}}^2
    &= \frac 1{z_n} \sup_N \paren*{ \max_{\norm{C}_\mathrm{HS} = 1} \sup_{\substack{a \colon S^{n-1} \to S^{N-1} \\ a \text{ measurable}}} \ip{C}{T_a}_\mathrm{HS}}^2 \\
    &= \frac 1{z_n}\sup_N \paren*{ \max_{\norm{C}_\mathrm{HS} = 1} \sup_{\substack{a \colon S^{n-1} \to S^{N-1} \\ a \text{ measurable}}}\E_{x\sim\sigma_n}  \bra{a(x)}C\ket{x}}^2 \\
    &= \frac 1{z_n}\sup_N \paren*{ \max_{\norm{C}_\mathrm{HS}=1}  \E_{x\sim\sigma_n} \norm{C\ket{x}}}^2 \\
     &\leq \frac 1{z_n}\sup_N \paren*{ \max_{\norm{C}_\mathrm{HS}=1}  (\E_{x\sim\sigma_n} \norm{C\ket{x}}^2)^{\frac{1}{2}}}^2\\
     &= \frac 1{z_n}\sup_N \paren*{ \max_{\norm{C}_\mathrm{HS}=1} \frac{1}{\sqrt{n}}\norm{C}_\mathrm{HS}}^2\\
     &=\frac 1{nz_n},
    \end{align*}
where we used \(\E_{x\sim\sigma_n}\norm{Cx}\le(\E_{x\sim\sigma_n}\norm{Cx}^2)^{1/2}=\frac{1}{\sqrt n}\norm{C}_{\mathrm{HS}}\).

To see that this upper bound is attained, one uses the strategy $a,b \colon S^{n-1} \to S^{n-1}$ with \(a(x)=b(x)=x\), which gives \(T_a=T_b=I/n\) and therefore has vector bias \(\frac{1}{nz_n}\).
\end{proof}

\subsection{Quantitative Concavity}
To obtain a quantitative rigidity statement for vector strategies that are close to achieving the optimal value, we will rewrite the upper bound in the proof of \Cref{thm:vectorupperbound}. We first need some definitions.

\begin{definition}[(Relative) Interior]
    Let \(C \subseteq \R^n\) be a set.
    An element \(x \in C\) is called an \emph{interior point} of \(C\) if there exists an \(\varepsilon > 0\) such that \(B_\varepsilon(x) \subseteq C\) where \(B_\varepsilon(x) = \set{y \in \R^n \colon \norm{y-x}_2 < \varepsilon}\)
    denotes the open ball of radius \(\varepsilon\) around \(x\).
    The \emph{relative interior} of \(C\) is the set
    \[
        \operatorname{relint}(C) \coloneq \set{x \in C \colon \text{there exists } \varepsilon >0 \text{ such that }B_\varepsilon(x) \cap \operatorname{aff}(C) \subseteq C},
    \]
    where \( \operatorname{aff}(C)\) denotes the affine hull of \(C\), i.e., the smallest affine set that contains \(C\).
\end{definition}

\begin{definition}[\(\mu\)-strongly concave]
    Let \(C \subseteq \R^n\) be a convex set and let \(\mu > 0\).
    The function \(f \colon C \to \R\) is called \emph{\(\mu\)-strongly concave} if for all \(x,y \in C\) and \(t \in [0,1]\)
    \[
        f((1-t)x + ty) \geq (1-t)f(x)+tf(y) + \frac{\mu}{2}t(1-t)\norm{x-y}_2^2.
    \]
\end{definition}

Using the standard Hessian criterion for strong concavity on \( \operatorname{relint}(C)\) extended to \(C\) by continuity we get the following.

\begin{lemma}\label{lem:hessian-criterion}
    Let \(C \subseteq \R^n\) be a nonempty convex set and let \(\mu > 0\). Let \(f \colon C \to \R\) be continuous on \(C\) and twice-differentiable on \(\operatorname{relint}(C)\) with derivatives taken relative to the affine space \(\operatorname{aff}(C)\).
    If
    \[
        \mathrm{d}^2 f(x)[h,h] \leq -\mu \norm{h}_2^2
    \]
    for all \(x \in \operatorname{relint}(C)\) and \(h \in \operatorname{span}(C-C)\) where \(C - C =\{a-b \mid a,b \in C\}\), then \(f\) is \(\mu\)-strongly concave on \(C\).
\end{lemma}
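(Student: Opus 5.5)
The plan is to reduce to the one-variable characterization of strong concavity along segments. First I would rewrite strong concavity as concavity of an auxiliary function. Set
\[
g(x) = f(x) + \frac{\mu}{2}\norm{x}_2^2 .
\]
A direct expansion of \(\norm{(1-t)x+ty}_2^2\) gives
\[
(1-t)\norm{x}_2^2+t\norm{y}_2^2-\norm{(1-t)x+ty}_2^2 = t(1-t)\norm{x-y}_2^2 .
\]
Hence \(f\) is \(\mu\)-strongly concave on \(C\) if and only if \(g\) is concave on \(C\). The hypothesis translates to \(\mathrm{d}^2 g(x)[h,h] \le 0\) for all \(x \in \operatorname{relint}(C)\) and all directions \(h \in \operatorname{span}(C-C)\). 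Note that \(\operatorname{span}(C-C)\) is exactly the linear space parallel to \(\operatorname{aff}(C)\), so these are precisely the admissible directions for the relative derivatives.

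Second, I would prove concavity of \(g\) on \(\operatorname{relint}(C)\). Fix \(x,y \in \operatorname{relint}(C)\). The segment between them lies in \(\operatorname{relint}(C)\), since the relative interior of a convex set is convex. Consider
\[
\varphi(t)=g((1-t)x+ty), \qquad t\in[0,1].
\]
This function is twice differentiable with \(\varphi''(t)=\mathrm{d}^2 g(\cdot)[y-x,y-x]\le 0\), because \(y-x\in\operatorname{span}(C-C)\). A one-variable \(C^2\)-type function with nonpositive second derivative is concave; this follows from the mean value theorem, since \(\varphi'\) is nonincreasing. Therefore \(\varphi(t)\ge(1-t)\varphi(0)+t\varphi(1)\), which is concavity of \(g\) on \(\operatorname{relint}(C)\).

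Third, I would extend to all of \(C\) by continuity, and I expect this to be the only real obstacle. Since \(C\) is nonempty and convex, \(\operatorname{relint}(C)\) is nonempty. Moreover, for any \(p\in\operatorname{relint}(C)\) and \(x\in C\), the points \(x_s=(1-s)x+sp\) lie in \(\operatorname{relint}(C)\) for \(s\in(0,1]\); this is the standard line-segment principle. Given \(x,y\in C\), I would approximate them by \(x_s,y_s\) and note that
\[
(1-t)x_s+ty_s = \bigl((1-t)x+ty\bigr)_s .
\]
Applying the inequality for \(g\) at \(x_s,y_s\) and letting \(s\to 0\) then gives the inequality at \(x,y\). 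This step uses continuity of \(f\), and hence of \(g\), on \(C\) along these segments.

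Finally, translating concavity of \(g\) back through the identity from the first step yields \(\mu\)-strong concavity of \(f\) on \(C\). The degenerate case where \(C\) is a single point is trivial, because then \(\norm{x-y}_2=0\).
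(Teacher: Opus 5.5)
Your proof is correct and takes essentially the same approach as the paper. The paper gives no detailed proof and only invokes the standard Hessian criterion on \(\operatorname{relint}(C)\) extended to \(C\) by continuity; your argument (concavity of \(f+\frac{\mu}{2}\norm{\cdot}_2^2\) along segments in the relative interior, then passing to the limit via the line-segment principle) is exactly that route written out in full.
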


Consider the simplex \(\Delta_{n-1}=\{p\in\R^n:p_i\ge0,\ \sum_i p_i=1\}\). To obtain a quantitative rigidity statement for vector strategies that are close to achieving the optimal value, we will first rewrite the upper bound in the proof of \Cref{thm:vectorupperbound} as follows  
    \begin{align*}
  \beta_{\mathrm{vec}}(G_n)\leq \frac 1{z_n}\sup_N \paren*{ \max_{\norm{C}_\mathrm{HS}=1}  \E_{x\sim\sigma_n} \norm{C\ket{x}}}^2  = \frac 1{z_n} \paren*{ \max_{p\in \Delta_{n-1}} \E_{x\sim\sigma_n} \sqrt{\sum_{i = 1}^{n}p_i x_i^2}}^2.
\end{align*}
The values \(p_i\) can be interpreted as squares of singular values of the matrix $C$. As $C$ is an $N \times n$ matrix, there are $\mathrm{min}(N,n)$ many singular values and since $\norm{C}_\mathrm{HS}=1$ we have $p\in \Delta_{n-1}$ (for \(N<n\), pad the squared singular values with zeros). 

Consider \(F_n(p) \coloneq \E_{x\sim\sigma_n} \sqrt{\sum_{i = 1}^{n} p_i x_i^2}\).
We will show that $F_n(p)$ has the unique maximizer \(p_\star=(1/n, \dots, 1/n)\), with value \(1/\sqrt n\), and that any $p\in \Delta_{n-1}$ for which $F_n$ is close to this maximum must be close to \(p_\star\).

We first need a routine lemma, which is a fact about the moments of the unit sphere.

\begin{lemma}\label{lem:sphere-moments}
    It holds that
    \begin{align*}
          \E_{x \sim \sigma_n} [x_i^4] &= \frac{3}{n(n+2)}, \quad \E_{x \sim \sigma_n}[x_i^2 x_j^2] = \frac{1}{n(n+2)} &&\text{ for \(i \neq j\)}\\
        \E_{x \sim \sigma_n} x_i^6 &= \frac{15}{n(n+2)(n+4)}, \quad \E_{x \sim \sigma_n} x_i^4 x_j^2 = \frac{3}{n(n+2)(n+4)} &&\text{ for }i\neq j,\\
        \E_{x \sim \sigma_n} x_i^2x_j^2x_k^2 &= \frac{1}{n(n+2)(n+4)} &&\text{ for } i \neq j \neq k \neq i.
    \end{align*}    
\end{lemma}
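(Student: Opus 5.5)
The plan is to reduce everything to Gaussian moments. Let $g \sim \mathcal N(0, I_n)$ be a standard Gaussian vector in $\R^n$. By rotational invariance of the Gaussian measure we can write $g = r x$. Here $x = g/\norm{g}_2 \sim \sigma_n$ and $r = \norm{g}_2$, and $r$ and $x$ are independent. Hence for any monomial of total degree $2k$ we have
\[
\E[g_1^{a_1}\cdots g_n^{a_n}] = \E[r^{2k}]\,\E_{x\sim\sigma_n}[x_1^{a_1}\cdots x_n^{a_n}].
\]
The left-hand side factorizes over coordinates, since the $g_i$ are independent $\mathcal N(0,1)$ variables with $\E g_i^2 = 1$, $\E g_i^4 = 3$ and $\E g_i^6 = 15$. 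The radial factor is a chi-square moment: $r^2 \sim \chi^2_n$, so $\E[r^4] = n(n+2)$ and $\E[r^6] = n(n+2)(n+4)$.

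The steps are then as follows.

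\begin{enumerate}
\item Justify the independence of $r$ and $x$, together with the uniformity of $x$. This follows because the Gaussian density depends only on $\norm{g}_2$, so in polar coordinates it factorizes.

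\item Compute $\E[r^{2k}]$ for $k = 2, 3$. One way is to expand $\E\bigl(\sum_i g_i^2\bigr)^k$ directly. For example,
\[
\E\Bigl(\sum_i g_i^2\Bigr)^2 = 3n + n(n-1) = n(n+2).
\]
Alternatively, use the $\chi^2$ moment formula $\E[r^{2k}] = 2^k\,\Gamma(n/2+k)/\Gamma(n/2)$.

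\item Divide the Gaussian monomial moments by the radial moments:
\begin{itemize}
\item for degree four, $3/(n(n+2))$ and $1/(n(n+2))$;
\item for degree six, $15$, $3$ and $1$ divided by $n(n+2)(n+4)$.
\end{itemize}
This gives exactly the stated values, for distinct indices wherever the statement requires it.
\end{enumerate}

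There is no real obstacle here. The only point needing care is the independence of $r$ and $x$, which is standard. As an independent check, the values can be verified by a consistency relation: summing $\E[x_i^4]$ and $\E[x_i^2 x_j^2]$ over all index pairs must give $\E\norm{x}_2^4 = 1$, and indeed
\[
n\cdot \frac{3}{n(n+2)} + n(n-1)\cdot\frac{1}{n(n+2)} = 1.
\]
The analogous check works for the sixth-order moments.
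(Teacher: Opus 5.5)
Your proof is correct and complete. The polar decomposition $g = \lVert g\rVert_2 \cdot (g/\lVert g\rVert_2)$ of a standard Gaussian gives independent radial and angular parts with uniform angular part. Combined with the factorized Gaussian moments $1, 3, 15$ and the chi-square moments $n(n+2)$ and $n(n+2)(n+4)$, it yields exactly the five stated values.

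The paper gives no proof; it calls the lemma a routine fact about moments of the sphere, so there is nothing to compare against. An alternative using the paper's own tools runs as follows:
\begin{itemize}
\item Compute the pure moments $\mathbb{E}\,x_1^4$ and $\mathbb{E}\,x_1^6$ as Beta integrals against the marginal density $f_n(u) \propto (1-u^2)^{(n-3)/2}$ derived in the proof of the classical rigidity theorem.
\item Recover the mixed moments from $\lVert x\rVert_2^2 = 1$, via $\mathbb{E}\,x_i^6 + (n-1)\,\mathbb{E}\,x_i^4x_j^2 = \mathbb{E}\,x_i^4$ and $2\,\mathbb{E}\,x_i^4x_j^2 + (n-2)\,\mathbb{E}\,x_i^2x_j^2x_k^2 = \mathbb{E}\,x_i^2x_j^2$.
\end{itemize}
Your Gaussian route treats pure and mixed moments in a single step and needs no Beta integrals. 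It also covers $n=1$ automatically.
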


We next establish the quantitative concavity estimate needed for robust vector rigidity.

\begin{lemma}\label{lem:F_n-upper-bound-new}
    The function \( F_n \colon \Delta_{n-1} \to \R, F_n(p) = \E_{x\sim\sigma_n} \paren*{ \sum_{i = 1}^n p_i x_i^2}^{1/2}\) has a unique maximizer \(p_\star = (1/n, \dots, 1/n)\).
    Moreover, for all \(p \in \Delta_{n-1}\) it holds that
    \[
        F_n(p) \leq \frac 1{\sqrt n} - \frac{(n+4)^{3/2}}{4 \cdot 5^{3/2}n(n+2)}\norm{p - p_\star}_2^2
    \]
    where equality holds for \(p = p_\star\), that is, \(F_n(p_\star) = \frac{1}{\sqrt n}\).
\end{lemma}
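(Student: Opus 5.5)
The plan is to prove that $F_n$ is strongly concave on $\Delta_{n-1}$ using \Cref{lem:hessian-criterion}, and then turn strong concavity at a maximizer into the quadratic bound. The target modulus is
\[
\mu=\frac{(n+4)^{3/2}}{2\cdot 5^{3/2}n(n+2)} .
\]

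\emph{Step 1: value and maximality of $p_\star$.} By Jensen (concavity of the square root),
\[
F_n(p)\le\Big(\E_{x\sim\sigma_n}\sum_i p_ix_i^2\Big)^{1/2}=\frac{1}{\sqrt n},
\]
since $\E x_i^2=1/n$ and $\sum_i p_i=1$. At $p_\star$ we have $\sum_i x_i^2/n=1/n$ deterministically, so $F_n(p_\star)=1/\sqrt n$. Hence $p_\star$ is a maximizer.

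\emph{Step 2: from strong concavity to the bound.} Suppose $F_n$ is $\mu$-strongly concave. Apply the definition with the points $p_\star$ and $p$ and parameter $t$. Bound the left-hand side by $F_n(p_\star)$ (maximality), subtract $(1-t)F_n(p_\star)$, divide by $t$, and let $t\to0$. This gives
\[
F_n(p)\le F_n(p_\star)-\tfrac{\mu}{2}\norm{p-p_\star}_2^2 .
\]
With the $\mu$ above, this is exactly the claimed inequality, and uniqueness of the maximizer follows immediately. Continuity of $F_n$ on $\Delta_{n-1}$ follows from dominated convergence, since the integrand is bounded by $1$.

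\emph{Step 3: Hessian bound (main step).} On $\operatorname{relint}(\Delta_{n-1})$ set $s=\sum_ip_ix_i^2$ and, for $h$ with $\sum_ih_i=0$, set $Y=\sum_ih_ix_i^2$. Here $s>0$ almost surely. Differentiating under the expectation (justified locally uniformly in $p$ by $s\ge\min_ip_i$) gives
\[
-\mathrm d^2F_n(p)[h,h]=\tfrac14\,\E\big[s^{-3/2}Y^2\big].
\]
To bound this from below by polynomial moments I use H\"older with exponents $5/2$ and $5/3$:
\[
\E[Y^2]=\E\big[(s^{-3/2}Y^2)^{2/5}(sY^2)^{3/5}\big]\le \big(\E[s^{-3/2}Y^2]\big)^{2/5}\big(\E[sY^2]\big)^{3/5}.
\]
Rearranging,
\[
\E[s^{-3/2}Y^2]\ge \frac{(\E Y^2)^{5/2}}{(\E sY^2)^{3/2}} .
\]

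\emph{Step 4: computing the moments.} Both quantities are polynomial moments, so \Cref{lem:sphere-moments} applies.

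For the numerator, using $\sum_i h_i=0$ to kill the off-diagonal sum,
\[
\E Y^2=\frac{3\sum_ih_i^2+\sum_{i\ne j}h_ih_j}{n(n+2)}=\frac{2\norm h_2^2}{n(n+2)} .
\]

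For the denominator, expand
\[
\E[sY^2]=\sum_{k,i,j}p_kh_ih_j\,\E[x_k^2x_i^2x_j^2].
\]
Split the terms according to coincidences among $i,j,k$ and repeatedly use $\sum_ih_i=0$, $\sum_kp_k=1$ and $p_k\ge0$. The expected result is a bound of the form
\[
\E[sY^2]\le \frac{c\norm h_2^2}{n(n+2)(n+4)} .
\]
The grouping is: the terms with $i=j$ carry weights $3$ or $15$ versus $1$ or $3$, and the terms with $i\ne j$ can be rewritten using $\sum_{j\ne i}h_j=-h_i$.

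Plugging in yields
\[
-\mathrm d^2F_n[h,h]\ge \frac{2^{5/2}(n+4)^{3/2}}{4\,c^{3/2}\,n(n+2)}\norm h_2^2 .
\]
The target constant corresponds to $c^{3/2}=2^{5/2}\cdot 5^{3/2}/2$, that is, roughly $c\le 2\cdot 5^{1/3}\cdot\ldots$, which I would pin down in this computation. The main obstacle is this sixth-moment bookkeeping: getting a clean constant uniform in $p\in\Delta_{n-1}$. If the crude bound gives a slightly worse $c$, I would first try bounding $\max_kp_k$-dependent terms via $\sum_kp_k=1$ and Cauchy--Schwarz, before adjusting the H\"older exponents.

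Finally, \Cref{lem:hessian-criterion} with $\operatorname{span}(\Delta_{n-1}-\Delta_{n-1})=\{h:\sum_ih_i=0\}$ gives $\mu$-strong concavity, and Step 2 concludes.
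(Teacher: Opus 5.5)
Your route is essentially the paper's. The paper also proves $\mu$-strong concavity via \Cref{lem:hessian-criterion}. Its lower bound on $-\mathrm d^2F_n$, Jensen for $t\mapsto t^{-3/2}$ under the tilted measure $\tau_h$ with density proportional to $Y^2$, is the same inequality as your H\"older step $\E[s^{-3/2}Y^2]\ge(\E Y^2)^{5/2}/(\E[sY^2])^{3/2}$. The only other difference is the final step. The paper applies Jensen over coordinate permutations to the concave function $F_n+\frac{\mu}{2}\norm{\cdot}_2^2$, using $\norm{p}_2^2-\norm{p_\star}_2^2=\norm{p-p_\star}_2^2$ on the simplex. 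You instead show maximality of $p_\star$ by Jensen on the square root and then use the first-order consequence of strong concavity at a maximizer. Both are valid.

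However, you left undone exactly the step that fixes the constant. Your aside ``$c\le 2\cdot5^{1/3}\cdots$'' is also a miscalculation: your own relation $c^{3/2}=2^{5/2}5^{3/2}/2=(2\cdot5)^{3/2}$ means you need precisely $c=10$. The bookkeeping gives this without any refinement. With $D=n(n+2)(n+4)$ and $k$ fixed, use $\sum_{i\ne k}h_i=-h_k$ and $\sum_{i\ne j,\ i,j\ne k}h_ih_j=2h_k^2-\norm{h}_2^2$ (ordered pairs). This gives
\[
\sum_{i,j}h_ih_j\,\E[x_k^2x_i^2x_j^2]=\frac{15h_k^2-6h_k^2+3\bigl(\norm{h}_2^2-h_k^2\bigr)+\bigl(2h_k^2-\norm{h}_2^2\bigr)}{D}=\frac{2\norm{h}_2^2+8h_k^2}{D}.
\]
Hence $\E[sY^2]=\bigl(2\norm{h}_2^2+8\sum_kp_kh_k^2\bigr)/D\le 10\norm{h}_2^2/D$, since $0\le p_k\le 1$. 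This yields exactly $\mu=(n+4)^{3/2}/(2\cdot5^{3/2}n(n+2))$. No $\max_k p_k$ refinement or change of H\"older exponents is needed, and the proof is complete once this computation is written out.
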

\begin{proof}

    First, we show that \(F_n\) is \(\mu_n\)-strongly concave for \(\mu_n = (n+4)^{3/2}/(2\cdot5^{3/2}n(n+2))\).
    Note that \(\Delta_{n-1}\) is a convex set, \(F_n\) is continuous on $\Delta_{n-1}$ and twice-differentiable on \(\operatorname{relint}(\Delta_{n-1})\) where
    \[
        \operatorname{relint}(\Delta_{n-1}) = \set*{p \in \Delta_{n-1} \colon p_i > 0 \text{ for all } 1\leq i \leq n}.
    \]
    Recall that, for a set \(C \subseteq \R^n\), its difference set is \( C - C \coloneq \set{x-y \colon x,y \in C}\).
    For the simplex, we have
    \[
        \operatorname{span}(\Delta_{n-1}-\Delta_{n-1}) = \set*{h \in \R^n \colon \sum_{i = 1}^n h_i = 0}.
    \]
    Let \(p \in \operatorname{relint}(\Delta_{n-1})\) and \(h \in \operatorname{span}(\Delta_{n-1} - \Delta_{n-1})\).
    For \(h \neq 0\), let \(\tau_h\) be the probability measure on \(S^{n-1}\) with density \(\frac{\paren*{\sum_i h_i x_i^2}^2}{\E_{x \sim \sigma_n} \paren*{\sum_i h_i x_i^2}^2}\) with respect to \(\sigma_n\).
    Write
    \begin{align}
        \mathrm{d}^2 F_n(p)[h,h] = -\frac 14 \E_{x \sim \sigma_n} \bkt*{\frac{\paren*{\sum_i h_i x_i^2}^2}{\paren*{\sum_i p_i x_i^2}^{3/2}}} = -\frac 14 \E_{x \sim \sigma_n} \paren*{\sum_i h_i x_i^2}^2 \E_{x \sim \tau_h} \paren*{ \sum_i p_i x_i^2}^{-3/2} \nonumber\\
        \leq -\frac 14 \E_{x \sim \sigma_n} \paren*{\sum_i h_i x_i^2}^2 \paren*{ \E_{x \sim \tau_h} \sum_i p_i x_i^2}^{-3/2}.\label{eq:deriv}
    \end{align}
    Now, we want to bound
    \[
        \E_{x \sim \tau_h} \sum_i p_i x_i^2 = \frac{\E_{x \sim \sigma_n} (\sum_i h_i x_i^2)^2 \sum_i p_ix_i^2}{\E_{x \sim \sigma_n} (\sum_i h_i x_i^2)^2}
    \]
    using \Cref{lem:sphere-moments}.
    By \cref{lem:sphere-moments} we get
    \begin{align*}
        \E_{x \sim \sigma_n} \paren*{\sum_{i=1}^n h_ix_i^2}^2 &= \sum_{i=1}^n h_i^2 \E_{x \sim \sigma_n}x_i^4 + 2 \sum_{1 \leq i <j \leq n} h_i h_j \E_{x \sim \sigma_n} x_i^2 x_j^2 \nonumber\\
        &= \frac{3}{n(n+2)} \norm{h}_2^2 + \frac 2{n(n+2)} \sum_{1 \leq i < j \leq n} h_i h_j.
    \end{align*}
    Because \( \sum_i h_i =0\) we have
    \[
        0 = \paren*{ \sum_{i=1}^n h_i}^2 = \sum_{i = 1}^n h_i^2 + 2 \sum_{1 \leq i < j \leq n}h_i h_j
    \]
    and hence 
    \[
        \sum_{1 \leq i < j \leq n}h_ih_j = - \frac 12 \sum_{i = 1}^n h_i^2 = -\frac 12 \norm{h}_2^2.
    \]
    Furthermore, we have
    \begin{align*}
        \E_{x \sim \sigma_n} \paren*{\sum_i h_i x_i^2}^2 \sum_k p_k x_k^2 &= \sum_k p_k \E_{x \sim \sigma_n} \paren*{ h_k^2 x_k^6 + 2h_k \sum_{i \neq k} h_i x_i^2x_k^4 + \sum_{i \neq k} h_i^2 x_i^4 x_k^2 + 2 \sum_{\substack{i < j \\ i,j \neq k}} h_ih_j x_i^2x_j^2x_k^2} \\
        &= \frac{1}{n(n+2)(n+4)}\sum_k p_k\paren*{15h_k^2 + 6h_k\sum_{i \neq k}h_i + 3 \sum_{i \neq k} h_i^2 + 2 \sum_{\substack{i < j \\ i,j \neq k}} h_i h_j} \\
        &= \frac{2 \norm{h}_2^2 + 8 \sum_k p_k h_k^2}{n(n+2)(n+4)} \\
        &\leq \frac{10}{n(n+2)(n+4)}\norm{h}_2^2
    \end{align*}
    where in the second equality we used \Cref{lem:sphere-moments}, in the third we used \(\sum_{i \neq k} h_i = -h_k, \sum_{i \neq k}h_i^2 = \norm{h}_2^2 - h_k^2, 2\sum_{\substack{i < j \\ i,j \neq k}}h_i h_j = 2h_k^2 - \norm{h}_2^2\) and in the last inequality we used that \(\sum_k p_k h_k^2 \leq \norm{h}_2^2\).
    Thus for \(h \neq 0\), we have
    \[
        \E_{x \sim \tau_h} \sum_i p_ix_i^2 \leq \frac{5}{n+4}
    \]
    and inserting this in \Cref{eq:deriv}, we get
    \[
    \mathrm{d}^2F_n(p)[h,h] \leq -\frac 14 \frac{2}{n(n+2)}\norm{h}_2^2 \paren*{\frac{5}{n+4}}^{-3/2} \leq -\frac{(n+4)^{3/2}}{2 \cdot 5^{3/2}n(n+2)} \norm{h}_2^2
    \]
    for all \(p \in \operatorname{relint}(\Delta_{n-1})\) and \(h \in \operatorname{span}(\Delta_{n-1}-\Delta_{n-1})\).
    Thus, by \cref{lem:hessian-criterion}, \(F_n\) is \(\mu_n\)-strongly concave on \(\Delta_{n-1}\) with \(\mu_n = (n+4)^{3/2}/(2\cdot 5^{3/2}n(n+2))\).
    The function \(F_n\) is invariant under permutations of its coordinates.
Applying Jensen's inequality to the concave function \(G(p)=F_n(p)+(\mu_n/2)\norm{p}_2^2\), we obtain
\[
G(p_{\star})=G\left(\frac{1}{n!}\sum_{\pi}\pi p\right)\geq \frac{1}{n!}\sum_{\pi}G(\pi p)=G(p).
\]
From the definition of $G$, we deduce
\[
F_n(p)\le F_n(p_\star)-(\mu_n/2)(\norm{p}^2-\norm{p_\star}_2^2)=F_n(p_\star)-(\mu_n/2)\norm{p-p_\star}_2^2.
\]
Since \(\mu_n > 0\), this also shows that \(p_\star\) is the unique maximizer.
\end{proof}

Note that equality in \Cref{thm:vectorupperbound} holds if \( \ip{T_a}{T_b}_\mathrm{HS} = \norm{T_a}_\mathrm{HS} \norm{T_b}_\mathrm{HS} = \frac 1n\), that is \(T_a = T_b\) and \(\norm{T_a}_{\mathrm{HS}}=1/\sqrt n\).
Further, it holds that
\[
    \frac{1}{\sqrt n} = \norm{T_a}_\mathrm{HS} = \ip{S_a}{T_a}_\mathrm{HS} = \E_{x \sim \sigma_n} \ip{a(x)}{S_ax} \leq \E_{x \sim \sigma_n} \norm{S_ax} \leq \frac 1{\sqrt n},
\]
where \(S_a=T_a/\norm{T_a}_{\mathrm{HS}}\). In particular, it follows that \( \ip{a(x)}{S_ax} = \norm{S_ax}\) almost everywhere.
Furthermore, if equality in the second inequality holds it follows by \cref{lem:F_n-upper-bound-new} that the squared singular values of \(S_a\) are all \(1/n\).
Thus, \(S_a^\dagger S_a = I/n\).
Consequently, there exists an isometry \( V \colon \R^n \to \R^N\) such that \(S_a = V/\sqrt n\).
Together, this shows that \(a(x) = Vx\) almost everywhere and analogously that \(b(y) = Wy\) almost everywhere for some isometry \(W\).
Furthermore, \(T_a = T_b\) implies that \(V = W\). In the next subsection, we will prove a robust version of the statement we just discussed. 

\subsection{Robust Vector Rigidity}\label{sec:robust-vector-rigidity}
In this subsection we want to show that vector strategies with vector bias close to the optimal bias \( \frac{1}{nz_n}\) are close to the optimal vector strategy $a,b \colon S^{n-1} \to S^{n-1}$ with \(a(x)=b(x)=x\) (up to isometry). 
More precisely, for every near-optimal vector strategy \( (a,b), a,b \colon S^{n-1} \to S^{N-1},\) (without loss of generality \(N \geq n\)) there exists a common isometry \( V \colon \R^n \to \R^N\) such that
\[
    \E_{x\sim\sigma_n} \norm*{a(x) - Vx}^2 + \E_{y\sim\sigma_n} \norm*{b(y) - Vy}^2
\]
is small.
In particular, similarly to classical rigidity where we showed that for an \(\varepsilon\)-optimal deterministic strategy \((a,b)\) with \(0 < \varepsilon < 1\) there exists a unit vector \(c = (T_a + T_b)/\norm{T_a + T_b}_2\) such that \(a(x)\) and \(b(x)\) are close to \(\sgn{\ip{c}{x}}\), we will show that we can extract a good common isometry \(V\) from \((T_a + T_b)/\norm{T_a + T_b}_\mathrm{HS}\) for vector strategies \((a,b)\).

First, we will show that for every measurable function \(v \colon S^{n-1} \to \R^N, \norm{v(x)} \leq 1 \text{ for all } x\) and corresponding normalized operator \(S_v \coloneq T_v/\norm{T_v}_\mathrm{HS}\), $T_v \neq 0$, a scaled version of the isometry from the polar decomposition of \(S_v\) is close to \(S_v\) itself.
In a second step, we will apply this to the measurable function \( (a(x) + b(x))/2\) and combine with near-optimality of the vector strategy \((a,b)\) to obtain robust vector rigidity.

\begin{lemma}\label{lem:different-isometries}
   Let \(N\ge n\) and let \(v \colon S^{n-1} \to \R^{N}\) be measurable with \(\norm{v(x)}\leq 1\) for all \(x \in S^{n-1}\).
   Define \(T_v = \E_{x \sim \sigma_n} \ket{v(x)}\bra{x}, S_v = T_v / \norm{T_v}_\mathrm{HS}\) and assume \(\norm{T_v}_\mathrm{HS} > 0\).
    Then, there exists an isometry \(V \colon \R^n \to \R^N\) such that \( S_v = V\abs{S_v}\) and
    \[
        \norm*{S_v - \frac{1}{\sqrt n}V}_\mathrm{HS}^2
        \leq \frac{4 \cdot 5^{3/2}n^{3/2}(n+2)}{(n+4)^{3/2}}(1 - \sqrt n \norm{T_v}_\mathrm{HS}).
    \]
\end{lemma}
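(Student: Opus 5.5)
Let \(S_v = V\abs{S_v}\) be a polar decomposition with \(V\colon \R^n\to\R^N\) an isometry. This exists because \(N \ge n\): if \(S_v\) is rank-deficient, we extend the partial isometry on the kernel of \(\abs{S_v}\) to a full isometry. Write the singular values of \(S_v\) as \(s_1,\dots,s_n\ge 0\) and set \(p_i = s_i^2\). Since \(\norm{S_v}_\mathrm{HS}=1\), we have \(p\in\Delta_{n-1}\). We then compute directly
\[
\norm*{S_v - \tfrac{1}{\sqrt n}V}_\mathrm{HS}^2 = \norm*{\abs{S_v} - \tfrac{1}{\sqrt n}I}_\mathrm{HS}^2 = \sum_i \paren*{\sqrt{p_i} - \tfrac{1}{\sqrt n}}^2,
\]
using \(V^\dagger V = I\). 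We compare this with \(\norm{p-p_\star}_2^2 = \sum_i (\sqrt{p_i}-1/\sqrt n)^2(\sqrt{p_i}+1/\sqrt n)^2 \ge \frac1n\sum_i(\sqrt{p_i}-1/\sqrt n)^2\). This gives \(\norm{S_v - V/\sqrt n}_\mathrm{HS}^2 \le n\norm{p-p_\star}_2^2\).

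Next we bound \(\norm{T_v}_\mathrm{HS}\) from above by \(F_n(p)\). Since \(\norm{v(x)}\le 1\),
\[
\norm{T_v}_\mathrm{HS} = \ip{S_v}{T_v}_\mathrm{HS} = \E_{x\sim\sigma_n}\ip{v(x)}{S_v x} \le \E_{x\sim\sigma_n}\norm{S_v x}.
\]
Writing \(S_v^\dagger S_v = U\,\mathrm{diag}(p)\,U^T\) with \(U\) orthogonal, rotational invariance of \(\sigma_n\) gives \(\E_x\norm{S_vx} = \E_x\sqrt{\sum_i p_i x_i^2} = F_n(p)\). Then \Cref{lem:F_n-upper-bound-new} yields
\[
\norm{T_v}_\mathrm{HS} \le \frac{1}{\sqrt n} - \frac{(n+4)^{3/2}}{4\cdot5^{3/2}n(n+2)}\norm{p-p_\star}_2^2.
\]
Rearranging gives \(\norm{p-p_\star}_2^2 \le \frac{4\cdot 5^{3/2}n(n+2)}{(n+4)^{3/2}}\paren*{\frac1{\sqrt n} - \norm{T_v}_\mathrm{HS}}\).

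Finally we combine the two bounds:
\[
\norm{S_v - V/\sqrt n}_\mathrm{HS}^2 \le n\cdot\frac{4\cdot 5^{3/2}n(n+2)}{(n+4)^{3/2}}\cdot\frac{1}{\sqrt n}\paren*{1-\sqrt n\norm{T_v}_\mathrm{HS}} = \frac{4\cdot5^{3/2}n^{3/2}(n+2)}{(n+4)^{3/2}}\paren*{1-\sqrt n\norm{T_v}_\mathrm{HS}},
\]
which is exactly the claimed bound.

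The only point needing care is the elementary comparison between \(\sum(\sqrt{p_i}-1/\sqrt n)^2\) and \(\norm{p-p_\star}_2^2\). It relies on \(\sqrt{p_i}+1/\sqrt n\ge 1/\sqrt n\), and this is what produces the extra factor \(n\) that matches the constant in the statement. The other technical point is choosing \(V\) as a genuine isometry in the rank-deficient case; there the identity \(S_v = V\abs{S_v}\) still holds because \(V\) is only altered on \(\ker\abs{S_v}\).
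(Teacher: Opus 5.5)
Your proposal is correct and follows the paper's proof essentially step for step: the same bound $\norm{T_v}_\mathrm{HS} \le F_n(p_v)$ combined with \cref{lem:F_n-upper-bound-new}, the same polar-decomposition isometry extended using $N \ge n$, and the same comparison $\sum_i(\sqrt{p_i}-1/\sqrt n)^2 \le n\norm{p-p_\star}_2^2$. You also spell out the justification $(\sqrt{p_i}+1/\sqrt n)^2 \ge 1/n$ for that last inequality, which the paper leaves implicit.
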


\begin{proof}
    Since \(\norm{S_v}_\mathrm{HS} = 1\) we get
    \[
        \norm{T_v}_\mathrm{HS} = \ip{T_v}{S_v}_\mathrm{HS} = \E_{x \sim \sigma_n} \ip{v(x)}{S_vx} \leq \E_{x \sim \sigma_n} \norm{S_v x} \leq \frac{1}{\sqrt n}
    \]
    where we used \(\E_{x \sim \sigma_n} \norm{S_vx} \leq (\E_{x \sim \sigma_n} \norm{S_vx}^2)^{1/2} = \frac{1}{\sqrt n} \norm{S_v}_\mathrm{HS}\).

    Let \(p_v \in \Delta_{n-1}\) denote the squared singular values of \(S_v\).
    By \cref{lem:F_n-upper-bound-new} it follows that
    \[
    \norm{T_v}_\mathrm{HS} \leq \E_{x \sim \sigma_n} \norm{S_v x} = F_n(p_v) \leq \frac 1{\sqrt n} - \frac{(n+4)^{3/2}}{4 \cdot 5^{3/2}n(n+2)}\norm{p_v - p_\star}_2^2
    \]
    where \(p_\star = (1/n, \dots, 1/n)\).
    Hence, it follows that
    \[
        \norm{p_v - p_\star}_2^2 \leq \frac{4 \cdot 5^{3/2} \sqrt n(n+2)}{(n+4)^{3/2}}(1 - \sqrt n \norm{T_v}_\mathrm{HS}).
    \]

    Let \(S_v = U \abs{S_v}\) be the polar decomposition of \(S_v\). 
    Since \(N \geq n\), the restriction of the partial isometry \(U\) to \((\ker S_v)^\perp\) can be extended to an isometry \(V \colon \R^n \to \R^N\) that still satisfies \( S_v = V \abs{S_v}\).
    It follows that
    \begin{align*}
        \norm*{S_v - \frac{1}{\sqrt n}V}_\mathrm{HS}^2
        &= \norm*{ \abs{S_v} - \frac{1}{\sqrt n} I}_\mathrm{HS}^2\\
        &= \sum_i \paren*{ \sqrt{(p_v)_i} - \frac{1}{\sqrt n}}^2 \\
        &\leq n \sum_i \abs*{(p_v)_i - \frac 1n}^2 \\
        &= n \norm{p_v - p_\star}_2^2\\ 
        &\leq \frac{4 \cdot 5^{3/2}n^{3/2}(n+2)}{(n+4)^{3/2}}(1 - \sqrt n \norm{T_v}_\mathrm{HS}).      
    \end{align*}
\end{proof}

We are now ready to state our vector rigidity theorem.

\begin{theorem}\label{thm:vector-rigidity}
    Let \(N \geq n\) and \(\varepsilon \geq 0\).
    Consider an \(\varepsilon\)-optimal vector strategy \(S = (a,b)\) for the Grothendieck game~\(G_n\).
    Then there exists an isometry \(V \colon \R^n \to \R^N\) such that 
    \begin{align*}
        \E_{x\sim \sigma_n} \norm*{a(x) - V x}^2 + \E_{y\sim\sigma_n} \norm*{b(y) - V y}^2
        \leq (20\sqrt5n +2)\varepsilon
        = \bigO(n\varepsilon).
    \end{align*}
\end{theorem}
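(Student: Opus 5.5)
The plan is to mirror the classical rigidity proof. We apply \cref{lem:different-isometries} to the averaged map $v(x) = (a(x)+b(x))/2$, which is measurable with $\norm{v(x)}\le 1$, and take $V$ to be the isometry from the polar decomposition of $S_v$. First we dispose of trivial ranges of $\varepsilon$. If $\varepsilon \geq 1/2$, say, then the left-hand side is at most $8 \le (20\sqrt5 n+2)\varepsilon$ for any isometry $V$, since every vector involved has norm at most $1$, so we may assume $\varepsilon$ is small. By $\varepsilon$-optimality we have $\ip{T_a}{T_b}_\mathrm{HS} \geq (1-\varepsilon)/n$. Since $T_v = (T_a+T_b)/2$,
\[
\norm{T_v}_\mathrm{HS}^2 = \tfrac14\paren*{\norm{T_a-T_b}_\mathrm{HS}^2 + 4\ip{T_a}{T_b}_\mathrm{HS}} \ge \frac{1-\varepsilon}{n}.
\]
Hence $1-\sqrt n\norm{T_v}_\mathrm{HS} \le 1-\sqrt{1-\varepsilon}\le\varepsilon$, and in particular $T_v\neq 0$.

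Next we expand the target quantity. Using $\norm{a(x)}=\norm{b(y)}=\norm{Vx}=1$ and $\E_x \ip{a(x)}{Vx} = \ip{T_a}{V}_\mathrm{HS}$, we get
\[
\E_x\norm{a(x)-Vx}^2+\E_y\norm{b(y)-Vy}^2 = 4 - 2\ip{T_a+T_b}{V}_\mathrm{HS} = 4 - 4\ip{T_v}{V}_\mathrm{HS}.
\]
It therefore suffices to lower bound $\ip{T_v}{V}_\mathrm{HS} = \norm{T_v}_\mathrm{HS}\ip{S_v}{V}_\mathrm{HS}$. Since $S_v = V\abs{S_v}$, we have $\ip{S_v}{V}_\mathrm{HS} = \operatorname{tr}\abs{S_v} = \sum_i \sqrt{(p_v)_i}$. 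From $\norm{V}_\mathrm{HS}^2 = n$ and $\norm{S_v}_\mathrm{HS}=1$ we obtain
\[
\norm*{S_v - V/\sqrt n}_\mathrm{HS}^2 = 2 - \tfrac{2}{\sqrt n}\ip{S_v}{V}_\mathrm{HS},
\]
which gives
\[
\ip{S_v}{V}_\mathrm{HS} = \sqrt n\paren*{1 - \tfrac12\norm{S_v - V/\sqrt n}_\mathrm{HS}^2}.
\]
Plugging in \cref{lem:different-isometries} together with $\sqrt n\norm{T_v}_\mathrm{HS}\ge\sqrt{1-\varepsilon}$ yields
\[
\ip{T_v}{V}_\mathrm{HS}\ge \sqrt{1-\varepsilon}\,\paren*{1 - \tfrac12 C_n\varepsilon},
\qquad C_n = \frac{4\cdot 5^{3/2}n^{3/2}(n+2)}{(n+4)^{3/2}}.
\]
Consequently
\[
4-4\ip{T_v}{V}_\mathrm{HS} \le 4(1-\sqrt{1-\varepsilon}) + 2C_n\varepsilon.
\]

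The final step is bookkeeping of constants, and this is the step I expect to be delicate. We need $C_n \le 10\sqrt5\, n$, i.e.\ $n^{1/2}(n+2)\le \tfrac{1}{2}\cdot\tfrac{10\sqrt5}{4\cdot 5^{3/2}}\cdot 2\,(n+4)^{3/2}$. Simplifying, this reduces to $\sqrt n (n+2) \le (n+4)^{3/2}$, which holds. We also need $4(1-\sqrt{1-\varepsilon})\le 2\varepsilon$ (up to the claimed constant $2$). This is true only for small $\varepsilon$, since $1-\sqrt{1-\varepsilon}\le \varepsilon/2 + O(\varepsilon^2)$; so I would instead use $1-\sqrt{1-\varepsilon}\le \varepsilon$ and absorb the slack differently, for instance by bounding the product more carefully as $1 - \sqrt{1-\varepsilon}(1-\tfrac12 C_n\varepsilon) \le (1-\sqrt{1-\varepsilon}) + \tfrac12C_n\varepsilon$, together with the cruder case split at $\varepsilon \ge 1/2$ to keep the constant $2$. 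If the constant $2$ does not come out exactly, the $\bigO(n\varepsilon)$ conclusion is unaffected. Note that the assumption $N\ge n$ is needed only to invoke \cref{lem:different-isometries}.
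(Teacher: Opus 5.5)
Your setup matches the paper's proof up to the expression for the left-hand side. Write $s=\sqrt{n}\,\|T_v\|_{\mathrm{HS}}\in[0,1]$ and $D=\|S_v-V/\sqrt{n}\|_{\mathrm{HS}}^2\le C_n(1-s)$. Then the left-hand side equals $4(1-s)+2sD\le 4(1-s)+2C_n s(1-s)$.

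\textbf{The gap.} The final step does not give the stated constant. Bounding $s\le 1$ and $1-s\le 1-\sqrt{1-\varepsilon}\le\varepsilon$ separately yields only $(2C_n+4)\varepsilon$, roughly $(40\sqrt5\,n+4)\varepsilon$. That is a factor $2$ too weak in the leading term.
\begin{itemize}
\item Your repair $C_n\le 10\sqrt5\,n$ is false. Since $10\sqrt5=2\cdot 5^{3/2}$, it is equivalent to $2\sqrt{n}(n+2)\le (n+4)^{3/2}$, i.e.\ $4n(n+2)^2\le (n+4)^3$. This fails for every $n\ge 4$; at $n=4$ it reads $576>512$. You dropped a factor $\tfrac12$ when simplifying. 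The inequality you actually verified, $\sqrt{n}(n+2)\le(n+4)^{3/2}$, is the statement $C_n\le 20\sqrt5\,n$.
\item The inequality $4(1-\sqrt{1-\varepsilon})\le 2\varepsilon$ fails for every $\varepsilon\in(0,1)$, not only large $\varepsilon$. Indeed $1-\sqrt{1-\varepsilon}=\varepsilon/(1+\sqrt{1-\varepsilon})>\varepsilon/2$. So no case split on $\varepsilon$ brings that term down to $2\varepsilon$.
\end{itemize}
As written, you establish the $O(n\varepsilon)$ form but not the bound $(20\sqrt5\,n+2)\varepsilon$.

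\textbf{The fix.} Convert the deficit $1-s$ into $1-s^2=1-n\|T_v\|_{\mathrm{HS}}^2$, which $\varepsilon$-optimality controls directly via $1-s^2\le\varepsilon$. Using $4(1-s)=4(1-s^2)-4s(1-s)$, rewrite
\[
4(1-s)+2C_n s(1-s)=4(1-s^2)+(2C_n-4)\,s(1-s).
\]
Two facts bound the second term:
\begin{itemize}
\item $C_n\ge 2$; in fact $C_n\ge 12$ for all $n\ge1$.
\item $s(1-s)\le \tfrac{1+s}{2}(1-s)=\tfrac12(1-s^2)$ for $0\le s\le 1$.
\end{itemize}
Hence the expression is at most $(C_n+2)(1-s^2)\le (C_n+2)\varepsilon\le(20\sqrt5\,n+2)\varepsilon$.

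Your factor $2$ is lost by replacing $s$ with $1$ and $1-s$ with $\varepsilon$ separately, instead of using $s(1-s)\le\tfrac12(1-s^2)$. Recombining with the $-4s(1-s)$ term is what makes the additive constant exactly $2$.
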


\begin{proof}
For \(\varepsilon \geq 1\), any isometry gives
\[
    \E_{x\sim \sigma_n} \norm*{a(x) - V x}^2 + \E_{y\sim\sigma_n} \norm*{b(y) - V y}^2 \leq 8 \leq (20\sqrt 5 n+2)\varepsilon
\]
Assume henceforth that \( 0 \leq \varepsilon < 1\).
Consider the vector strategy \( (a,b)\) with bias
\[
    \beta(G_n,(a,b)) = \frac{1}{z_n} \E_{x,y \sim \sigma_n} \ip{x}{y} \ip{a(x)}{b(y)} = \frac{1}{z_n} \ip{T_a}{T_b}_\mathrm{HS}
\]
and consider the measurable map \(v \colon S^{n-1} \to \R^N, v(x) = \frac{a(x) + b(x)}2, \norm{v(x)}\leq 1\) with corresponding operator \(T_v = \frac{T_a + T_b}2\).
By \(\varepsilon\)-optimality it follows that
\begin{equation}
    \norm*{T_v}_\mathrm{HS}^2 = \norm*{\frac{T_a + T_b}2}_\mathrm{HS}^2 = \ip{T_a}{T_b}_\mathrm{HS} + \frac 14 \norm{T_a - T_b}_\mathrm{HS}^2 \geq \ip{T_a}{T_b}_\mathrm{HS} \geq (1-\varepsilon)\frac 1n.\label{eq:T-norm-squared}
\end{equation}
Choose the isometry \(V\) from \cref{lem:different-isometries} and set \(K_n=\frac{4\cdot5^{3/2}n^{3/2}(n+2)}{(n+4)^{3/2}} \leq 20\sqrt5 n\).
The following calculation completes the proof
\begin{align*}
    \E_{x\sim \sigma_n} \norm{a(x) - Vx}^2 + \E_{x \sim \sigma_n} \norm{b(x) - Vx}^2 &= 4 - 2 \ip{T_a + T_b}{V}_\mathrm{HS} \\
     &=4 - 4\norm{T_v}_\mathrm{HS}\ip{S_v}{V}_\mathrm{HS} \\
&=4(1-\sqrt{n}\norm{T_v}_\mathrm{HS})+2\sqrt{n}\norm{T_v}_\mathrm{HS} \norm*{S_v - \frac{1}{\sqrt n} V}_\mathrm{HS}^2\\
&\leq 4(1-\sqrt{n}\norm{T_v}_\mathrm{HS})+2K_n\sqrt{n}\norm{T_v}_\mathrm{HS} (1-\sqrt{n}\norm{T_v}_\mathrm{HS})\\
&=4(1-n\norm{T_v}_\mathrm{HS}^2)+(2K_n-4)\sqrt{n}\norm{T_v}_\mathrm{HS} (1-\sqrt{n}\norm{T_v}_\mathrm{HS})\\
&\leq (K_n+2)(1-n\norm{T_v}_\mathrm{HS}^2)\\
     &\leq (20\sqrt 5 n+2)\varepsilon
\end{align*}
where in the first inequality we used \cref{lem:different-isometries}, in the second inequality we used
\[
    \sqrt n \norm{T_v}_\mathrm{HS}(1-\sqrt{n} \norm{T_v}_\mathrm{HS}) \leq \frac{1 + \sqrt{n} \norm{T_v}_\mathrm{HS}}2(1-\sqrt n \norm{T_v}_\mathrm{HS}) = \frac{1 - n\norm{T_v}_\mathrm{HS}^2}2
\]
since \(0 \leq \sqrt n\norm{T_v}_\mathrm{HS} \leq 1\) and in the last inequality we used \Cref{eq:T-norm-squared}.
\end{proof}

\begin{remark}
Among bounds that hold uniformly over all vector strategies and are linear in \(\varepsilon\), the linear dependence on \(n\) in \cref{thm:vector-rigidity} is optimal.
\end{remark}
\begin{proof}
Let \(n \geq 3\), and let \(Q \colon \R^n \to \R^n\) be the orthogonal projection onto the first \(n-2\) coordinates.
We use the normalized projection strategy from \cite[Lemma~4]{briet2011generalized}:
\begin{equation}\label{eq:projection-strategy}
    a(x) = b(x) = \frac{Qx}{\norm{Qx}}, \quad Qx \neq 0
\end{equation}
and set \(a(x) = b(x) = e_1\) on the null set where \(Qx = 0\).

By \cite[Equation~(27)]{briet2011generalized}, with \(k = n-2\),
\[
    \E_{x \sim \sigma_n} \norm{Qx} = \frac{n-2}{n-1}.
\]
Coordinate reflections and symmetry among the first \(n-2\) coordinates therefore give
\begin{equation}\label{eq:optimality-self-testing}
    \E_{x \sim \sigma_n} \frac{x_j x_k}{\norm{Qx}} = \frac{\delta_{j,k}}{n-2}\E_{x \sim \sigma_n} \norm{Qx} = 
    \frac{\delta_{j,k}}{n-1} ,\quad 1 \leq j \leq n-2, 1 \leq k \leq n
\end{equation}
where the quotient is defined to be zero when \(Qx = 0\).
It follows that \(T_a = T_b = Q/(n-1)\), and hence
\begin{equation}\label{eq:projection-bias}
    \beta(G_n, (a,b)) = \frac{n-2}{z_n(n-1)^2} = \frac{1}{nz_n} \left( 1 - \frac{1}{(n-1)^2}\right).
\end{equation}
Thus, this strategy is \((n-1)^{-2}\)-optimal.

    For every isometry \(V\colon\R^n\to\R^n\), we have
    \begin{align*}
    \E_{x\sim\sigma_n}\norm{a(x)-Vx}^2
    +\E_{y\sim\sigma_n}\norm{b(y)-Vy}^2
    &=4-4\ip{T_a}{V}_{\mathrm{HS}}
    =4-\frac{4}{n-1}
      \sum_{j=1}^{n-2}\ip{e_j}{Ve_j}\\
    &\geq\frac{4}{n-1},
\end{align*}
where equality holds for \(V=I\).
Hence, any uniform upper bound \(f(n)\varepsilon\)
must satisfy \(\frac{4}{n-1}\leq\frac{f(n)}{(n-1)^2}\) proving \(f(n)=\Omega(n)\).
\end{proof}

\section{Self-Testing}\label{sec:self-test}

For even \(n = 2m\), the Majorana observables \(\gamma(x)\) and \(\overline{\gamma(y)}\), measured on the maximally entangled state \(\ket{\psi_{2^m}^+}\), realize the optimal vector strategy \(a(x) = x\), \(b(y) = y\).
By \cref{thm:tsirelson,thm:vectorupperbound}, this strategy attains the optimal quantum bias \(1/(nz_n)\).
We show that every near-optimal quantum strategy is close to this ideal strategy up to local isometries, as stated in \cref{thm:self-testing}. We then extend this result to the certify all operators in the spin representation of the pin group, see \Cref{thm:spinselftest}.

We write \(\E_{s\in S}\) for the uniform average over a nonempty finite set \(S\), and \(\E_{g\in G}\) for expectation with respect to normalized Haar measure on a compact group \(G\).

\subsection{State-Dependent Seminorms}\label{subsec:state-dependent-seminorms}

State-dependent seminorms measure operator errors on the shared state. 
We introduce the notation and record the identities used below.

For a density operator \(\rho\) on a finite-dimensional Hilbert space \(\mathcal H\) and a linear operator \(T \colon \mathcal H \to \mathcal K\), define
\begin{equation}\label{eq:state-dependent-seminorm}
    \norm{T}_\rho^2 \coloneq \tr[T^\dagger T \rho] = \norm{T \rho^{1/2}}_\mathrm{HS}^2.
\end{equation}
For a quantum strategy \(S = (A,B,\ket\psi)\), define the reduced states by
\[
    \rho_A = \tr_B \ket\psi \bra\psi, \quad \rho_B = \tr_A \ket\psi\bra\psi.
\]
Consider linear operators \(X\) and \(Y\) with domains \(\mathcal H_A\) and \(\mathcal H_B\), respectively. Then the seminorms satisfy
\begin{equation}\label{eq:seminorm-purification}
    \norm{X}_{\rho_A} = \norm{(X \otimes I)\ket\psi}, \quad \norm{Y}_{\rho_B} = \norm{(I \otimes Y)\ket\psi}.
\end{equation}
For linear operators \(T \colon \mathcal H \to \mathcal K\) and \(W \colon \mathcal K \to \mathcal L\)
    \[
        \norm{WT}_\rho \leq \norm{W}\norm{T}_\rho,
    \]
where \(\norm{W}\) is the operator norm. If \(W\) is an isometry, then \(\norm{WT}_\rho = \norm{T}_\rho\).

\begin{lemma}\label{lem:state-dependent-seminorm}
    Let \(\rho\) be a density operator on \(\mathcal H\) and consider linear operators \(T_1, \dots, T_n \colon \mathcal H \to \mathcal K\). It holds
    \begin{equation}\label{eq:seminorm-sphere}
        \E_{x \sim \sigma_n} \norm*{ \sum_{i = 1}^n x_i T_i}_\rho^2 = \frac 1n \sum_{i= 1}^n \norm{T_i}_\rho^2
    \end{equation}
\end{lemma}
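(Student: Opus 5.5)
The plan is to expand the seminorm as a quadratic form in $x$ and then average using the second moments of the uniform measure on the sphere. By the definition in \Cref{eq:state-dependent-seminorm}, for fixed $x$ we have
\[
    \norm*{\sum_i x_i T_i}_\rho^2 = \tr\Bigl[\Bigl(\sum_i x_i T_i\Bigr)^\dagger \Bigl(\sum_j x_j T_j\Bigr)\rho\Bigr] = \sum_{i,j} x_i x_j \tr[T_i^\dagger T_j \rho].
\]
The coordinates $x_i$ are real, so no complex conjugates appear, and the sum is finite. Hence taking the expectation commutes with the sum, and everything reduces to computing $\E_{x\sim\sigma_n} x_i x_j$.

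For these moments, reflection invariance of $\sigma_n$ gives $\E x_i x_j = 0$ for $i\neq j$: the map $x_i\mapsto -x_i$ preserves $\sigma_n$ and flips the sign of $x_ix_j$. Permutation invariance together with $\sum_i x_i^2 = 1$ gives $\E x_i^2 = 1/n$. Substituting these, we obtain
\[
    \E_{x\sim\sigma_n}\norm*{\sum_i x_i T_i}_\rho^2 = \frac1n \sum_i \tr[T_i^\dagger T_i\rho] = \frac1n\sum_i \norm{T_i}_\rho^2,
\]
which is \Cref{eq:seminorm-sphere}.

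There is no real obstacle here. The only point to note is that the cross terms $\tr[T_i^\dagger T_j\rho]$ may be complex, but they are multiplied by moments that vanish, so this causes no difficulty.
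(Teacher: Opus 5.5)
Your proposal is correct and follows essentially the same route as the paper: expand $\norm{\sum_i x_i T_i}_\rho^2$ as $\sum_{i,j} x_i x_j \tr[T_i^\dagger T_j\rho]$ and use $\E_{x\sim\sigma_n} x_i x_j = \delta_{i,j}/n$. The only difference is that you also justify these second moments by reflection and permutation invariance, which the paper takes as known.
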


\begin{proof}
   From \(\E_{x \sim \sigma_n} x_i x_j = \delta_{i,j}/n\), we obtain
    \begin{align*}
        \E_{x \sim \sigma_n} \norm*{ \sum_{i = 1}^n x_i T_i}_\rho^2 &= \sum_{i,j = 1}^n\E_{x \sim \sigma_n} x_ix_j \tr[T_i^\dagger T_j\rho] = \frac 1n \sum_{i = 1}^n \norm{T_i}_\rho^2.
    \end{align*}
\end{proof}

\subsection{Robust Majorana Self-Test}\label{sec:robust-majorana-self-test}

We first show that given a near-optimal quantum strategy for the Grothendieck game~\(G_n\) we can find a linear map~\(L\) such that, on average over \(x\), both \((A(x) \otimes I)\ket\psi\) and \((I \otimes B(x))\ket\psi\) are close to \(Lx\).
The resulting error bound is independent of \(n\).

\begin{lemma}\label{lem:common-linear-map}
    Let \(0 \leq \varepsilon < 1\).
    Consider an \(\varepsilon\)-optimal quantum strategy \( S = (A,B,\ket\psi) \) for the Grothendieck game \(G_n\).
    Then there exists a real-linear map \( L \colon \R^n \to \mathcal{H}_A \otimes \mathcal{H}_B\) such that
    \[
        \E_{x \sim \sigma_n}\paren*{ \norm{(A(x) \otimes I)\ket{\psi} - Lx}^2 + \norm{(I \otimes B(x))\ket{\psi} - Lx}^2} \leq 2\varepsilon.
    \]
    Here, we regard \(\mathcal{H}_A \otimes \mathcal{H}_B\) as a real Hilbert space with inner product \( (u,v)\coloneq \operatorname{Re}\ip{u}{v}\).
\end{lemma}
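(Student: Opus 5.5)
The plan is to pass to the vector picture and mimic the polar-decomposition argument of \cref{thm:vector-rigidity}, with the target taken to be the linear map $L$ itself rather than an isometry; this avoids the $n$-dependent concavity loss. Regard $\mathcal H_A\otimes\mathcal H_B$ as a real Hilbert space with inner product $(u,v)=\operatorname{Re}\ip{u}{v}$. Set $a(x)=(A(x)\otimes I)\ket\psi$ and $b(y)=(I\otimes B(y))\ket\psi$. These are measurable unit vectors, and since $A(x),B(y)$ are self-adjoint, $\bra\psi A(x)\otimes B(y)\ket\psi=(a(x),b(y))$ is real. The bias is therefore
\[
\beta(G_n,S)=\frac1{z_n}\E_{x,y\sim\sigma_n}\ip{x}{y}\,(a(x),b(y))=\frac1{z_n}\ip{T_a}{T_b}_{\mathrm{HS}},
\]
where $T_a=\E_x \ket{a(x)}\bra{x}$ and $T_b=\E_y\ket{b(y)}\bra{y}$ are real-linear maps $\R^n\to\mathcal H_A\otimes\mathcal H_B$. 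By \cref{thm:vectorupperbound} and $\varepsilon$-optimality, $\ip{T_a}{T_b}_{\mathrm{HS}}\ge(1-\varepsilon)/n$.

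Take $L=\frac n2(T_a+T_b)$, so $L=nT_v$ with $v=(a+b)/2$. Expanding the squares, using $\norm{a(x)}=\norm{b(x)}=1$ and the identities $\E_x (a(x),Lx)=\ip{T_a}{L}_{\mathrm{HS}}$ and $\E_x\norm{Lx}^2=\frac1n\norm{L}_{\mathrm{HS}}^2$ (from $\E_x xx^T=I/n$), the left-hand side of the claim equals
\[
2-2\ip{T_a+T_b}{L}_{\mathrm{HS}}+\frac2n\norm{L}_{\mathrm{HS}}^2 .
\]
With this choice of $L$ it becomes $2-n\norm{T_a+T_b}_{\mathrm{HS}}^2/2=2-2n\norm{T_v}_{\mathrm{HS}}^2$. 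The choice is just the minimizer of this quadratic in $L$.

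It remains to bound $n\norm{T_v}_{\mathrm{HS}}^2$ from below. As in \Cref{eq:T-norm-squared},
\[
\norm{T_v}_{\mathrm{HS}}^2=\ip{T_a}{T_b}_{\mathrm{HS}}+\tfrac14\norm{T_a-T_b}_{\mathrm{HS}}^2\ge\frac{1-\varepsilon}{n},
\]
so the left-hand side is at most $2-2(1-\varepsilon)=2\varepsilon$.

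The only delicate points are bookkeeping. First, the real-inner-product structure is needed so that the Hilbert--Schmidt pairing of real-linear maps matches $\operatorname{Re}$ of the overlaps. Second, the upper bound on the bias must hold for these real vectors in a possibly infinite real dimension; this follows by restricting to the finite real span, i.e.\ by the proof of \cref{thm:vectorupperbound}. No step should be a genuine obstacle.
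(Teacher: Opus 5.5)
Your proposal is correct and is essentially the paper's own proof: the same choice $L=\frac n2(T_a+T_b)$, the same expansion via $\E_{x\sim\sigma_n}\ket{x}\bra{x}=I/n$ to $2-\frac n2\norm{T_a+T_b}_{\mathrm{HS}}^2$, and the same bound $\norm{T_a+T_b}_{\mathrm{HS}}^2\ge 4\operatorname{Re}\ip{T_a}{T_b}_{\mathrm{HS}}\ge 4(1-\varepsilon)/n$; your real Hilbert--Schmidt pairing is exactly the paper's $\operatorname{Re}\ip{T_a}{T_b}_{\mathrm{HS}}$. One bookkeeping correction to your closing remark: the step needs only the lower bound $\beta_{\mathrm q}(G_n)\ge 1/(nz_n)$, which comes from \cref{thm:tsirelson} together with \cref{thm:vectorupperbound}, not the upper bound, and since $\mathcal H_A\otimes\mathcal H_B$ is finite-dimensional there is no infinite-dimensional issue to handle.
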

\begin{proof}
Define \(a(x) = (A(x) \otimes I)\ket{\psi}, b(y) = (I \otimes B(y)) \ket{\psi}\) and \(L = \frac n2 (T_a+ T_b)\) where
\[
    T_a = \E_{x \sim \sigma_n} \ket{a(x)}\bra{x}, \quad T_b = \E_{y \sim \sigma_n} \ket{b(y)}\bra{y}.
\]
Note that by expanding the bias \(\beta(G_n, S) = \operatorname{Re}\ip{T_a}{T_b}_\mathrm{HS}/z_n\) it follows that
\begin{equation} \label{eq:eps-opt-re}
\operatorname{Re}\ip{T_a}{T_b}_\mathrm{HS} \geq (1-\varepsilon)/n.
\end{equation}
    Using \(\E_{x \sim \sigma_n} \ket{x}\bra{x} = I/n\), we obtain
    \begin{align*}
        \E_{x \sim \sigma_n} \norm{(A(x) \otimes I)\ket{\psi} - Lx}^2  &= 1 + \E_{x\sim \sigma_n} \norm{Lx}^2 - 2\E_{x\sim \sigma_n} \operatorname{Re}\ip{a(x)}{Lx} \\
        &= 1 + \frac1n \norm{L}_\mathrm{HS}^2 - 2 \operatorname{Re}\ip{T_a}{L}_\mathrm{HS}.
    \end{align*}
    The analogous identity holds for Bob.
    Adding these identities and substituting \(L = \frac n2(T_a + T_b)\) gives
    \begin{align*}
        \E_{x \sim \sigma_n} \norm{a(x) - Lx}^2 + \E_{y \sim \sigma_n} \norm{b(y) - Ly}^2 &= 2 +\frac n2 \norm{T_a + T_b}_\mathrm{HS}^2 - 2\operatorname{Re}\ip{T_a + T_b}{L}_\mathrm{HS} \\
        &=2 - \frac n2 \norm{T_a + T_b}_\mathrm{HS}^2 \\
        &\leq 2 - 2n\operatorname{Re}\ip{T_a}{T_b}_\mathrm{HS} \\
        &\leq 2\varepsilon
    \end{align*}
    where the first inequality follows from
    \[
        \norm{T_a + T_b}_\mathrm{HS}^2 = 4 \operatorname{Re}\ip{T_a}{T_b}_\mathrm{HS} + \norm{T_a - T_b}_\mathrm{HS}^2 \geq 4\operatorname{Re} \ip{T_a}{T_b}_\mathrm{HS}
    \]
    and the last from \Cref{eq:eps-opt-re}.
\end{proof}

For an orthonormal basis \(Q = (q_1, \dots, q_n)\) of \(\R^n\), define the linear extensions
\[
    \tilde A_Q(x) \coloneq \sum_{i = 1}^n \ip{q_i}{x}A(q_i), \quad  \tilde B_Q(x) \coloneq \sum_{i = 1}^n \ip{q_i}{x}B(q_i).
\]
The next lemma shows that \(\tilde A_Q(x)\) approximates \(A(x)\) in the state-dependent seminorm, on average over \(x\) and a Haar-random basis \(Q = (q_1, \dots, q_n)\), and the same is true for \(\tilde B_Q(x)\) and \(B(x)\). 
Here and in the following we identify elements of the orthogonal group $O(n)$ with orthonormal bases of $\R^n$.

\begin{lemma}\label{lemma:technical-lemma-3}
    Let \(0 \leq \varepsilon < 1\).
    Consider an \(\varepsilon\)-optimal quantum strategy \( S = (A,B, \ket{\psi})\) for the Grothendieck game \(G_n\).
    Then
    \[
        \E_{Q \in O(n)} \E_{x\sim \sigma_n}\paren*{  \norm{A(x) - \tilde A_Q(x)}_{\rho_A}^2 + \norm{B(x) - \tilde B_Q(x)}_{\rho_B}^2 }\leq 4\varepsilon.
    \]
\end{lemma}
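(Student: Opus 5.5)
We use \Cref{lem:common-linear-map} to obtain the real-linear map \(L\), and compare both \(A(x)\) and \(\tilde A_Q(x)\) to it. Write \(a(x)=(A(x)\otimes I)\ket\psi\) and \(b(x) = (I\otimes B(x))\ket\psi\). By \eqref{eq:seminorm-purification},
\[
\norm{A(x)-\tilde A_Q(x)}_{\rho_A} = \norm*{a(x) - \sum_i \ip{q_i}{x} a(q_i)}.
\]
Since \(L\) is real-linear, \(Lx = \sum_i \ip{q_i}{x} Lq_i\). Hence
\[
a(x) - \sum_i \ip{q_i}{x}a(q_i) = \paren*{a(x)-Lx} - \sum_i \ip{q_i}{x}\paren*{a(q_i)-Lq_i}.
\]
By \((u+v)^2\le 2u^2+2v^2\), the squared norm is at most \(2\norm{a(x)-Lx}^2 + 2\norm{\sum_i \ip{q_i}{x}(a(q_i)-Lq_i)}^2\).

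For the first term, we average over \(x\sim\sigma_n\); it does not depend on \(Q\). For the second term, we fix \(Q\) and average over \(x\). The vector \(y=Q^{T}x\), with coordinates \(y_i=\ip{q_i}{x}\), is again \(\sigma_n\)-distributed. We then apply the computation of \Cref{lem:state-dependent-seminorm}, in its vector form:
\[
\E_{y\sim\sigma_n}\norm*{\sum_i y_i v_i}^2 = \frac1n\sum_i\norm{v_i}^2,
\]
using \(\E y_iy_j=\delta_{ij}/n\) and taking real parts of the cross terms. This gives \(\frac1n\sum_i \norm{a(q_i)-Lq_i}^2\).

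Next we average over Haar-random \(Q\in O(n)\). Each column \(q_i\) is \(\sigma_n\)-distributed, so this term becomes \(\E_{x\sim\sigma_n}\norm{a(x)-Lx}^2\). Alice's total is therefore at most \(4\,\E_x\norm{a(x)-Lx}^2\), and the same bound holds for Bob with \(b\). Summing and invoking \Cref{lem:common-linear-map} gives a bound of \(4\cdot 2\varepsilon=8\varepsilon\). This is a factor \(2\) too weak.

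The main obstacle is recovering the stated constant \(4\varepsilon\). To get it, we avoid the lossy inequality \((u+v)^2\le2u^2+2v^2\) and expand exactly. Set \(e(x)=a(x)-Lx\). Then
\[
\E_{Q,x}\norm*{e(x)-\sum_i\ip{q_i}{x}e(q_i)}^2 = 2\,\E\norm{e}^2 - 2\,\E_{Q,x}\operatorname{Re}\ip*{e(x)}{\sum_i \ip{q_i}{x} e(q_i)}.
\]
For the cross term, we condition on \(x\) and on one column \(q_i\). The joint law of \((x,q_i)\) is the product \(\sigma_n\otimes\sigma_n\), so the cross term equals \(n\,\E_{x,y}\ip{x}{y}\operatorname{Re}\ip{e(x)}{e(y)}\). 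This equals \(n\norm{T_e}_{\mathrm{HS}}^2\ge0\), where \(T_e=\E_x\ket{e(x)}\bra{x}\).

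Dropping this nonnegative cross term gives \(\E_{Q,x}\norm{\cdots}^2\le 2\,\E\norm{e}^2\) exactly. Combined with \Cref{lem:common-linear-map}, this yields \(2\cdot 2\varepsilon=4\varepsilon\), as claimed. In fact \(T_e=0\) for \(L\) chosen optimally, but nonnegativity suffices.
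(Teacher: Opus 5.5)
Your final argument, after you discard the lossy \(8\varepsilon\) estimate, is correct. It takes a somewhat different route from the paper.

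The paper never uses \(L\). It expands \(\E_x\norm{a(x)-\tilde a_Q(x)}^2 = 2-2\operatorname{Re}\sum_i\ip{T_aq_i}{a(q_i)}\) directly and averages over \(Q\). This gives the exact value \(4-2n(\norm{T_a}_{\mathrm{HS}}^2+\norm{T_b}_{\mathrm{HS}}^2)\) for the two players' total error. It then finishes with \(\norm{T_a}_{\mathrm{HS}}^2+\norm{T_b}_{\mathrm{HS}}^2\ge 2\operatorname{Re}\ip{T_a}{T_b}_{\mathrm{HS}}\) and \eqref{eq:eps-opt-re}.

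You run the same averaging mechanism on the residual \(e=a-L\): the cross term becomes \(n\) times a Hilbert--Schmidt norm because \((x,q_i)\sim\sigma_n\otimes\sigma_n\). This yields the general identity \(\E_{Q,x}\norm{e(x)-\sum_i\ip{q_i}{x}e(q_i)}^2=2\E_x\norm{e(x)}^2-2n\norm{T_e}_{\mathrm{HS}}^2\), after which you invoke \cref{lem:common-linear-map}.

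Your version is more modular. It shows that passing to the random-basis linear extension costs at most a factor of two over the error of \emph{any} real-linear approximant. The paper's version is shorter and needs nothing beyond \eqref{eq:eps-opt-re}.

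The two are equally tight. For \(L=\frac n2(T_a+T_b)\) one has \(T_{e_A}=-T_{e_B}=\frac12(T_a-T_b)\). The slack you drop, namely the \(T_e\) terms plus the slack inside \cref{lem:common-linear-map}, is exactly the \(2n\norm{T_a-T_b}_{\mathrm{HS}}^2\) the paper drops.

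This also shows that your closing remark ``\(T_e=0\) for \(L\) chosen optimally'' is false for the common \(L\) of \cref{lem:common-linear-map}. It would hold only for a player-specific choice such as \(L=nT_a\). As you note, only nonnegativity is used, so the proof is unaffected.
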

\begin{proof}
Define \(a(x) = (A(x) \otimes I)\ket{\psi}\) and \(b(y) = (I \otimes B(y))\ket{\psi}\) with linear extensions \(\tilde  a_Q(x) = \sum_{i = 1}^n \ip{q_i}{x} a(q_i)\) and \(\tilde b_Q(y) = \sum_{i = 1}^n \ip{q_i}{y}b(q_i)\).
As in the proof of \cref{lem:common-linear-map}, we let \(T_a = \E_{x \sim \sigma_n} \ket{a(x)}\bra{x}\), \(T_b = \E_{y \sim \sigma_n} \ket{b(y)}\bra{y}\).
By \Cref{eq:seminorm-purification}, we have
\begin{align*}
    \E_{x \sim {\sigma_n}}\norm*{A(x) - \tilde A_{Q}(x)}_{\rho_A}^2 =  
    \E_{x \sim \sigma_n} \norm{a(x) - \tilde a_Q(x)}^2 &= 1 + \E_{x \sim \sigma_n} \norm{\tilde a_Q(x)}^2 - 2 \operatorname{Re}\E_{x \sim \sigma_n}\ip{a(x)}{\tilde a_Q(x)} \\
    &= 2 - 2\operatorname{Re}\sum_{i = 1}^n \ip{T_a q_i}{a(q_i)}.
\end{align*}
Here the last equality uses \Cref{eq:seminorm-sphere}, where we have \(\E_{x \sim \sigma_n}\norm{\tilde a_Q(x)}^2 = 1\), since \(\norm{A(q_i)}_{\rho_A}^2 = 1\).
The analogous identity holds for Bob.
Adding the two identities and taking the expectation over \(Q\) gives
\begin{align*}
    \E_{Q \in O(n)} \E_{x \sim \sigma_n}\paren*{ \norm{A(x) - \tilde A_Q(x)}_{\rho_A}^2 + \norm{B(x) - \tilde B_Q(x)}_{\rho_B}^2} &= 4 - 2n\paren*{\norm{T_a}_\mathrm{HS}^2 + \norm{T_b}_\mathrm{HS}^2} \\
    &\leq 4 - 4n\operatorname{Re}\ip{T_a}{T_b}_\mathrm{HS} \\
    &\leq 4\varepsilon
\end{align*}
where the last inequality follows from \Cref{eq:eps-opt-re}.
\end{proof}

For \(n = 1\), the Grothendieck game has classical and quantum bias \(1\), attained by \(a(x) = x\) and \(b(y) = y\).
We assume \(n \geq 2\) below, so that Haar-random orthonormal pairs are defined.
We now use \cref{lem:common-linear-map} to derive consistency and anticommutation estimates, averaged over questions and orthonormal pairs, respectively.

\begin{lemma}\label{lemma:average-clifford-relations}
    Let \(n \geq 2\) and let \( 0 \leq \varepsilon < 1\).
    Consider an \(\varepsilon\)-optimal quantum strategy \( S =(A,B, \ket{\psi})\) for the Grothendieck game~\(G_n\).
    Then the following averaged consistency and anticommutation bounds hold:
    \begin{align}
        \E_{x \sim \sigma_n} \norm{(A(x) \otimes I - I \otimes B(x))\ket{\psi}}^2 &\leq \bigO(\varepsilon)\label{eq:consistency} \\
        \E_{x \perp y} \paren*{\norm{\{A(x),A(y)\}}_{\rho_A}^2 + \norm{\{B(x),B(y)\}}_{\rho_B}^2}  &\leq \bigO(\varepsilon) \label{eq:anti-commutation}
    \end{align}
    where \(\E_{x \perp y}\) denotes expectation over a Haar-random ordered orthonormal pair~\((x,y)\).
\end{lemma}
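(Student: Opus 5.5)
Both bounds come from \cref{lem:common-linear-map}. Let $a(x)=(A(x)\otimes I)\ket\psi$, $b(x)=(I\otimes B(x))\ket\psi$, and let $L$ be the real-linear map with
\[
\E_x\paren*{\norm{a(x)-Lx}^2+\norm{b(x)-Lx}^2}\le 2\varepsilon .
\]

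\textbf{Consistency.} The triangle inequality and $(s+t)^2\le 2s^2+2t^2$ give
\[
\norm{a(x)-b(x)}^2\le 2\norm{a(x)-Lx}^2+2\norm{b(x)-Lx}^2 .
\]
Averaging over $x$ yields \Cref{eq:consistency} with constant $4\varepsilon$.

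\textbf{Anticommutation.} Let $(x,y)$ be a Haar-random orthonormal pair and set $z=(x+y)/\sqrt2$, $w=(x-y)/\sqrt2$. Then $x,y,z,w$ are each marginally uniform on $S^{n-1}$, and $(z,w)$ is again a Haar-random orthonormal pair. By linearity of $L$,
\[
Lz=\tfrac1{\sqrt2}(Lx+Ly).
\]
Hence, on average over the pair, $a(z)\approx\tfrac1{\sqrt2}(a(x)+a(y))$ with squared error $O(\varepsilon)$. This uses the triangle inequality with three terms, each bounded by $2\varepsilon$ because each of $x,y,z$ is uniform.

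Next I convert this into an operator statement. The vector $a(z)-\tfrac1{\sqrt2}(a(x)+a(y))$ equals $(D\otimes I)\ket\psi$ with $D=A(z)-\tfrac1{\sqrt2}(A(x)+A(y))$, so $\norm{D}_{\rho_A}^2=O(\varepsilon)$ on average. Now use $A(z)^2=I$:
\[
\Bigl(\tfrac{A(x)+A(y)}{\sqrt2}\Bigr)^2=I+\tfrac12\{A(x),A(y)\}.
\]
Write $M=\tfrac1{\sqrt2}(A(x)+A(y))$, so $\norm{M}\le\sqrt2$. Then
\[
\tfrac12\{A(x),A(y)\}=M^2-A(z)^2=-(DM+A(z)D).
\]
The term $A(z)D$ is fine: $\norm{A(z)D}_{\rho_A}=\norm{D}_{\rho_A}$. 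The term $DM$ is the problem, since the seminorm only controls $D$ when it acts directly on the state, and $M$ sits in between.

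\textbf{Main obstacle.} To handle $DM$ I would move $M$ to Bob's side. Take $B$-versions of $M$ and of the error $D$, i.e.\ the same combinations with $B$ replacing $A$, and use transposition-like identities on $\ket\psi$:
\[
(M\otimes I)\ket\psi\approx(I\otimes M_B)\ket\psi,
\]
which follows from consistency at $x$ and $y$, since $\norm{M_B}\le\sqrt2$ and each approximation costs $O(\varepsilon)$. Then
\[
(DM\otimes I)\ket\psi=(D\otimes I)(M\otimes I)\ket\psi\approx(D\otimes M_B)\ket\psi=(I\otimes M_B)(D\otimes I)\ket\psi,
\]
whose norm is at most $\sqrt2\,\norm{D}_{\rho_A}$. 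Every step costs only a constant factor, so the total is $O(\varepsilon)$.

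Equivalently, I could expand $\norm{\{A(x),A(y)\}}_{\rho_A}^2$ using $\bra\psi A(x)A(y)A(x)A(y)\otimes I\ket\psi$ and swap operators across via consistency. In either form this moving of factors to the other side is the delicate point; the averaging works because all marginals are uniform.

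The bound for Bob is symmetric: apply the same argument with $b$ and consistency in the other direction. Summing the two gives \Cref{eq:anti-commutation}.
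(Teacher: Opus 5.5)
Your proof is correct, and your consistency argument is the paper's. For the anticommutator you take a different route.

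\textbf{The paper's route.} It works through bipartite correlators. It writes $\{A(x),A(y)\}\otimes I$ as $A(x)\otimes B(y)+A(y)\otimes B(x)$ plus consistency errors. It then rewrites this correlator as $2(M_A\otimes M_B-I)$ plus consistency errors, where $M_P=(P(x)+P(y))/\sqrt2$. Finally it uses $M_A\otimes M_B\approx A(z)\otimes B(z)\approx I$ on $\ket\psi$. This needs approximate linearity for both players and consistency at $x$, $y$ and $z$, in the style of standard CHSH rigidity computations.

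\textbf{Your route.} You use the one-sided identity $\tfrac12\{A(x),A(y)\}=M^2-A(z)^2=-(DM+A(z)D)$, which applies $A(z)^2=I$ directly on Alice's side. So Alice's bound needs only her own approximate linearity together with consistency at $x$ and $y$. Consistency serves only to move $M$ to Bob's side, where $I\otimes M_B$ commutes with $D\otimes I$. You never need Bob's linearity or consistency at $z$, and the $B$-version of $D$ you mention plays no role.

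\textbf{A small imprecision.} When you apply $D\otimes I$ to the error $(M\otimes I-I\otimes M_B)\ket\psi$, the constant involved is $\norm{D}\le 1+\sqrt2$, not $\norm{M_B}$. This does not affect the argument.

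Both arguments give $O(\varepsilon)$ with constants of the same order; yours is slightly more economical.
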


\begin{proof}
    Let \(L\) be the map from \cref{lem:common-linear-map}.
    The triangle inequality and the bound in \cref{lem:common-linear-map} give
    \begin{align*}
        \E_{x \sim \sigma_n} \norm{(A(x) \otimes I - I \otimes B(x))\ket{\psi}}^2 &\leq 2 \E_{x \sim \sigma_n}\paren*{ \norm{(A(x) \otimes I)\ket{\psi} - Lx}^2 + \norm{(I \otimes B(x))\ket{\psi} - Lx}^2} \\
        &\leq 4\varepsilon.
    \end{align*}

    To prove the anticommutation bound, fix an orthonormal pair \((x,y)\) and set \(z \coloneq \frac{x+y}{\sqrt 2}\).
    By \Cref{eq:seminorm-purification}, the triangle inequality, and linearity of \(L\),
    \begin{align*}
        \norm*{A(z) - \frac{A(x)+ A(y)}{\sqrt 2}}_{\rho_A}
        &\leq \norm{(A(z) \otimes I)\ket{\psi} - L z} + \frac{1}{\sqrt 2} \norm{(A(x) \otimes I)\ket{\psi} - Lx} \\
        &\quad + \frac{1}{\sqrt 2} \norm{(A(y) \otimes I)\ket{\psi} - Ly}
    \end{align*}
    with an analogous bound for Bob.
    By rotational invariance, each of\(x,y,z\) has marginal distribution \(\sigma_n\).
    Squaring the preceding bound, using \((a+b+c)^2 \leq 3a^2 + 3b^2 + 3c^2\), and applying \cref{lem:common-linear-map} therefore gives
    \begin{equation}\label{eq:approx-linearity}
        \E_{x \perp y} \paren*{
        \norm*{A(z) - \frac{A(x) + A(y)}{\sqrt 2}}_{\rho_A}^2 + \norm*{B(z) - \frac{B(x) + B(y)}{\sqrt 2}}_{\rho_B}^2} \leq 12\varepsilon.
    \end{equation}

    We next express the anticommutator errors in terms of consistency and linearity errors.
    The identity
    \begin{multline*}
        \{A(x),A(y)\} \otimes I = (A(x) \otimes B(y) + A(y) \otimes B(x)) \\+ (A(x) \otimes I)(A(y) \otimes I - I \otimes B(y)) + (A(y) \otimes I)(A(x) \otimes I - I \otimes B(x)),
    \end{multline*}
    together with \Cref{eq:seminorm-purification}, the triangle inequality, and the fact that observables have operator norm at most \(1\), gives
    \begin{multline}
        \norm{\{A(x), A(y)\}}_{\rho_A} \leq \norm{(A(x) \otimes B(y) + A(y) \otimes B(x))\ket{\psi}} \\+ \norm{(A(x) \otimes I - I \otimes B(x))\ket{\psi}} + \norm{(A(y) \otimes I - I \otimes B(y))\ket{\psi}} \label{eq:ac-1}.
    \end{multline}
    Similarly, the identity
    \begin{multline*}
    A(x) \otimes B(y) + A(y) \otimes B(x) = 2\paren*{\frac{A(x) + A(y)}{\sqrt 2} \otimes \frac{B(x) + B(y)}{\sqrt 2} - I} \\
    + (A(x) \otimes I)(A(x) \otimes I - I \otimes B(x))+ (A(y) \otimes I)(A(y) \otimes I - I \otimes B(y))
    \end{multline*}
    and the triangle inequality bound the first term on the right-hand side of \Cref{eq:ac-1} by
    \begin{multline}
        \norm{(A(x) \otimes B(y) + A(y) \otimes B(x))\ket{\psi}} \leq 2 \norm*{\paren*{\paren*{\frac{A(x) + A(y)}{\sqrt 2} \otimes \frac{B(x) + B(y)}{\sqrt 2}} - I}\ket{\psi}}\\ + \norm{(A(x) \otimes I - I \otimes B(x))\ket{\psi}}  + \norm{(A(y) \otimes I - I \otimes B(y))\ket{\psi}}\label{eq:ac-2}.
    \end{multline}
    Further, by the triangle inequality we can upper-bound the first term of the right-hand side of \Cref{eq:ac-2} by
    \begin{multline}
        \norm*{\paren*{\paren*{\frac{A(x) + A(y)}{\sqrt 2} \otimes \frac{B(x) + B(y)}{\sqrt 2}} - I}\ket{\psi}} \leq\norm*{\paren*{\paren*{\frac{A(x) + A(y)}{\sqrt 2} - A(z)} \otimes \frac{B(x) + B(y)}{\sqrt 2}}\ket{\psi}} \\
        + \norm*{\paren*{A(z) \otimes \paren*{\frac{B(x) + B(y)}{\sqrt 2} - B(z)}}\ket{\psi}} + \norm{(A(z) \otimes I)(A(z) \otimes I - I \otimes B(z))\ket{\psi}}\label{eq:ac-3}.
    \end{multline}
    The same estimates hold with Alice and Bob interchanged.
    Combining \Cref{eq:ac-1,eq:ac-2,eq:ac-3}, using \linebreak\(\norm*{\frac{P(x)+P(y)}{\sqrt 2}} \leq \sqrt 2\) for \(P \in \{A,B\}\), using Cauchy--Schwarz, and taking expectations, we get
    \[
        \E_{x \perp y} \paren*{ \norm*{ \{A(x),A(y)\}}_{\rho_A}^2 + \norm*{\{B(x),B(y)\}}_{\rho_B}^2 }\leq \bigO(\varepsilon).
    \]
\end{proof}

To apply \cref{thm:GH irrep}, we first establish the required representation-theoretic facts.
In the following let \(n = 2m \geq 2\).
The complex Clifford algebra \(\Cl_n(\C)\) is the unital \(*\)-algebra generated by \(c_1,\dots,c_n\), subject to
\begin{equation}\label{eq:clifford-algebra}
    c_i^*=c_i,\qquad
    c_ic_j+c_jc_i=2\delta_{i,j}.
\end{equation}
For \(x \in S^{n-1}\), define \(c(x) = \sum_{i = 1}^n x_i c_i\). The defining relations imply \(c(x) ^* = c(x), c(x)^2 = 1, \{c(x), c(y)\} = 2\ip{x}{y}\) for all \(x,y \in S^{n-1}\).
Since \(c(e_i) = c_i\), these matrices generate \(\Cl_n(\C)\) as an algebra.
For \(S = \set{i_1, \dots, i_k} \subseteq [n], i_1 < \cdots < i_k\), write \(c_S = c_{i_1} \cdots c_{i_k}\) with \(c_\emptyset = 1\).
The defining relations imply that these ordered monomials span \(\Cl_n(\C)\).

The Majorana operators induce a unital \(*\)-representation \(\pi \colon \Cl_n(\C) \to \operatorname{M}_{2^m}(\C), c_i \mapsto \gamma_i\) where \(\operatorname{M}_{2^m}(\C)\) denotes the algebra of \(2^m\times 2^m\) complex matrices.
By linearity, \(\pi(c(x)) = \gamma(x)\) for every \(x \in S^{n-1}\).
Writing \(\gamma_S = \pi(c_S)\), we have
\begin{equation}\label{eq:majorana-traces}
    \tr[\gamma_S^\dagger \gamma_T] = 2^m\delta_{S,T}, \qquad \tr[\gamma_S] = 0 \text{ for } S \neq \emptyset.
\end{equation}
The matrices \(\gamma_S\) form an orthogonal basis of \(\operatorname{M}_{2^m}(\C)\).
Since the \(c_S\) span \(\Cl_n(\C)\), it follows that \(\pi\) is a \(*\)-isomorphism.
In particular, for $n=2m$, \(\pi\) is the unique irreducible \(*\)-representation of \(\Cl_n(\C)\), up to unitary equivalence.

The multiplicative subgroup generated by \(c_1, \dots, c_n\) is isomorphic to the Clifford group \(C_n\) from \Cref{eq:Cliffordgroup}.
Every element of \(C_n\) has a unique normal form
\[
    J^{a_0} c_1^{a_1} \cdots c_n^{a_n}, \quad a_0, \dots, a_n \in \set{0,1},
\]
and hence \(\abs{C_n} = 2^{n+1}\).

Restricting \(\pi\) to \(C_n\) gives the irreducible unitary representation
\[
    \lambda(J) = -I, \quad \lambda(c_i) = \gamma_i.
\]
Conversely, every unitary representation of \(C_n\) with \(J \mapsto -I\) extends to a \(*\)-representation of \(\Cl_n(\C)\).
By uniqueness of the irreducible representation of the algebra, \(\lambda\) is therefore the unique irreducible representation of \(C_n\) with \(J \mapsto -I\), up to unitary equivalence.
Its dimension is \(d_\lambda = 2^m\), and \cref{eq:majorana-traces} gives its character:
\begin{equation}\label{eq:clifford-character}
    \chi_\lambda(g)=
    \begin{cases}
        d_\lambda,  & g=1,\\
        -d_\lambda, & g=J,\\
        0,          & \text{otherwise}.
    \end{cases}
\end{equation}

For each orthonormal basis~\(Q = (q_1, \dots, q_n)\) of \(\R^n\), define the unitary-valued function \(f_A^Q\) on \(C_n\) by
    \[
        f_A^Q(J^{a_0}c_1^{a_1} \dots c_n^{a_n}) = (-1)^{a_0}A(q_1)^{a_1} \dots A(q_n)^{a_n}
    \]
and define \(f_B^Q\) analogously.
By construction, \(f_A^{Q}(gJ) = -f_A^{Q}(g)\) and \(f_B^Q(gJ) = -f_B^Q(g)\).

The next lemma bounds the average squared error in a product of self-adjoint unitaries by the sum of the individual squared errors.

\begin{lemma}\label{lemma:technical-lemma-0}
    Let \(R_1, \dots, R_k\) be self-adjoint unitaries on a Hilbert space \(\mathcal{H}\) and let \(\ket{\phi} \in \mathcal{H}\) be a state.
    Then
    \[
        \E_{a \in \set{0,1}^k} \norm*{(R_1^{a_1} \cdots R_k^{a_k} - I)\ket{\phi}}^2 \leq \sum_{j = 1}^k \norm{(R_j - I)\ket{\phi}}^2
    \]
\end{lemma}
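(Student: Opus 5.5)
We prove the bound by induction on \(k\), using a telescoping decomposition together with the fact that the averaging over \(a \in \set{0,1}^k\) kills the cross terms. The case \(k=1\) is immediate: the average equals \(\frac12\paren{0 + \norm{(R_1 - I)\ket\phi}^2}\), which is at most the right-hand side.

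For the induction step, write \(P_a = R_1^{a_1}\cdots R_{k-1}^{a_{k-1}}\), a unitary. Then
\[
R_1^{a_1}\cdots R_k^{a_k} - I = P_a(R_k^{a_k} - I) + (P_a - I).
\]
Set \(u = P_a(R_k^{a_k}-I)\ket\phi\) and \(v = (P_a - I)\ket\phi\). Expand \(\norm{u+v}^2 = \norm{u}^2 + \norm{v}^2 + 2\operatorname{Re}\ip{u}{v}\) and average over \(a_k \in \set{0,1}\) first.

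The first term gives \(\E_{a_k}\norm{u}^2 = \frac12\norm{(R_k - I)\ket\phi}^2\), since \(P_a\) is unitary. The second term is independent of \(a_k\), and by induction \(\E\norm{v}^2 \le \sum_{j<k}\norm{(R_j - I)\ket\phi}^2\).

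The main obstacle is the cross term \(2\operatorname{Re}\ip{u}{v}\). It vanishes for \(a_k = 0\), so after averaging it equals \(\operatorname{Re}\bra\phi (R_k - I)P_a^\dagger(P_a - I)\ket\phi = \operatorname{Re}\bra\phi(R_k - I)(I - P_a^\dagger)\ket\phi\). We bound it by Cauchy--Schwarz and AM--GM:
\[
\abs{\ip{(R_k - I)\phi}{(I - P_a^\dagger)\phi}} \le \tfrac12\norm{(R_k - I)\phi}^2 + \tfrac12\norm{(P_a^\dagger - I)\phi}^2 .
\]
This absorbs the missing half of the \(R_k\) term. However, it introduces \(\frac12\E\norm{(P_a^\dagger - I)\phi}^2\), and bounding that by induction would give a factor \(\frac32\) instead of \(1\).

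To avoid this, I would instead do the induction by peeling off the first factor with a cleaner identity. Let \(W_a = R_2^{a_2}\cdots R_k^{a_k}\). Averaging over \(a_1\) gives
\[
\E_{a_1}\norm{(R_1^{a_1}W_a - I)\phi}^2 = 2 - \operatorname{Re}\bra\phi(W_a + R_1W_a)\ket\phi ,
\]
using \(\norm{\phi} = 1\). Comparing with \(\norm{(W_a - I)\phi}^2 = 2 - 2\operatorname{Re}\bra\phi W_a\ket\phi\), the increment is
\[
\operatorname{Re}\bra\phi(I - R_1)W_a\ket\phi = \operatorname{Re}\ip{(I-R_1)\phi}{W_a\phi},
\]
using that \(R_1\) is self-adjoint. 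This has mean zero against \(\phi\), since \(\bra\phi(I-R_1)\ket\phi = \frac12\norm{(R_1-I)\phi}^2\). Precisely, the increment equals \(\frac12\norm{(R_1 - I)\phi}^2 + \operatorname{Re}\ip{(I - R_1)\phi}{(W_a - I)\phi}\).

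Bounding the remaining correlation by \(\frac12\norm{(R_1 - I)\phi}^2 + \frac12\norm{(W_a - I)\phi}^2\) again overshoots. So the real task is a tight bound on this correlation. The fallback is an \(O(k)\) constant, which likely suffices downstream, but I would first check the \(k = 2\) case exactly to see whether the constant \(1\) holds as stated. A promising route to the exact constant is to keep the exact formula \(\E_a \norm{(P - I)\phi}^2 = 2 - 2\operatorname{Re}\bra\phi\prod_j\frac{I + R_j}{2}\ket\phi\). The claim then becomes \(1 - \operatorname{Re}\bra\phi\prod_j \Pi_j\ket\phi \le \sum_j(1 - \bra\phi\Pi_j\ket\phi)\), where \(\Pi_j = \frac{I + R_j}{2}\) are projections. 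This follows from the telescoping \(I - \Pi_1\cdots\Pi_k = \sum_j \Pi_1\cdots\Pi_{j-1}(I - \Pi_j)\) together with \(\abs{\bra\phi\Pi_1\cdots\Pi_{j-1}(I - \Pi_j)\ket\phi} \le \norm{(I - \Pi_j)\phi}\). This yields the bound with \(\norm{(I - \Pi_j)\phi} = \frac12\norm{(R_j - I)\phi}\), which is linear rather than squared in the errors. To get squared terms, I would use \((I - \Pi_j)^2 = I - \Pi_j\) to write each summand as \(\bra\phi\Pi_1\cdots\Pi_{j-1}(I - \Pi_j)(I - \Pi_j)\ket\phi\), which I expect to be the decisive step.
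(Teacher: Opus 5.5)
Your proposal has a genuine gap and does not yet prove the lemma. The two inductions lose a constant, as you note. The $O(k)$ fallback is a strictly weaker statement, and it would also cost a factor of $n$ downstream. The final route stops exactly at the step that needs an idea.

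Your reformulation also drops a factor of $2$. Since $\|(R_j-I)\phi\|^2 = 4(1-\langle\phi|\Pi_j|\phi\rangle)=4\|(I-\Pi_j)\phi\|^2$, and the left-hand side equals $2X$ with $X = 1-\operatorname{Re}\langle\phi|\Pi_1\cdots\Pi_k|\phi\rangle$, the lemma is equivalent to $X\le 2\sum_j\|(I-\Pi_j)\phi\|^2$. The inequality you wrote, without the $2$, is false. Take $k=2$, $\phi=e_1\in\mathbb{R}^2$, and $\Pi_1,\Pi_2$ the rank-one projections onto $(\cos\theta,\pm\sin\theta)$. Then $X=1-\cos^2\theta\cos 2\theta=3s-2s^2$ with $s=\sin^2\theta$, while $\sum_j(1-\langle\phi|\Pi_j|\phi\rangle)=2s$, which is smaller for $0<\theta<\pi/4$.

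The missing step is a \emph{global} control of the vectors $(I-\Pi_j)u_j$, where $u_j=\Pi_{j-1}\cdots\Pi_1\ket{\phi}$. With this notation your summands are $\langle (I-\Pi_j)u_j,(I-\Pi_j)\phi\rangle$. Bounding $\|(I-\Pi_j)u_j\|$ term by term with the triangle inequality only gives a bound of order $k\sum_j\|(I-\Pi_j)\phi\|^2$.

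What closes the argument is Pythagoras, $\|u_j\|^2=\|u_{j+1}\|^2+\|(I-\Pi_j)u_j\|^2$. This telescopes to $\sum_j\|(I-\Pi_j)u_j\|^2=1-\|u_{k+1}\|^2\le 2X$. The last inequality follows by expanding $\|\phi-u_{k+1}\|^2\ge 0$ and noting $X=1-\operatorname{Re}\langle u_{k+1},\phi\rangle$. Then the weighted bound $\operatorname{Re}\langle a,b\rangle\le\|a\|^2+\tfrac14\|b\|^2$ gives $X\le\sum_j\|(I-\Pi_j)\phi\|^2+\tfrac12 X$, which is exactly the lemma. So your telescoping route does work once this absorption step is added.

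The paper uses the same orthogonality in one stroke. It sets $v_{k+1}=\ket{\phi}$ and $v_j=\Pi_j v_{j+1}$ (the paper's $Q_j$ is your $\Pi_j$). The splitting $2\ket{\phi}-v_j=\Pi_j(2\ket{\phi}-v_{j+1})+(I-R_j)\ket{\phi}$ is orthogonal, so $\|2\ket{\phi}-v_1\|^2\le 1+\sum_j\|(R_j-I)\ket{\phi}\|^2$. The proof then concludes from the identity $2-2\operatorname{Re}\langle\phi|v_1\rangle=\|2\ket{\phi}-v_1\|^2-1-\|\ket{\phi}-v_1\|^2$ for the left-hand side of the lemma.
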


\begin{proof}
    Set $Q_j=\frac{I+R_j}{2}$, $v_{k+1}=\ket{\phi}$ and $v_j=Q_jv_{j+1}$. Decompose
\[ 
2\ket{\phi}-v_j =Q_j(2\ket{\phi}-v_{j+1})+2(I-Q_j)\ket{\phi} 
\]
into two orthogonal vectors. Since the vectors are orthogonal, it holds $\|Q_j\|\leq 1$ and $2(I-Q_j)=I-R_j$, we get
\begin{align}
 \|2\ket{\phi}-v_j\|^2 &=\|Q_j(2\ket{\phi}-v_{j+1})\|^2+\|2(I-Q_j)\ket{\phi}\|^2\nonumber\\ 
&\le \|2\ket{\phi}-v_{j+1}\|^2+\|(R_j-I)\ket{\phi}\|^2.\label{eq:vecbound}
\end{align}
Summing over $j$ in \Cref{eq:vecbound} and using $\|\ket{\phi}\| = 1$ gives
\begin{align}
\|2\ket{\phi}-v_1\|^2 \le 1+\sum_j\|(R_j-I)\ket{\phi}\|^2.\label{eq:vecbound2} 
\end{align}
It holds $\frac12\sum_{a_j\in\{0,1\}}R_j^{a_j} = \frac{I+R_j}{2}=Q_j$ and therefore
\[ \mathbb E_{a\in\{0,1\}^k} R_1^{a_1}\cdots R_k^{a_k}\ket{\phi}=Q_1\cdots Q_k\ket{\phi}=v_1. \]
Since each $R_1^{a_1}\cdots R_k^{a_k}$ is unitary and by \Cref{eq:vecbound2}, we obtain
\begin{align*}
 \mathbb E_a\|(R_1^{a_1}\cdots R_k^{a_k}-I)\ket{\phi}\|^2 &=2-2\operatorname{Re}\bra{\phi}v_1\rangle\\ 
&=\|2\ket{\phi}-v_1\|^2-1-\|\ket{\phi}-v_1\|^2\\ 
&\le \sum_j\|(R_j-I)\ket{\phi}\|^2.
\end{align*}
\end{proof}

We now combine \cref{lemma:technical-lemma-0} with the approximate Clifford relations to show that, on average over \(Q\) and the group element, \(f_P^Q\) approximately preserves multiplication by each generator \(c_i\).

\begin{lemma}\label{lemma:technical-lemma-1}
    Let \(n \geq 2\) and let \(0 \leq \varepsilon < 1\).
    Consider an \(\varepsilon\)-optimal quantum strategy \( S =(A, B, \ket{\psi})\) for the Grothendieck game~\(G_n\).
    For every \(i \in [n]\), it holds that
    \[
        \E_{Q \in O(n)}\E_{h \in C_n} \paren*{ \norm*{f_A^Q(h)A(q_i) - f_A^Q(hc_i)}_{\rho_A}^2 + \norm*{f_B^Q(h)B(q_i) - f_B^Q(hc_i)}_{\rho_B}^2} \leq \bigO(n\varepsilon)
    \]
    where \(Q = (q_1, \dots, q_n)\) is a Haar-random orthonormal basis of \(\R^n\).
\end{lemma}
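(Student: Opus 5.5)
We first reduce the statement to a single commutation chain. Write $h = J^{a_0}c_1^{a_1}\cdots c_n^{a_n}$. Then in $C_n$ we have
\[
hc_i = J^{a_0 + \sum_{j>i} a_j}\, c_1^{a_1}\cdots c_i^{a_i+1}\cdots c_n^{a_n}.
\]
So $f_A^Q(hc_i)$ is obtained from $f_A^Q(h)A(q_i)$ by moving $A(q_i)$ leftwards past $A(q_n)^{a_n},\dots,A(q_{i+1})^{a_{i+1}}$, picking up a sign $-1$ each time an exponent is $1$. If $a_i=1$, it then merges with $A(q_i)$ using $A(q_i)^2=I$; if $a_i=0$, it simply sits in place. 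We therefore need, on the state, the approximate identities $A(q_j)A(q_i)\approx -A(q_i)A(q_j)$ for $j>i$. Equivalently, $\{A(q_i),A(q_j)\}\approx 0$ in $\norm{\cdot}_{\rho_A}$.

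The difficulty is that the operators to the left of the swap location act on the state, so the $\rho_A$-seminorm error is not directly controlled. To handle this, we move errors to Bob's side via consistency. We do one swap at a time through a telescoping sum over $j=n,n-1,\dots,i+1$. The $j$-th term has the form
\[
\norm{X_j\,(A(q_j)A(q_i)+A(q_i)A(q_j))\,Y_j}_{\rho_A},
\]
with $X_j$ a product of $A(q_l)$ for $l<j$ (a unitary) and $Y_j=A(q_{j+1})^{a_{j+1}}\cdots A(q_n)^{a_n}$. The unitary on the left is harmless, since $\norm{WT}_\rho\le\norm{T}_\rho$. The trailing product $Y_j$ on the right is the real issue.

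To handle $Y_j$, we replace each $A(q_l)\otimes I$ acting on $\ket\psi$ by $I\otimes B(q_l)$, which commutes with everything on Alice's side. Doing this for all trailing factors costs at most $\sum_l \norm{(A(q_l)\otimes I - I\otimes B(q_l))\ket\psi}$, with the factors applied in reverse order on Bob's side. After the swap, we convert back at the same cost.

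Squaring and using Cauchy--Schwarz over the at most $n$ terms of the telescoping sum gives a bound of order
\[
n\sum_{j}\norm{\{A(q_i),A(q_j)\}}_{\rho_A}^2 \;+\; n^2\sum_l \norm{(A(q_l)\otimes I-I\otimes B(q_l))\ket\psi}^2 .
\]
This is too lossy if done naively. We should therefore be more careful and exploit the averaging over $h$. Averaging over the exponents $a_{j+1},\dots,a_n$ lets us apply \cref{lemma:technical-lemma-0} to the self-adjoint unitaries $R_l=A(q_l)\otimes B(q_l)$, where $A(q_l)\otimes B(q_l)\ket\psi\approx\ket\psi$. This gives
\[
\E_a\norm{(Y_j\otimes \tilde Y_j^{\mathsf T}-I)\ket\psi}^2\le\sum_l\norm{(R_l-I)\ket\psi}^2,
\]
a sum rather than a squared sum.

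Concretely, we rewrite $\norm{T Y_j}_{\rho_A}=\norm{(TY_j\otimes \tilde Y_j)\ket\psi}$, where $\tilde Y_j$ is Bob's unitary product in reverse order, which we may apply freely. We then use $(Y_j\otimes\tilde Y_j)\ket\psi\approx\ket\psi$ on average by \cref{lemma:technical-lemma-0}, and $\norm{T}\le 2$. Since $R_l-I$ has the same norm on the state as $A(q_l)\otimes I-I\otimes B(q_l)$, each swap term is bounded by $\norm{\{A(q_i),A(q_j)\}}_{\rho_A}+2\bigl(\sum_l \norm{(A(q_l)\otimes I-I\otimes B(q_l))\ket\psi}^2\bigr)^{1/2}$, in expectation over $h$.

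The key averaging step is as follows. Since $Q$ is Haar random, each $q_l$ is uniform on the sphere and each pair $(q_i,q_j)$ is a Haar orthonormal pair. So \cref{lemma:average-clifford-relations} gives $\E_Q\norm{(A(q_l)\otimes I-I\otimes B(q_l))\ket\psi}^2\le O(\varepsilon)$ and $\E_Q\norm{\{A(q_i),A(q_j)\}}_{\rho_A}^2\le O(\varepsilon)$.

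This is where the $n$-dependence must be controlled, and we expect it to be the main obstacle: a naive telescoping over up to $n$ swaps, each carrying an $O(\sqrt{n\varepsilon})$ consistency error, yields $O(n^3\varepsilon)$. To reach $O(n\varepsilon)$, we would instead commute $A(q_i)$ past the whole block at once. We write $Y\,A(q_i)=\pm A(q_i)\,Y$ approximately by transferring the entire block $Y$ to Bob, via \cref{lemma:technical-lemma-0} with error $\sum_l\norm{(R_l-I)\ket\psi}^2=O(n\varepsilon)$. On Bob's side it then trivially commutes with $A(q_i)$. This reduces the task to comparing $A(q_i)\otimes \tilde Y$ with the sign-adjusted $A(q_i)\otimes\tilde Y$.

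The signs are then produced on Bob's side. We write $\tilde Y$ acting after $A(q_i)$, transfer $A(q_i)$ to Bob as $B(q_i)$, and use the Bob anticommutators $\{B(q_i),B(q_l)\}$. This again costs a sum of $n$ squared errors after averaging, using the analogous product lemma for the sign-twisted product $\prod_l(-B(q_i)B(q_l)B(q_i))^{a_l}$ against $\prod_l B(q_l)^{a_l}$. Here we would apply \cref{lemma:technical-lemma-0} to $R_l=-B(q_l)B(q_i)B(q_l)B(q_i)$ conjugated appropriately; we expect this step to need the most care.

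Finally, we transfer the block back to Alice, yielding $O(n\varepsilon)$ overall. The same argument with the roles of Alice and Bob swapped gives the $B$ term.
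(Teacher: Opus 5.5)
Your setup is right: the normal form of $hc_i$, the sign $(-1)^s$ with $s=\sum_{j>i}a_j$, and the use of \cref{lemma:technical-lemma-0} with $R_l=A(q_l)\otimes B(q_l)$ on $\ket\psi$ to transfer the block $U_A=A(q_{i+1})^{a_{i+1}}\cdots A(q_n)^{a_n}$ to Bob. You also correctly note that swap-by-swap telescoping only gives $O(n^3\varepsilon)$. The gap is in the step meant to reach $O(n\varepsilon)$.

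First, transferring $U_A$ to Bob via that lemma is justified only on $\ket\psi$, so only for $A(q_i)U_A$. In $U_AA(q_i)$ the block acts on $(A(q_i)\otimes I)\ket\psi$, and that is exactly where the signs live. Second, your workaround of moving $A(q_i)$ to Bob as $B(q_i)$ just reproduces the original problem on Bob's side. What remains is to show that $\norm{B(q_i)\tilde Y-(-1)^s\tilde YB(q_i)}_{\rho_B}$ is small on average, which is the Bob half of the lemma itself.

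Your proposed resolution, applying \cref{lemma:technical-lemma-0} to $R_l=-B(q_l)B(q_i)B(q_l)B(q_i)$ ``conjugated appropriately,'' does not work. These $R_l$ are unitary but not self-adjoint, so the lemma does not apply; its proof needs $(I+R_l)/2$ to be an orthogonal projection. Moreover, comparing $\prod_l(-B(q_i)B(q_l)B(q_i))^{a_l}$ with $\prod_l B(q_l)^{a_l}$ means comparing two non-commuting products. That is not of the form $\prod_l R_l^{a_l}-I$, and the ``analogous product lemma'' you invoke is never stated or proved.

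The missing idea is to apply \cref{lemma:technical-lemma-0} directly on the shifted state $(A(q_i)\otimes I)\ket\psi$. The prefix $A(q_1)^{a_1}\cdots A(q_i)^{a_i}$ is unitary, and you may apply $U_B=B(q_{i+1})^{a_{i+1}}\cdots B(q_n)^{a_n}$ on Bob's side for free. This gives
\[
\norm{f_A^Q(h)A(q_i)-f_A^Q(hc_i)}_{\rho_A}\le\norm{((-1)^sU_A\otimes U_B-I)(A(q_i)\otimes I)\ket\psi}+\norm{(U_A\otimes U_B-I)\ket\psi}.
\]
The second term is your existing step. For the first, note that $(-1)^sU_A\otimes U_B=\prod_{l>i}(-A(q_l)\otimes B(q_l))^{a_l}$ is a product of self-adjoint unitaries. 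So the lemma bounds its bit-average by $\sum_{l>i}\norm{(A(q_l)\otimes B(q_l)+I)(A(q_i)\otimes I)\ket\psi}^2$. Using
\[
(A(q_l)\otimes B(q_l)+I)(A(q_i)\otimes I)=(I\otimes B(q_l))\bigl(\{A(q_l),A(q_i)\}\otimes I-(A(q_i)\otimes I)(A(q_l)\otimes I-I\otimes B(q_l))\bigr),
\]
each summand is $O\bigl(\norm{\{A(q_i),A(q_l)\}}_{\rho_A}^2+\norm{(A(q_l)\otimes I-I\otimes B(q_l))\ket\psi}^2\bigr)$. No trailing products appear, because the lemma measures every single-factor error on the fixed state. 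Averaging over $Q$ with \cref{lemma:average-clifford-relations} then gives $O(n\varepsilon)$, and the same argument works for Bob.

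Equivalently, conjugating by $A(q_i)\otimes I$, this is the lemma applied on $\ket\psi$ to the self-adjoint unitaries $-A(q_i)A(q_l)A(q_i)\otimes B(q_l)$. That is the two-sided version of your twisted product, and it is the tensor factor on the other player's side that makes the comparison legitimate.
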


\begin{proof}
    For \( i = n\), the expectation vanishes.
    We therefore assume \(i < n\).
    The normal forms of \(h\) and \(hc_i\) are
    \begin{align*}
        h = J^{a_0}c_1^{a_1} \cdots c_n^{a_n}, \quad hc_i = J^{a_0 + \sum_{j = i+1}^n a_j} c_1^{a_1} \cdots c_{i-1}^{a_{i-1}}c_i^{1 \oplus a_i} c_{i+1}^{a_{i+1}} \cdots c_n^{a_n}
    \end{align*}
    and writing \(U_{P,i}\coloneq P(q_i)^{a_i} \cdots P(q_n)^{a_n}\) and \(U_{P,n+1} = I\) for \(P \in \set{A,B}\), we obtain the following estimates using \Cref{eq:seminorm-purification}, the isometry invariance in \Cref{lem:state-dependent-seminorm} and the triangle inequality:
    \begin{align*}
        &\norm*{f_A^Q(h)A(q_i) - f_A^Q(hc_i)}_{\rho_A}^2 \\
        &= \norm{(U_{A,i+1}A(q_i) - (-1)^{\sum_{j = i+1}^n a_j} A(q_i) U_{A,i+1}) \otimes I\ket\psi}^2 \\
        &=\norm{(U_{A, i+1}\otimes U_{B, i+1})(A(q_i) \otimes I)\ket\psi - (-1)^{\sum_{j = i+1}^n a_j} (A(q_i) \otimes I)(U_{A,i+1} \otimes U_{B, i+1})\ket\psi}^2 \\
        &\leq 2 \norm{(U_{A,i+1} \otimes U_{B,i+1} - (-1)^{\sum_{j = i+1}^n a_j}I)(A(q_i) \otimes I)\ket\psi}^2 + 2\norm{(U_{A,i+1} \otimes U_{B,i+1} - I)\ket\psi}^2
    \end{align*}
    The expression depend only on \(a_{i+1}, \dots, a_n\).
    Thus averaging over \(h \in C_n\) is equivalent to averaging over these bits.
    Applying \cref{lemma:technical-lemma-0} to the observables \((-A(q_{j}) \otimes B(q_{j}))_{j = i+1}^n\) and the state \((A(q_i) \otimes I)\ket\psi\) as well as to the observables \((A(q_j) \otimes B(q_j))_{j = i+1}^n\) and the state \(\ket\psi\) gives
    \begin{align*}
        &\E_{h \in C_n} \norm{f_A^Q(h)A(q_i) - f_A^Q(hc_i)}_{\rho_A}^2 \\&\quad \leq 2\sum_{j = i+1}^n \norm{(A(q_j) \otimes B(q_j) - I)\ket\psi}^2 + 2\sum_{j = i+1}^n \norm{(A(q_j) \otimes B(q_j) + I)(A(q_i) \otimes I)\ket{\psi}}^2
    \end{align*}
    Using the identities
    \begin{align*}
        A(q_j) \otimes B(q_j) - I &= (A(q_j) \otimes I)(I \otimes B(q_j) - A(q_j) \otimes I) \\
        (A(q_j) \otimes B(q_j) + I)(A(q_i) \otimes I) &= (I \otimes B(q_j))\paren*{ \{A(q_j), A(q_i)\} \otimes I - (A(q_i) \otimes I)(A(q_j) \otimes I - I \otimes B(q_j))}
    \end{align*}
    and applying the triangle inequality and unitary invariance, we obtain
    \[
        \E_{h \in C_n} \norm*{f_A^Q(h) A(q_i) - f_A^Q(hc_i)}_{\rho_A}^2 \leq 4 \sum_{j = i+1}^n \norm{\{A(q_i), A(q_j)\}}_{\rho_A}^2 + 6 \sum_{j = i+1}^n \norm{(A(q_j) \otimes I - I \otimes B(q_j))\ket{\psi}}^2
    \]
    and the same argument for Bob gives
    \[
        \E_{h \in C_n} \norm*{f_B^Q(h) B(q_i) - f_B^Q(hc_i)}_{\rho_B}^2 \leq 4 \sum_{j = i+1}^n \norm{\{B(q_i), B(q_j)\}}_{\rho_B}^2 + 6 \sum_{j = i+1}^n \norm{(A(q_j) \otimes I - I \otimes B(q_j))\ket{\psi}}^2
    \]
    For Haar-random \(Q\), each \(q_j\) is uniform on the sphere, and each pair \((q_i, q_j), i \neq j\) is a uniform orthonormal pair.
    Taking the expectation over \(Q\), adding the two players' bounds, and applying \cref{lemma:average-clifford-relations} gives
    \[
        \E_{Q \in O(n)}\E_{h \in C_n} \paren*{ \norm{f_A^Q(h)A(q_i) - f_A^Q(hc_i)}_{\rho_A}^2 + \norm{f_B^Q(h)B(q_i) - f_B^Q(hc_i)}_{\rho_B}^2 }\leq \bigO((n-i)\varepsilon).
    \]
\end{proof}

For an orthonormal basis \(Q = (q_1, \dots, q_n)\) of \(\R^n\), define
\[
    \gamma_Q(x) \coloneq \sum_{i = 1}^n \ip{q_i}{x}\gamma_i.
\]
Approximate consistency of the ideal observables also determines the shared state.
The next lemma shows that it must be close to a maximally entangled state (tensored with an auxiliary state).

\begin{lemma}\label{lemma:technical-lemma-2}
    Let \(n = 2m\) with \(m \geq 1\), let \(\ket{\phi} \in \C^{2^m} \otimes \C^{2^m} \otimes \mathcal{H}_{\mathrm{aux}}\) be an arbitrary state, and let \(\hat Q\) be any orthonormal basis of \(\R^n\).
    Suppose that
    \[
        \E_{x \sim\sigma_n}\norm{(\gamma_{\hat Q}(x) \otimes I \otimes I_{\mathrm{aux}}- I \otimes \overline{\gamma_{\hat Q}(x)} \otimes I_\mathrm{aux})\ket{\phi}}^2 \leq \delta,
    \]
    for some \(\delta \geq 0\).
    Then there exists an auxiliary state~\(\ket{\mathrm{aux}} \in\mathcal{H}_\mathrm{aux}\) such that
    \[
        \norm{\ket\phi - \ket{\psi_{2^m}^+} \otimes \ket{\mathrm{aux}}} \leq \sqrt{\frac{n\delta}2}.
    \]
\end{lemma}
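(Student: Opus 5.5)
First I reduce the spherical average to an average over the basis. Write $\Gamma_i \coloneq \gamma_i \otimes I \otimes I_{\mathrm{aux}} - I \otimes \overline{\gamma_i} \otimes I_{\mathrm{aux}}$. Then
\[
\gamma_{\hat Q}(x)\otimes I\otimes I_{\mathrm{aux}} - I\otimes\overline{\gamma_{\hat Q}(x)}\otimes I_{\mathrm{aux}} = \sum_i \ip{q_i}{x}\,\Gamma_i .
\]
The coefficients $\ip{q_i}{x}$ are the coordinates of $x$ in an orthonormal basis, so they have the same second moments as $x_i$. The argument of \Cref{lem:state-dependent-seminorm}, applied with the pure state $\ket\phi\bra\phi$, then gives
\[
\frac1n\sum_{i=1}^n \norm{\Gamma_i\ket\phi}^2 \le \delta .
\]
The dependence on $\hat Q$ thus disappears, and from here on I work only with the Majorana operators $\gamma_i$.

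Next I turn each consistency relation into an approximate stabilizer condition. Set $W_i \coloneq \gamma_i\otimes\overline{\gamma_i}\otimes I_{\mathrm{aux}}$. Each $W_i$ is a self-adjoint unitary, and $\norm{(W_i - I)\ket\phi} = \norm{\Gamma_i\ket\phi}$ by unitary invariance, since $W_i - I = (\gamma_i\otimes I\otimes I_{\mathrm{aux}})\,(I\otimes\overline{\gamma_i}\otimes I_{\mathrm{aux}} - \gamma_i\otimes I\otimes I_{\mathrm{aux}})$. Hence $\sum_i \norm{(W_i - I)\ket\phi}^2 \le n\delta$.

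Then I identify the projection onto $\ket{\psi^+_{2^m}}$. The $W_i$ commute pairwise, because swapping two anticommuting factors on each side produces $(-1)^2$. The standard identity
\[
\E_{S\subseteq[n]} \gamma_S\otimes\overline{\gamma_S} = \ket{\psi^+_{2^m}}\bra{\psi^+_{2^m}}
\]
follows from \Cref{eq:majorana-traces}, because $\{2^{-m/2}\gamma_S\}$ is an orthonormal basis of $\operatorname{M}_{2^m}(\C)$. The same traces give $\E_{S}\,\gamma_S\otimes\overline{\gamma_S} = \E_{a\in\{0,1\}^n} W_1^{a_1}\cdots W_n^{a_n}$, up to signs: $\gamma_S\otimes\overline{\gamma_S}$ equals $W_{i_1}\cdots W_{i_k}$ exactly, since the operators are taken on separate tensor factors and there is no reordering sign. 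So
\[
P \coloneq \ket{\psi^+_{2^m}}\bra{\psi^+_{2^m}}\otimes I_{\mathrm{aux}} = \prod_i \frac{I+W_i}{2}
\]
is a product of commuting projections $Q_i = (I+W_i)/2$.

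Finally I bound the distance to the image of $P$. Since the $Q_i$ commute, $I - P \le \sum_i (I - Q_i)$ (a union bound for commuting projections). Using $I - Q_i = (I-W_i)/2$ and $\norm{(I-Q_i)\ket\phi}^2 = \frac14\norm{(W_i-I)\ket\phi}^2$, this gives
\[
\norm{(I-P)\ket\phi}^2 \le \sum_i \bra\phi (I-Q_i)\ket\phi = \sum_i \norm{(I-Q_i)\ket\phi}^2 \le \frac{n\delta}{4}.
\]
Here I use that $\bra\phi(I-Q_i)\ket\phi = \norm{(I-Q_i)\ket\phi}^2$ because $I-Q_i$ is a projection. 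Every vector in the range of $P$ has the form $\ket{\psi^+_{2^m}}\otimes\ket v$, so $P\ket\phi = \ket{\psi^+_{2^m}}\otimes\ket v$ with $\norm{v}^2 = 1 - \eta$, where $\eta = \norm{(I-P)\ket\phi}^2$. Take $\ket{\mathrm{aux}} = \ket v/\norm{v}$ (any unit vector if $v=0$). Then
\[
\norm{\ket\phi - \ket{\psi^+_{2^m}}\otimes\ket{\mathrm{aux}}}^2 = \eta + \bigl(1-\sqrt{1-\eta}\bigr)^2 \le 2\eta \le \frac{n\delta}{2},
\]
which is the claimed bound. Alternatively, \Cref{lemma:technical-lemma-0} applied to the $W_i$ gives the same estimate through $\norm{\ket\phi - P\ket\phi}$.

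The main obstacle I expect is the sign bookkeeping in $\gamma_S\otimes\overline{\gamma_S} = \prod_{i\in S} W_i$, together with checking the character identity that this average is the rank-one projector. Everything after that is routine.
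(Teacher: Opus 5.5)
Your proof is correct, and its outline matches the paper's. Both reduce to $\sum_i \norm{\Gamma_i\ket\phi}^2\le n\delta$ via \Cref{eq:seminorm-sphere}, show $\norm{(I-P)\ket\phi}^2\le n\delta/4$, and then normalize $(\bra{\psi^+_{2^m}}\otimes I)\ket\phi$ and use $2(1-r)\le 2(1-r^2)$. The difference lies in how you prove the key operator inequality.

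Note that $\sum_i (I-Q_i)=\tfrac12\sum_i(I-W_i)=\tfrac14 D$, where $D=2nI-2\sum_i W_i$ is the paper's operator. So your union bound $I-P\le\sum_i(I-Q_i)$ is literally the paper's $D\ge 4(I-P)$.

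The paper proves this inequality by diagonalizing $D$ on the basis $\operatorname{vec}(\gamma_S)$. It computes $\sum_i\gamma_i\gamma_S\gamma_i=(-1)^{|S|}(n-2|S|)\gamma_S$ and reads off that every nonzero eigenvalue is at least $4$. You instead treat the $W_i$ as commuting stabilizers and identify $P=\prod_i Q_i$ through the orthonormal-basis identity $\E_S\gamma_S\otimes\overline{\gamma_S}=\ket{\psi^+_{2^m}}\bra{\psi^+_{2^m}}$. After that you only need that every joint eigenvector outside the range of $P$ violates at least one stabilizer.

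Your route avoids the sign counting and carries over verbatim to any commuting family of stabilizers whose projector product is the target. The paper's computation gives the full spectrum of $D$ (eigenvalues $4|S|$ for even $|S|$ and $4(n-|S|)$ for odd $|S|$), and so it also shows that the gap $4$ is attained.

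One small correction concerns your closing remark that \Cref{lemma:technical-lemma-0} gives the same estimate. Applied to the $W_i$, it yields $\E_a\norm{(W_1^{a_1}\cdots W_n^{a_n}-I)\ket\phi}^2=2-2\bra\phi P\ket\phi=2\eta\le n\delta$. This gives only $\eta\le n\delta/2$ rather than $n\delta/4$, so that alternative proves $\sqrt{n\delta}$, a factor $\sqrt2$ weaker than the stated bound. Your main argument via the union bound does give the stated constant.
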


\begin{proof}
    Let \(D \coloneq \sum_{i = 1}^n \paren*{\gamma_i \otimes I \otimes I_{\mathrm{aux}}- I \otimes \overline{\gamma_i} \otimes I_\mathrm{aux}}^2\) and \(P\) be the projection onto \( \operatorname{span}\set*{\ket{\psi^+_{2^m}}} \otimes \mathcal{H}_\mathrm{aux}\).
    Since \(\gamma_i^2 = I\) it holds that
    \[
        D = 2nI - 2\sum_i \gamma_i \otimes \overline{\gamma_i} \otimes I_{\mathrm{aux}}.
    \]
    By \Cref{eq:majorana-traces}, the products \(\gamma_S\), \(S \subseteq [n]\), form an orthogonal basis of \(\operatorname{M}_{2^m}(\C)\).

    By identifying matrices with vectors through the isometry
    \[
        \operatorname{vec}\colon \operatorname{M}_{2^m}(\C) \to \C^{2^m} \otimes \C^{2^m}, \quad X \mapsto \sum_{i,j} X_{i,j} \ket{i}\otimes \ket{j},
    \]
    under which \((\gamma_i \otimes \overline{\gamma_i})\operatorname{vec}(X) = \operatorname{vec}(\gamma_i X \gamma_i)\).
    The anticommutation relations of the Majorana operators give, for \(\abs{S} = k\),
    \[
        \sum_{i=1}^n \gamma_i \gamma_S \gamma_i = \sum_{i \notin S} \gamma_i \gamma_S \gamma_i + \sum_{i \in S} \gamma_i \gamma_S \gamma_i  = (-1)^k(n-k)\gamma_S + (-1)^{k-1}k\gamma_S = (-1)^k(n-2k)\gamma_S
    \]
    so \(D\) has eigenvalues
    \[
        p_k = 2n - 2(-1)^k(n-2k).
    \]
    That is, for even \(k\) we have \(p_k = 4k\) and for odd \(k\) we find \(p_k = 4(n - k)\).
    Hence \(p_k = 0\) precisely when \(k = 0\), and every nonzero eigenvalue is at least \(4\).
    Since \(\operatorname{vec}(I) = 2^{m/2}\ket{\psi_{2^m}^+}\), the kernel of \(D\) is the range of \(P\).
    Therefore,
    \[
        D \geq 4(I-P).
    \]

    Apply \Cref{eq:seminorm-sphere} with \(\rho = \ket\phi\bra\phi\), using the orthonormal coordinates determined by \(\hat Q\).
    Together with the hypothesis, this gives
    \begin{align*}
        \sum_{i = 1}^n \norm{(\gamma_i \otimes I \otimes I_\mathrm{aux} - I \otimes \overline{\gamma_i} \otimes I_\mathrm{aux})\ket{\phi}}^2 &= n \E_{x \sim\sigma_n} \norm{(\gamma_{\hat Q}(x) \otimes I \otimes I_\mathrm{aux} - I \otimes \overline{\gamma_{\hat Q}(x)} \otimes I_\mathrm{aux})\ket\phi}^2 \\
        &\leq \delta n
    \end{align*}
    Combining this estimate with \(D \geq 4(I - P)\) gives
    \[
        \norm{(I - P)\ket{\phi}}^2 \leq \frac 14 \bra{\phi}D\ket{\phi} \leq \frac{n\delta}4.
    \]

    Let \(r = \norm{(\bra{\psi_{2^m}^+} \otimes I_\mathrm{aux})\ket{\phi}}\).
    If \(r > 0\), define
    \[
        \ket{\mathrm{aux}} \coloneq \frac{(\bra{\psi_{2^m}^+} \otimes I_{\mathrm{aux}})\ket{\phi}}r,
    \]
    and if \(r = 0\), choose any auxiliary state.
    In either case,
    \begin{align*}
        \norm{\ket{\phi} - \ket{\psi_{2^m}^+} \otimes\ket{\mathrm{aux}}}^2 = 2 - 2 \operatorname{Re} \ip{\phi}{\psi_{2^m}^+ \otimes \mathrm{aux}}
        = 2(1-r)
        \leq 2 (1-r^2) = 2 \norm{(I - P)\ket{\phi}}^2 \leq \frac{n\delta}2.
    \end{align*}
    Taking square roots completes the proof.
\end{proof}

We now combine the linear approximation, approximate multiplicativity, and state-extraction estimates to prove robust self-testing of the Majorana observables.

\begin{theorem}\label{thm:self-testing}
    Let \(m \geq 1\) be an integer, \(0 \leq \varepsilon < 1\), and set \(n = 2m\).
    Consider an \(\varepsilon\)-optimal quantum strategy \(S = (A,B,\ket\psi) \) for the Grothendieck game~\(G_n\).
    Then there exist local isometries \( \Phi_A \colon \mathcal{H}_A \to (\C^2)^{\otimes m} \otimes \mathcal{H}_{A, \mathrm{aux}}\), \(\Phi_B \colon \mathcal{H}_B \to (\C^2)^{\otimes m} \otimes \mathcal{H}_{B, \mathrm{aux}}\), and a state \(\ket{\mathrm{aux}} \in \mathcal{H}_{A, \mathrm{aux}} \otimes \mathcal{H}_{B, \mathrm{aux}}\) such that
    \begin{align*}
        \norm[\big]{(\Phi_A \otimes \Phi_B)\ket{\psi} - \ket{\psi^+_{2^m}} \otimes \ket{\mathrm{aux}}}^2 &\leq \bigO(n^2\varepsilon), \\
        \E_{x \sim \sigma_n} \norm[\big]{ \Phi_A A(x) - (\gamma(x) \otimes I)\Phi_A}_{\rho_A}^2 &\leq \bigO(n\varepsilon), \\
        \E_{y \sim \sigma_n} \norm[\big]{\Phi_B B(y) - (\overline{\gamma(y)} \otimes I)\Phi_B}_{\rho_B}^2 &\leq \bigO(n\varepsilon).
    \end{align*}
\end{theorem}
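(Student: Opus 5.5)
The plan follows the technical outline: apply the state-dependent Gowers--Hatami theorem (\cref{thm:GH irrep}) to the approximate Clifford representations $f_A^Q,f_B^Q$, then remove the dependence on the random basis $Q$ by averaging.

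\textbf{Step 1: approximate representations.} Summing \cref{lemma:technical-lemma-1} over $i\in[n]$ gives
\[
\E_Q\E_{h}\E_i\bigl(\|f_A^Q(h)A(q_i)-f_A^Q(hc_i)\|_{\rho_A}^2+\|f_B^Q(h)B(q_i)-f_B^Q(hc_i)\|_{\rho_B}^2\bigr)\le \bigO(n\varepsilon).
\]
Together with $f^Q(hJ)=-f^Q(h)$, this controls $\E_{g,h}\|f^Q(g)f^Q(h)-f^Q(gh)\|^2_\rho$ for the generating set $\{c_i\}\cup\{J\}$. To pass from generators to arbitrary $h'$, I would use the normal form: right multiplication by $h'$ is a word of at most $n$ generators. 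If the Gowers--Hatami version in \cref{sect:bonus} needs the full two-variable average, this could cost a further factor; I therefore hope that version is phrased with generator averages, which would keep the bound at $\bigO(n\varepsilon)$.

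\textbf{Step 2: rounding and choosing $Q$.} For $n=2m$, $\lambda$ is the unique irrep with $J\mapsto -I$, and $f^Q(J)=-I$ exactly. The theorem should therefore give isometries $\Phi_A^Q,\Phi_B^Q$ into $\C^{2^m}\otimes\mathcal H_{\mathrm{aux}}$ with $\E_i\|\Phi_A^QA(q_i)-(\gamma_i\otimes I)\Phi_A^Q\|^2_{\rho_A}\le\bigO(n\varepsilon)$ on average over $Q$, and similarly for Bob with $\overline{\gamma_i}$. Bob's conjugate appears after composing with complex conjugation of the irrep, or equivalently by using the transpose trick on the maximally entangled state. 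By Markov, I fix a single basis $\hat Q$ for which this bound and the bound of \cref{lemma:technical-lemma-3}, restricted to $\hat Q$, both hold up to constants. This gives $Q$-independent isometries $\Phi_A,\Phi_B$.

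\textbf{Step 3: extending to all $x$.} By \cref{lem:state-dependent-seminorm} with $T_i=\Phi_AA(\hat q_i)-(\gamma_i\otimes I)\Phi_A$,
\[
\E_x\|\Phi_A\tilde A_{\hat Q}(x)-(\gamma_{\hat Q}(x)\otimes I)\Phi_A\|^2_{\rho_A}=\tfrac1n\sum_i\|T_i\|^2_{\rho_A}.
\]
This is $\bigO(\varepsilon)$ times the per-$i$ average, so at worst $\bigO(n\varepsilon)$. Combining with \cref{lemma:technical-lemma-3} through the triangle inequality replaces $\tilde A_{\hat Q}(x)$ by $A(x)$. The Majorana representation is defined relative to a basis, so I would conjugate by the unitary $U$ implementing the rotation $\hat Q\mapsto$ standard basis. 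Such a $U$ exists in $\Pin(n)$ because $\gamma_{\hat Q}$ is an equivalent irrep. Absorbing $U\otimes I$ and $\overline U\otimes I$ into the two isometries turns $\gamma_{\hat Q}(x)$ into $\gamma(x)$ and leaves $\ket{\psi^+}$ invariant. This proves the second and third bounds.

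\textbf{Step 4: the state.} Set $\ket\phi=(\Phi_A\otimes\Phi_B)\ket\psi$. Using the isometry bounds and consistency \eqref{eq:consistency}, I get $\E_x\|(\gamma(x)\otimes I-I\otimes\overline{\gamma(x)})\ket\phi\|^2\le\bigO(n\varepsilon)$. \cref{lemma:technical-lemma-2} with $\delta=\bigO(n\varepsilon)$ then gives squared distance $\bigO(n^2\varepsilon)$.

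The main obstacle is Step 1, namely matching the hypothesis form of \cref{thm:GH irrep} so that the rounding costs only $\bigO(n\varepsilon)$ rather than an extra power of $n$.
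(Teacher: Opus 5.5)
Your proposal is correct and follows essentially the paper's proof. You fix a basis $\hat Q$ that is good for both the averaged \cref{lemma:technical-lemma-1} and \cref{lemma:technical-lemma-3}, round $f_A^{\hat Q}$ with $\lambda$ and $f_B^{\hat Q}$ with the dual $\lambda^\star$ via \cref{thm:GH irrep}, extend linearly with \cref{lem:state-dependent-seminorm}, and conjugate by a unitary $U_{\hat Q}$; applying \cref{lemma:technical-lemma-2} after rather than before this change of basis is equivalent. The obstacle you flag in Step 1 does not arise, because \cref{thm:GH irrep} is an identity for each fixed $h$: taking $h=c_i$ directly gives $\|V_\lambda A(\hat q_i)-(\gamma_i\otimes I)V_\lambda\|_{\rho_A}^2=\E_{g\in C_n}\|f_A^{\hat Q}(g)A(\hat q_i)-f_A^{\hat Q}(gc_i)\|_{\rho_A}^2$, and the character hypothesis follows from $f_A^{\hat Q}(gJ)=-f_A^{\hat Q}(g)$, so no passage to general group elements is needed (also, in Step 3 the sphere average \emph{equals} the per-$i$ average, which is already $O(n\varepsilon)$).
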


\begin{proof}
    We first choose an orthonormal basis for which both the approximate multiplicativity and linear approximation errors are small.
    Averaging \cref{lemma:technical-lemma-1} over \(i \in [n]\) and combining the result \cref{lemma:technical-lemma-3} gives
    \[
        \E_{Q \in O(n)} \sum_{P \in \set{A,B}} \paren*{n^{-1} \E_{h \in C_n} \E_{i \in [n]} \norm{f_P^Q(h)P(q_i) - f_P^Q(hc_i)}_{\rho_P}^2 + \E_{x \sim\sigma_n}\norm{P(x) - \tilde P_Q(x)}_{\rho_P}^2} \leq \bigO(\varepsilon).
    \]
    Thus, there exists a basis~\(\hat Q = (\hat q_1, \dots, \hat q_n)\) such that
    \[
        \E_{h \in C_n}\E_{i \in [n]}\paren*{ \norm{f_A^{\hat Q}(h)A({\hat q_i}) - f_A^{\hat Q}(hc_i)}_{\rho_A}^2 + \norm{f_B^{\hat Q}(h)B({\hat q_i}) - f_B^{\hat Q}(hc_i)}_{\rho_B}^2} \leq \bigO(n\varepsilon)
    \]
    and simultaneously
    \begin{equation}\label{eq:lin-approx-operator}
        \E_{x \sim \sigma_n} \paren*{\norm{A(x) - \tilde A_{\hat Q}(x)}_{\rho_A}^2 + \norm{B(x) - \tilde B_{\hat Q}(x)}_{\rho_B}^2}\leq \bigO(\varepsilon).
    \end{equation}

    We next apply the Gowers--Hatami theorem to extract the Majorana observables on this basis.
    By \Cref{eq:clifford-character} and \(f_P^{\hat Q}(gJ) = -f_P^{\hat Q}(g)\), the character condition in \cref{thm:GH irrep} holds:
    \[
        d_\lambda \E_{h \in C_n} \overline{\chi_\lambda(h)}f_P^{\hat Q}(gh) = \frac{d_\lambda^2}{\abs{C_n}}(f_P^{\hat Q}(g) - f_P^{\hat Q}(gJ)) = f_P^{\hat Q}(g), \quad P\in \{A,B\}.
    \]
    Let \(\lambda^\star\) denote the dual representation, given in the chosen dual basis by
    \[
        \lambda^\star(g) = \lambda(g^{-1})^T = \overline{\lambda(g)}.
    \]
    Since \(c_i^{-1} = c_i\), we have \(\lambda^\star(c_i) = \lambda(c_i)^T = \overline{\gamma_i}\).
    Moreover, \(\chi_{\lambda^\star} = \overline{\chi_\lambda} = \chi_\lambda\), so the same character condition holds for \(f_B^{\hat Q}\) with respect to \(\lambda^\star\).
    Applying \cref{thm:GH irrep} to \(f_A^{\hat Q}\) with representation \(\lambda\), and to \(f_B^{\hat Q}\) with representation \(\lambda^\star\) gives
    \[
        \E_{i \in [n]}\paren*{\norm{(\gamma_i\otimes I_{A,\mathrm{aux}})V_\lambda - V_\lambda A({\hat q_i})}_{\rho_A}^2 + \norm{(\overline{\gamma_i} \otimes I_{B,\mathrm{aux}})V_{\lambda^\star} - V_{\lambda^\star} B({\hat q_i})}_{\rho_B}^2} \leq \bigO(n\varepsilon)
    \]
        for isometries \(V_\lambda \colon \mathcal{H}_A \to \C^{2^m} \otimes \mathcal{H}_{A,\mathrm{aux}}\), \(V_{\lambda^\star} \colon \mathcal{H}_B \to (\C^{2^m})^* \otimes \mathcal{H}_{B, \mathrm{aux}}\) where \(\mathcal{H}_{A, \mathrm{aux}} = (\C^{2^m})^* \otimes \mathcal{H}_A\), \( \mathcal{H}_{B, \mathrm{aux}} = \C^{2^m} \otimes \mathcal{H}_B\).
        We identify \( (\C^{2^m})^*\) with \(\C^{2^m}\) using the dual of the fixed standard basis.

    We extend these estimates from the basis vectors to the whole sphere.
    By \Cref{eq:seminorm-sphere} we have
    \[
        \E_{x\sim \sigma_n} \norm{(\gamma_{\hat Q}(x) \otimes I_{A,\mathrm{aux}})V_\lambda - V_\lambda\tilde A_{\hat Q}(x)}_{\rho_A}^2 = \frac 1n \sum_i \norm{(\gamma_i\otimes I_{A,\mathrm{aux}})V_\lambda - V_\lambda A({\hat q_i})}_{\rho_A}^2 \leq \bigO(n\varepsilon).
    \]
    Combining this estimate and its Bob analogue with \Cref{eq:lin-approx-operator}, the triangle inequality, and the isometry invariance in \Cref{lem:state-dependent-seminorm}, we obtain
    \begin{equation}
        \E_{x\sim \sigma_n} \paren*{\norm*{(\gamma_{\hat Q}(x) \otimes I_{A, \mathrm{aux}})V_\lambda - V_\lambda A(x)}_{\rho_A}^2 + \norm*{(\overline{\gamma_{\hat Q}(x)} \otimes I_{B, \mathrm{aux}})V_{\lambda^\star} - V_{\lambda^\star} B(x)}_{\rho_B}^2} \leq \bigO(n\varepsilon) \label{eq:something}.
    \end{equation}
    Regroup the output factors as \(\C^{2^m} \otimes \C^{2^m} \otimes \mathcal H_\mathrm{aux}\) where \(\mathcal H_\mathrm{aux} = \mathcal H_{A,\mathrm{aux}} \otimes \mathcal H_{B,\mathrm{aux}}\).
    By the triangle inequality, \Cref{eq:seminorm-purification}, and isometry invariance (\Cref{lem:state-dependent-seminorm}), we get
    \begin{align*}
        &\E_{x \sim\sigma_n} \norm{(\gamma_{\hat Q}(x) \otimes I \otimes I_\mathrm{aux}- I \otimes \overline{\gamma_{\hat Q}(x)} \otimes I_\mathrm{aux})(V_\lambda \otimes V_{\lambda^\star})\ket\psi}^2 \\
        &\leq 3 \E_{x \sim\sigma_n} \norm{(((\gamma_{\hat Q}(x) \otimes I_{A, \mathrm{aux}})V_\lambda - V_\lambda A(x)) \otimes V_{\lambda^\star})\ket\psi}^2 \\
        &\quad + 3 \E_{x\sim\sigma_n} \norm{(V_\lambda \otimes V_{\lambda^\star})(A(x) \otimes I - I \otimes B(x))\ket\psi}^2 \\
        &\quad + 3 \E_{x \sim\sigma_n} \norm{(V_\lambda \otimes (V_{\lambda^\star} B(x) - (\overline{\gamma_{\hat Q}(x)} \otimes I_{B, \mathrm{aux}})V_{\lambda^\star}))\ket\psi}^2 \\
        &\leq \bigO(n\varepsilon)
    \end{align*}
    where the last inequality follows from \Cref{eq:consistency,eq:something}.

    Applying \cref{lemma:technical-lemma-2} to \(\ket{\phi} = (V_\lambda \otimes V_{\lambda^\star})\ket{\psi}\) with \(\delta = \bigO(n\varepsilon)\), we get
    \begin{equation}
        \norm{(V_\lambda \otimes V_{\lambda^\star})\ket\psi - \ket{\psi_{2^m}^+} \otimes \ket{\mathrm{aux}}}^2 \leq \bigO(n^2\varepsilon) \label{eq:state-close}
    \end{equation}

Since \(\hat Q\) is an orthonormal basis, the matrices \(\gamma(\hat q_i)\) satisfy the same relations as \(\gamma_i\) and generate the same matrix algebra.
They therefore define an irreducible representation of \(C_n\) in which \(J\) acts as \(-I\).
By uniqueness of this representation, there exists a unitary \(U_{\hat Q}\) such that
\begin{equation}\label{eq:majorana-change-basis}
        U_{\hat Q} \gamma_i U_{\hat Q}^\dagger = \gamma(\hat q_i), \quad i \in [n].
\end{equation}
Hence,
\begin{align*}
    U_{\hat Q}\gamma_{\hat Q}(x)U_{\hat Q}^\dagger &= \sum_i \ip{\hat q_i}{x}\gamma(\hat q_i) = \gamma(x)
\end{align*}
The maximally entangled state in invariant under \(U_{\hat Q} \otimes \overline{U_{\hat Q}}\).
Define \(\Phi_A = (U_{\hat Q} \otimes I_{A, \mathrm{aux}})V_\lambda\), \(\Phi_B = (\overline{U_{\hat Q}} \otimes I_{B, \mathrm{aux}})V_{\lambda^\star}\).
Isometry invariance (\Cref{eq:state-dependent-seminorm}) and \Cref{eq:something} give
\[
    \E_{x \sim \sigma_n} \paren*{ \norm*{ \Phi_A A(x) - (\gamma(x) \otimes I)\Phi_A}_{\rho_A}^2 + \norm*{ \Phi_B B(x) - (\overline{\gamma(x)} \otimes I)\Phi_B}_{\rho_B}^2} \leq \bigO(n\varepsilon).
\]
This proves the observable bounds.
The state bound follows from \Cref{eq:state-close} and the invariance of \(\ket{\psi_{2^m}^+}\) under \(U_{\hat Q}\otimes \overline{U_{\hat Q}}\).
\end{proof}

\begin{remark}
Among bounds that hold uniformly over all quantum strategies and are linear in \(\varepsilon\), the quadratic dependence on \(n\) in the state bound and the linear dependence on \(n\) in each observable bound in \cref{thm:self-testing} are optimal.
\end{remark}
\begin{proof}
Let \(n = 2m \geq 4\).
We use the projection strategy \(a = b\) from \Cref{eq:projection-strategy}, regarded as taking values in \(\R^{n-2}\).
The construction in the proof of \cref{thm:tsirelson} realizes this strategy on \(\ket{\psi_{2^{m-1}}^+}\), with observables
\[
    A(x) = \sum_{j = 1}^{n-2} a_j(x) \tilde \gamma_j, \quad B(y) = \overline{A(y)}
\]
where \(\tilde \gamma_1, \dots, \tilde \gamma_{n-2}\) are the Majorana operators on \(\C^{2^{m-1}}\).
This realization preserves the correlations and hence the bias.
By \Cref{eq:projection-bias}, the resulting quantum strategy is \((n-1)^{-2}\)-optimal.
For any local isometries \(\Phi_A, \Phi_B\) and auxiliary state \(\ket{\mathrm{aux}}\), its state error satisfies
\[
    \norm*{(\Phi_A \otimes \Phi_B)\ket{\psi_{2^{m-1}}^+} - \ket{\psi_{2^m}^+} \otimes \ket{\mathrm{aux}}}^2 = 2 - 2\operatorname{Re} \bra{\psi_{2^{m-1}}^+} (\Phi_A^\dagger \otimes \Phi_B^\dagger)(\ket{\psi_{2^m}^+} \otimes \ket{\mathrm{aux}}) \geq 2 - \sqrt 2
\]
because local isometries preserve the Schmidt rank \(2^{m-1}\) of the initial state, whereas every squared Schmidt coefficient of \(\ket{\psi_{2^m}^+} \otimes \ket{\mathrm{aux}}\) is at most \(2^{-m}\).
Their absolute overlap is therefore at most \(1 / \sqrt 2\).
Hence, a uniform state bound of the form \(f(n)\varepsilon\) must satisfy, for this strategy,
\[
    2- \sqrt 2 \leq \frac{f(n)}{(n-1)^2}.
\]
It follows that \(f(n) = \Omega(n^2)\) along even dimensions.

For the observable bounds, the same strategy has reduced state \(\rho_A = I/2^{m-1}\).
For any local isometry~\(\Phi_A\), \Cref{eq:optimality-self-testing} gives
\begin{align*}
    \E_{x \sim \sigma_n} \norm*{\Phi_A A(x) - (\gamma(x) \otimes I)\Phi_A}_{\rho_A}^2 &= 2 - \frac{2}{2^{m-1}(n-1)}\sum_{j = 1}^{n-2} \operatorname{Re}\tr\left[ \tilde \gamma_j \Phi_A^\dagger (\gamma_j \otimes I)\Phi_A \right] \\
    &\geq 2 - \frac{2(n-2)}{n-1} = \frac{2}{n-1}
\end{align*}
where each trace has absolute value at most \(2^{m-1}\) by the Cauchy--Schwarz inequality and \(\norm{\Phi_A \tilde \gamma_j}_{\mathrm{HS}} = \norm{(\gamma_j \otimes I)\Phi_A}_{\mathrm{HS}} = \sqrt {2^{m-1}}\).
The same argument applies to Bob.
Since this strategy is \( (n-1)^{-2}\)-optimal, any uniform bound \(f(n)\varepsilon\) for the observable estimates must satisfy
\[
    \frac{2}{n-1} \leq \frac{f(n)}{(n-1)^2}
\]
and hence \(f(n) = \Omega(n)\).
\end{proof}

The averaged observable bounds in \cref{thm:self-testing} also yield simultanoues bounds on all measurements indexed by a random orthonormal basis.
For each fixed strategy, a Haar-random basis \(Q = (q_1, \dots, q_n)\) certifies all \(2n\) observables \(A(q_i)\) and \(B(q_i)\) with high probability.
A change of basis on the extracted systems expresses this guarantee in terms of the fixed Majorana operators \(\gamma_i\) and \(\overline{\gamma_i}\).

\begin{corollary}\label{cor:HighProb}
    Let \(n = 2m \geq 2\) be even, let \(0 \leq \varepsilon< 1\), and let \( 0 < \delta < 1\).
    Fix an \(\varepsilon\)-optimal quantum strategy \(S = (A,B,\ket\psi)\) for the Grothendieck game \(G_n\), and let \(Q = (q_1, \dots, q_n)\) be a Haar-random orthonormal basis of \(\R^n\).
    Then with probability at least \(1-\delta\), there exist local isometries \(\Psi_A^Q \colon \mathcal{H}_A \to (\C^2)^{\otimes m} \otimes \mathcal H_{A,\mathrm{aux}}, \Psi_B^Q \colon \mathcal{H}_B \to (\C^2)^{\otimes m} \otimes \mathcal H_{B, \mathrm{aux}}\), and a state \(\ket{\mathrm{aux}}\in \mathcal{H}_{A,\mathrm{aux}} \otimes \mathcal{H}_{B, \mathrm{aux}} \) such that
    \begin{align*}
        \norm*{ (\Psi_A^Q \otimes \Psi_B^Q)\ket\psi - \ket{\psi_{2^m}^+}\otimes \ket{\mathrm{aux}}}^2 &\leq \bigO(n^2\varepsilon)
    \end{align*}
    and, for every \(i \in [n]\),
    \begin{align*}
        \norm*{\Psi_A^Q A(q_i) - (\gamma_i \otimes I)\Psi_A^Q}_{\rho_A}^2 &\leq \bigO(n^2\varepsilon/\delta), \\
        \norm*{\Psi_B^Q B(q_i) - (\overline{\gamma_i} \otimes I)\Psi_B^Q}_{\rho_B}^2 &\leq \bigO(n^2\varepsilon/\delta).
    \end{align*}
\end{corollary}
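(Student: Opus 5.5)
Starting from \cref{thm:self-testing}, we take the isometries \(\Phi_A,\Phi_B\) and the auxiliary state it provides; these do not depend on \(Q\). For a fixed orthonormal basis \(Q\), let \(U_Q\) be the unitary from \Cref{eq:majorana-change-basis}, so that \(U_Q\gamma_iU_Q^\dagger=\gamma(q_i)\) for all \(i\). We then set
\[
\Psi_A^Q=(U_Q^\dagger\otimes I)\Phi_A,\qquad \Psi_B^Q=(U_Q^T\otimes I)\Phi_B .
\]
Here \(U_Q^T=\overline{U_Q}^\dagger\), and conjugating the relation gives \(\overline{U_Q}\,\overline{\gamma_i}\,\overline{U_Q}^\dagger=\overline{\gamma(q_i)}\).

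The state bound transfers for every \(Q\), with no loss. Indeed, \(\ket{\psi^+_{2^m}}\) is invariant under \(U_Q^\dagger\otimes U_Q^T=(U_Q\otimes\overline{U_Q})^\dagger\), and applying a unitary does not change the norm. So the state bound \(\bigO(n^2\varepsilon)\) from \cref{thm:self-testing} holds deterministically.

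For the observables, unitary invariance of the seminorm gives
\[
\norm{\Psi_A^Q A(q_i)-(\gamma_i\otimes I)\Psi_A^Q}_{\rho_A}
=\norm{\Phi_A A(q_i)-(\gamma(q_i)\otimes I)\Phi_A}_{\rho_A},
\]
and the analogous identity holds for Bob. Define the random variable
\[
X(Q)=\sum_{i=1}^n\Bigl(\norm{\Phi_A A(q_i)-(\gamma(q_i)\otimes I)\Phi_A}_{\rho_A}^2+\norm{\Phi_B B(q_i)-(\overline{\gamma(q_i)}\otimes I)\Phi_B}_{\rho_B}^2\Bigr).
\]
For a Haar-random \(Q\), each \(q_i\) is distributed according to \(\sigma_n\). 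Linearity of expectation and the two observable bounds of \cref{thm:self-testing} then give \(\E_Q X(Q)\le n\cdot\bigO(n\varepsilon)=\bigO(n^2\varepsilon)\). By Markov's inequality, \(X(Q)\le\bigO(n^2\varepsilon)/\delta\) with probability at least \(1-\delta\). Each individual term is bounded by the sum \(X(Q)\), so on this event all \(2n\) bounds hold simultaneously.

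This is essentially the whole argument, and none of the steps is hard. The points to verify carefully are that the conjugate change of basis \(U_Q^T\) is correct for Bob, and that the maximally entangled state is invariant under the product \(U_Q^\dagger\otimes U_Q^T\). Measurability of \(Q\mapsto X(Q)\), needed for Markov's inequality, follows from measurability of \(A\) and \(B\). The choice of \(U_Q\) only needs to be made on the good event, where any choice works since \(X(Q)\) does not involve \(U_Q\).
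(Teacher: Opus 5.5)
Your proposal is correct and follows the paper's proof. You use the same isometries from \cref{thm:self-testing}, the same change of basis $\Psi_A^Q=(U_Q^\dagger\otimes I)\Phi_A$ and $\Psi_B^Q=(U_Q^T\otimes I)\Phi_B$, and the same invariance of $\ket{\psi^+_{2^m}}$ under $U_Q^\dagger\otimes U_Q^T$.

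The only difference is that you apply Markov's inequality once to the total error $X(Q)$. The paper instead applies it to each of the $2n$ terms with threshold $2Cn^2\varepsilon/\delta$ and then takes a union bound. Both give the same $\bigO(n^2\varepsilon/\delta)$ guarantee, and your version even bounds the sum. For $\varepsilon=0$, say directly that $X(Q)=0$ almost surely, since Markov's inequality needs a positive threshold.
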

\begin{proof}
    Apply \cref{thm:self-testing} and fix the resulting isometries \(\Phi_A,\Phi_B\) and auxiliary state.
    Since each \(q_i\) is uniformly distributed on \(S^{n-1}\), there is a universal constant \(C\) such that, for every \(i \in [n]\), we have
    \begin{align*}
    \E_{Q\in O(n)} \norm*{ \Phi_A A(q_i) - (\gamma(q_i) \otimes I)\Phi_A}_{\rho_A}^2 \leq Cn\varepsilon
\end{align*}
for some constant \(C\) and the same estimate holds for Bob.
For \(\varepsilon > 0\), Markov's inequality gives
\[
    \Pr_{Q\in O(n)}\left[ \norm*{\Phi_A A(q_i) -(\gamma(q_i) \otimes I)\Phi_A}_{\rho_A}^2 > \frac{2Cn^2 \varepsilon}{\delta}\right] \leq \frac{\delta}{2n} .
\]
The same estimate holds for Bob.
Taking a union bound over both players and all \(i \in [n]\), we conclude that, with probability at least \(1-\delta\), all \(2n\) squared norms are simultaneously bounded by \(2Cn^2\varepsilon/\delta\).
For \(\varepsilon = 0\), all these squared norms vanish almost surely.

As in \Cref{eq:majorana-change-basis}, choose a unitary \(U_Q\) satisfying \(U_Q \gamma_i U_Q^\dagger = \gamma(q_i)\).
Define \(\Psi_A^Q = (U_Q^\dagger \otimes I)\Phi_A,  \Psi_B^Q = (U_Q^T \otimes I)\Phi_B.\)
This change of basis replaces \(\gamma(q_i)\) and \(\overline{\gamma(q_i)}\) by \(\gamma_i\) and \(\overline{\gamma_i}\), respectively.
The observable bounds follow from isometry invariance (\Cref{lem:state-dependent-seminorm}).
The state bound is unchanged because \((U_Q^\dagger \otimes U_Q^T)\ket{\psi_{2^m}^+} = \ket{\psi_{2^m}^+}\).
\end{proof}

\subsection{Robust Self-Test for the Spin Representation}
\label{sub:robust_self_test_for_the_spin_representation}

In \cref{thm:self-testing} we proved that any near-optimal strategy for the Grothendieck game is close (up to local isometries) to the ideal strategy given by the Majorana observables~$\gamma(x)$ (acting on half of a maximally entangled state).
Recall from \cref{sec:robust-majorana-self-test} that the Clifford algebra is generated (as an algebra) by self-adjoint unitaries~$c(x)$ for~$x \in S^{n-1}$ subject to the relation $\{c(x),c(y)\}=2\langle x, y \rangle$; the assignment $c(x) \mapsto \gamma(x)$ defines an irreducible representation (unique up to isomorphism) if $n=2m$, which we assume throughout this section.
Here we show that the self-testing result extends to the pin group $\Pin(n)$, which we recall is generated by the elements $c(x)$, and its representation
\begin{align*}
    \Gamma \colon \Pin(n) \to \U(\C^{2^m}), \quad c(x_1) \cdots c(x_k) \mapsto \gamma(x_1) \cdots \gamma(x_k) \quad (k\in\N, \, x_1,\dots,x_k \in S^{n-1}),
\end{align*}
which is known as the \emph{spin representation}; it is irreducible.
The pin group is a double cover of the orthogonal group~$\Or(n)$, for which the above is a projective representation: any element in $\Or(n)$ lifts to two elements~$\pm g$ such that $\Gamma(-g)=-\Gamma(g)$.

Now, any element of the pin group can be written as a signed product of no more than~$n$ generators~$c(x)$:
for every $g \in \Pin(n)$, there exist $s\in\{\pm1\}$, $0 \leq k \leq n$, and $x_j \in S^{n-1}$ such that
\begin{align}\label{eq:pin decomposition}
    g = s \, c(x_1) \cdots c(x_k).
\end{align}
The main challenge in lifting the self-testing result from the generators to the group is that such a decomposition is \emph{not} unique, so it is not a priori clear how the operators $A(x)$ can be extended to an approximate representation of the pin group.
However, given a group element~$g$, a decomposition~\eqref{eq:pin decomposition} can certainly be computed in a measurable way (for example, by an algorithm decomposing the corresponding orthogonal transformation  into a product of Householder reflections, lifting to the pin group, and choosing the appropriate sign).
Thus, there exist measurable functions $s=s(g)$, $k=k(g)$, $x_j=x_j(g)$ such that \cref{eq:pin decomposition} holds.

Now, a uniformly (Haar) random choice of $g$ need \emph{not} give rise to uniformly random choices of the unit vectors~$x_j$, so \cref{thm:self-testing} still cannot be applied directly, but we can use a symmetrization trick to overcome this issue.
To this end, we consider for a given rotation $R \in \SO(n)$ the automorphism $\alpha_R \colon \Pin(n) \to \Pin(n)$ defined by $\alpha_R(c(x)) \coloneqq c(Rx)$ for $x\in S^{n-1}$;
this is well-defined because $\{c(Rx),c(Ry)\} = \{c(x),c(y)\}$.
Then, instead of decomposing $g$, we decompose $\alpha_R^{-1}(g) = s c(x_1) \cdots c(x_k)$.
Then, $g = s c(R x_1) \cdots c(R x_k)$.
Crucially, for independent Haar random choices of $g$ and $R$, all vectors $R x_j$ will be uniformly distributed; while they need not be independent for different $j$, this will still be sufficient to apply \cref{thm:self-testing} and prove our self-testing result.
Formally, given Alice's observables $A(x)$ in a quantum strategy, we define
\begin{align*}
    A_R \colon \Pin(n) \to \U(\mathcal H_A), \quad A_R(g) \coloneq s(h) A(R x_1(h)) \cdots A(R x_{k(h)}(h)) \qquad (h \coloneqq \alpha_R^{-1}(g)).
\end{align*}
We similarly define $B_R$ using Bob's observables $B(y)$.
Crucially, these are implementable by Alice and Bob if they have unitary access to their observables.
We now show that for a near-optimal strategy for the Grothendieck game, the operators $A_R(g)$ and $B_R(g)$ approximately implement the spin representation.

\begin{theorem}[Spin representation self-test]\label{thm:spinselftest}
In the situation of \cref{thm:self-testing}, we also have (for the same local isometries and state):
    \begin{align*}
        \E_{R \in \SO(n)} \E_{g \in \Pin(n)} \norm*{\Phi_A A_R(g) - (\Gamma(g) \otimes I)\Phi_A}_{\rho_A}^2 &\leq \bigO(n^3 \varepsilon), \\
        \E_{R \in \SO(n)} \E_{g \in \Pin(n)} \norm*{\Phi_B B_R(g) - (\overline{\Gamma(g)}\otimes I)\Phi_B}_{\rho_B}^2 &\leq \bigO(n^3 \varepsilon).
    \end{align*}
\end{theorem}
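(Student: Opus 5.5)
We telescope the product error factor by factor, using Bob's operators to move each factor off Alice's side so that the seminorm is always evaluated on the fixed state $\ket\psi$.

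Fix $R$ and $g$, write $h=\alpha_R^{-1}(g)$, $k=k(h)$, $y_j = Rx_j(h)$, and note $\Gamma(g)=s\,\gamma(y_1)\cdots\gamma(y_k)$. The key distributional fact is that for $g$ Haar on $\Pin(n)$ and $R$ independent Haar on $\SO(n)$, each $y_j$ is $\sigma_n$-distributed. Indeed, replacing $R$ by $RU$ for a fixed $U\in\SO(n)$ and $g$ by $\alpha_U(g)$ preserves the joint Haar measure, since $\alpha_U$ preserves Haar measure on $\Pin(n)$. It also gives $h\mapsto h$ and $y_j\mapsto y_j$. More directly, substituting $R\mapsto VR$, $g\mapsto\alpha_V(g)$ maps $y_j\mapsto Vy_j$, so the law of each $y_j$ (restricted to the event $k\ge j$) is $\SO(n)$-invariant, hence a multiple of $\sigma_n$.

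With this, write $E_A(x)=\Phi_A A(x)-(\gamma(x)\otimes I)\Phi_A$, and similarly $E_B$. We have $\E_{x}\norm{E_A(x)}_{\rho_A}^2=\bigO(n\eps)$ by \cref{thm:self-testing}, and the consistency bound \eqref{eq:consistency} gives $\E_x\norm{(A(x)\otimes I-I\otimes B(x))\ket\psi}^2=\bigO(\eps)$.

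\textbf{Telescoping.} Set $\ket{\phi}=(\Phi_A\otimes\Phi_B)\ket\psi$. I would write
\[
\Phi_A A(y_1)\cdots A(y_k) - (\gamma(y_1)\cdots\gamma(y_k)\otimes I)\Phi_A = \sum_{j=1}^k (\gamma(y_1)\cdots\gamma(y_{j-1})\otimes I)\,E_A(y_j)\,A(y_{j+1})\cdots A(y_k).
\]
The seminorm of the $j$-th term is $\norm{E_A(y_j)A(y_{j+1})\cdots A(y_k)}_{\rho_A}$, but the trailing unitaries change the state, so the bound for $E_A(y_j)$ on $\rho_A$ cannot be applied directly. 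To fix this, I would use the consistency relations in reverse order, moving each trailing factor to Bob: $(A(y_{j+1})\cdots A(y_k)\otimes I)\ket\psi\approx (I\otimes B(y_k)\cdots B(y_{j+1}))\ket\psi$. The error of this replacement is at most $\sum_{l>j}\norm{(A(y_l)\otimes I - I\otimes B(y_l))\ket\psi}$, because each swap is conjugated only by unitaries. Once the trailing factors sit on Bob's side, they commute with $E_A(y_j)\otimes I$ and are unitary, so the $j$-th term is bounded by $\norm{E_A(y_j)}_{\rho_A}$ plus the consistency errors.

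Thus the total error is at most $\sum_j \norm{E_A(y_j)}_{\rho_A} + \sum_j\sum_{l>j}c_l$, where $c_l$ denotes the $l$-th consistency error. I would simplify this to $\sum_j\norm{E_A(y_j)}_{\rho_A}+n\sum_l c_l$.

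\textbf{Squaring and averaging.} Cauchy--Schwarz over at most $n$ terms gives a squared error of
\[
\lesssim n\sum_j\norm{E_A(y_j)}^2_{\rho_A} + n^2\cdot n\sum_l c_l^2 .
\]
Taking expectations and using that each $y_j$ is marginally uniform, the first part is $n\cdot n\cdot\bigO(n\eps)=\bigO(n^3\eps)$ and the second is $n^3\cdot n\cdot\bigO(\eps)$. The second part is a factor $n$ too large, so the consistency contribution must be organized more carefully. Instead of using a term-by-term triangle inequality, I would bound the swap $(A(y_{j+1})\cdots A(y_k)\otimes I-I\otimes B(y_k)\cdots B(y_{j+1}))\ket\psi$ as a single telescoping sum whose squared norm is at most $n\sum_l c_l^2$. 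This gives a total of $n\sum_j(\norm{E_A(y_j)}^2+n\sum_l c_l^2)$, which is $\bigO(n^3\eps)$ in expectation.

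Bob's statement follows by the symmetric argument, moving Bob's trailing factors to Alice and using $\overline{\Gamma(g)}=s\,\overline{\gamma(y_1)}\cdots\overline{\gamma(y_k)}$.

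\textbf{Main obstacles.} The main obstacle is the uniformity of the $y_j$, which requires checking the measurability and invariance claim on the events $\{k\ge j\}$ (the $y_j$ need not be independent, but only marginals are used). A secondary issue is tracking the power of $n$ so that it comes out as exactly $n^3$.
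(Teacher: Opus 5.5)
Your overall plan (telescope, push trailing factors to Bob so every error is evaluated on $\ket\psi$, use that each $y_j$ is uniform, then Cauchy--Schwarz) is the right one. However, your bookkeeping of the consistency errors gives $\bigO(n^4\varepsilon)$, not $\bigO(n^3\varepsilon)$, and the refinement you propose changes nothing. Write $e_j=\lVert E_A(y_j)\rVert_{\rho_A}$. The second part of your final expression is $n\sum_{j=1}^{k} n\sum_{l} c_l^2\le n^3\sum_l c_l^2$. This is exactly the term you had already rejected, because bounding $(\sum_{l>j}c_l)^2\le (k-j)\sum_{l>j}c_l^2$ is the same Cauchy--Schwarz step as before. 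Since $\E\sum_{l\le k}c_l^2=\bigO(n\varepsilon)$ (up to $n$ summands, each $\bigO(\varepsilon)$), the expectation is $\bigO(n^4\varepsilon)$. The loss is structural: in your decomposition each $c_l$ is charged once for every $j<l$. After the term-by-term triangle inequality, the consistency error is therefore $\sum_l (l-1)c_l$ rather than $\bigO(\sum_l c_l)$. Using only the marginal bounds $\E c_l^2=\bigO(\varepsilon)$, no later rearrangement recovers the lost factor of $n$.

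The paper avoids this by telescoping in a different order. For fixed $x_1,\dots,x_k$ it first moves $A(x_k),\dots,A(x_2)$ to Bob once, always taking the rightmost remaining factor, which acts directly on $\ket\psi$. It then alternately replaces $\Phi_A A(x_\ell)$ by $(\gamma(x_\ell)\otimes I)\Phi_A$ and moves $B(x_{\ell+1})$ back to Alice. Each consistency error then appears only twice, so the error is at most $\sum_\ell e_\ell+2\sum_{\ell\ge 2}c_\ell$. Its square is at most $5k(\sum_\ell e_\ell^2+\sum_\ell c_\ell^2)$, and averaging over $R$ gives $\bigO(n^3\varepsilon)+\bigO(n^2\varepsilon)$.

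You can also rescue your own decomposition by not applying the triangle inequality separately for each $j$. Set $d_l=(A(y_l)\otimes I-I\otimes B(y_l))\ket\psi$. Then the consistency part of the total error vector is $\sum_l M_l d_l$, where
\[
M_l=\Bigl(\sum_{j<l}(\gamma(y_1)\cdots\gamma(y_{j-1})\otimes I)\,E_A(y_j)\,A(y_{j+1})\cdots A(y_{l-1})\Bigr)\otimes B(y_k)\cdots B(y_{l+1}).
\]
The Bob factor does not depend on $j$, and the inner sum telescopes to $\Phi_A A(y_1)\cdots A(y_{l-1})-(\gamma(y_1)\cdots\gamma(y_{l-1})\otimes I)\Phi_A$. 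Hence $\lVert M_l\rVert\le 2$, and you recover the paper's bound.

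A smaller point: your first invariance claim is false. The map $(R,g)\mapsto(RU,\alpha_U(g))$ sends $h$ to $\alpha_U^{-1}\alpha_R^{-1}\alpha_U(g)$, which is not $h$ in general. Your second argument ($R\mapsto VR$) is correct. The paper's route is simpler still: $h=\alpha_R^{-1}(g)$ is Haar and independent of $R$, and for each fixed $h$ the vector $Rx_j(h)$ is uniform on $S^{n-1}$.
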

\begin{proof}
We only prove the first inequality; the second one is completely analogous.
First note that the second estimate in \cref{thm:self-testing} reads:
\begin{align}\label{eq:bound 1}
    \E_{x \sim \sigma_n} \norm*{\Phi_A A(x) - (\gamma(x) \ot I)\Phi_A}_{\rho_A}^2 &\leq \bigO(n \varepsilon).
\end{align}
Furthermore, we also know from \cref{lemma:average-clifford-relations} that
\begin{align}\label{eq:bound 2}
    \E_{x \sim \sigma_n} \norm{(A(x) \otimes I - I \otimes B(x))\ket{\psi}}^2 &\leq \bigO(\varepsilon)
\end{align}
Now, for any fixed choice of unit vectors $x_1,\dots,x_k \in S^{n-1}$, where $k \leq n$, we have
\begin{align*}
&\quad \norm*{\parens*{ \Phi_A A(x_1) \cdots A(x_k) \ot I - (\gamma(x_1) \cdots \gamma(x_k) \ot I) \Phi_A \otimes I } \ket\psi} \\
&\leq \norm*{\parens*{ \Phi_A A(x_1) \ot B(x_k) \cdots B(x_2) - (\gamma(x_1) \cdots \gamma(x_k) \ot I) \Phi_A \otimes I }\ket{\psi}}
+ \sum_{\ell=2}^k \norm*{\left( A(x_\ell) \ot I - I \ot B(x_\ell) \right) \ket\psi} \\
&\leq \sum_{\ell=1}^k \norm*{\left( \Phi_A A(x_\ell) \ot I - (\gamma(x_\ell) \ot I) \Phi_A \ot I  \right) \ket\psi}
+ 2 \sum_{\ell=2}^k \norm*{\left(A(x_\ell) \ot I - I \ot B(x_\ell) \right) \ket\psi}
\end{align*}
by telescoping sums (first to swap all but the first $A(x_\ell)$ to $B(x_\ell)$, which reverses the order; then we successively replace $\Phi_A A(x_1)$ by $\gamma(x_1) \Phi_A$, swap $B(x_2)$ back to $A(x_2)$, and so forth until we are done).
Hence, using the Cauchy--Schwarz inequality and the linearity of expectation values,
\begin{align*}
&\quad \E_{R \in \SO(n)} \norm*{\Phi_A A(R x_1) \cdots A(R x_k) - \left(\gamma(R x_1) \cdots \gamma(R x_k) \otimes I\right) \Phi_A}_{\rho_A}^2 \\
&\leq 5k \E_{R \in \SO(n)}
\left(
\sum_{\ell=1}^k  \norm*{\Phi_A A(R x_\ell)- \left(\gamma(R x_\ell) \otimes I\right)\Phi_A}_{\rho_A}^2 + \sum_{\ell=2}^k  \norm*{\left(A(R x_\ell) \ot I - I \ot B(R x_\ell) \right) \ket\psi}^2 \right) \\
&\leq \bigO(n^3 \eps).
\end{align*}
In the last step we used that each $R x_\ell$ is uniformly random on the sphere~$S^{n-1}$, so we can apply \cref{eq:bound 1,eq:bound 2}.

Finally, if $g \in \Pin(n)$ is Haar random then so is $h = \alpha_R^{-1}(g)$ for any fixed~$R$.
Thus, $h$ and $R$ are also independent.
Moreover, if $h = s c(x_1) \cdots c(x_k)$ then $g = s c(R x_1) \cdots c(R x_k)$.
As a consequence:
\begin{align*}
&\quad \E_{R \in \SO(n)} \E_{g \in \Pin(n)} \norm*{\Phi_A A_R(g) - \left( \Gamma(g) \ot I \right) \Phi_A}_{\rho_A}^2 \\
&= \E_{h \in \Pin(n)} \E_{R \in \SO(n)} \left\lVert \Phi_A A(R x_1(h)) \cdots A(R x_{k(h)}(h)) - \left( \gamma(R x_1(h)) \cdots \gamma(R x_{k(h)}(h)) \ot I \right) \Phi_A \right\rVert_{\rho_A}^2 \\
&\leq \bigO(n^3 \eps).
\end{align*}
This is the desired inequality.
\end{proof}

\section{Effective Gowers--Hatami Theorem for Compact Groups}\label{sect:bonus}
In this section, we present a version of the Gowers--Hatami theorem~\cite{GowersHatami2017} that combines a number of appealing features: it is direct and explicit, applies to finite and compact ``continuous'' groups alike, and clarifies what is required so that the isometry can be implemented efficiently.
We also give a version that allows the extraction of a specific irreducible, which is important for self-testing.
While we claim no novelty, to the best of our knowledge this proof has not been presented in this way before in the literature.
The special case of finite groups was described in \cite{samplingpaper}.
As noted therein, the argument is distinct from the one given recently in \cite{metger2024succinct} (which obtains the isometry implicitly) and also from the general argument \cite{vidicknotes} (we postpone the use of the non-commutative Fourier transform, which we believe clarifies the isometries' interpretation).
We present the argument here since it fits the theme of the paper, we expect that it may prove useful for self-testing of other continuous groups, and to keep the paper fully self-contained.

Let $G$ be a compact (Hausdorff) topological group, equipped with its unique normalized bi-invariant Haar measure.
We write $\E_{g \in G}$ for the corresponding Haar average.
In the following, all functions on $G$ will be assumed to be suitably measurable (in particular, operator-valued functions and their adjoints are assumed to be Bochner measurable after application to any fixed vector).
Let $\mathcal H$ be a Hilbert space.
We denote by $\Ell^2(G, \calH)$ the Hilbert space of square-integrable $\mathcal H$-valued functions on~$G$.
We abbreviate $\Ell^2(G) \coloneqq \Ell^2(G, \C)$.
There is a unitary representation of~$G$ on $\Ell^2(G, \calH)$ given by right multiplication in the argument:
\begin{align*}
    R \colon G \to \U(\Ell^2(G, \calH)), \quad (R(h) \varphi)(g) \coloneqq \varphi(g h).
\end{align*}
For a quantum state~$\rho$ on~$\mathcal H$ and any other Hilbert space~$\mathcal K$, we will also use the state-weighted (positive semidefinite) inner product~$\ip{A}{B}_\rho := \tr (A^\dagger B \rho)$ on the space of bounded linear operators $\mathcal H \to \mathcal K$, with associated (semi)norm~$\norm{A}_\rho$, as in \Cref{eq:state-dependent-seminorm}.
We observe that $\norm{W A}_\rho \leq \norm{A}_\rho$ for any contraction~$W \colon \mathcal K \to \mathcal L$ of Hilbert spaces (i.e., $\norm{W} \leq 1$ or $W^\dagger W \leq I$).

The following variant of the Gowers--Hatami theorem shows that an approximate representation can always be rounded into an exact one on a larger space, by a simple universal formula, with pointwise (worst-case) guarantees that hold for any state-weighted norm.

\begin{theorem}\label{thm:GH}
For any function $f \colon G \to \mathrm U(\mathcal H)$, the linear map
\begin{align*}
    V \colon \calH \to \Ell^2(G, \calH), \quad (V \ket\psi)(g) \coloneqq f(g) \ket\psi
\end{align*}
is an isometry.
Moreover, for every $h \in G$ and any state $\rho$ on $\mathcal H$:
\begin{align*}
  \norm{ V f(h) - R(h) V }_\rho^2 = \E_{g \in G} \norm{ f(g) f(h) - f(gh) }_\rho^2.
\end{align*}
\end{theorem}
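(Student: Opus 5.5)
The plan is to verify both claims by direct expansion, using only that $f$ is unitary-valued and that the Haar measure is normalized and right-invariant.

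\emph{Isometry.} For $\ket\psi \in \calH$, the definition of the inner product on $\Ell^2(G,\calH)$ gives
\[
\norm{V\ket\psi}^2 = \E_{g\in G}\norm{f(g)\ket\psi}^2 = \E_{g\in G}\norm{\ket\psi}^2 = \norm{\ket\psi}^2,
\]
since each $f(g)$ is unitary and the Haar measure is normalized. Here measurability of $g\mapsto f(g)\ket\psi$ is assumed. Linearity of $V$ is clear, so $V$ is an isometry.

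\emph{The identity.} Fix $h\in G$. Compute both operators pointwise in $g$:
\[
(Vf(h)\ket\psi)(g) = f(g)f(h)\ket\psi,
\qquad
(R(h)V\ket\psi)(g) = (V\ket\psi)(gh) = f(gh)\ket\psi .
\]
Hence $Vf(h)-R(h)V$ is the operator $\calH\to\Ell^2(G,\calH)$ sending $\ket\psi$ to the function $g\mapsto (f(g)f(h)-f(gh))\ket\psi$.

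To evaluate the state-dependent seminorm, write the spectral decomposition $\rho=\sum_j p_j\ket{e_j}\bra{e_j}$ (finite-dimensional, or trace class in general). For any operator $T$, $\norm{T}_\rho^2=\tr(T^\dagger T\rho)=\sum_j p_j\norm{T\ket{e_j}}^2$. Applied to $T=Vf(h)-R(h)V$, this gives
\[
\sum_j p_j\,\E_{g}\norm{(f(g)f(h)-f(gh))\ket{e_j}}^2 .
\]
The terms are nonnegative, so Tonelli lets us exchange the sum and the expectation, yielding
\[
\E_g \sum_j p_j\norm{(f(g)f(h)-f(gh))\ket{e_j}}^2 = \E_g\norm{f(g)f(h)-f(gh)}_\rho^2 .
\]

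There is no real obstacle. The only delicate point is measure-theoretic: the functions $g\mapsto f(g)\ket\psi$ and $g\mapsto f(gh)\ket\psi$ must be square-integrable. Boundedness holds because both are unit-norm for unit $\ket\psi$. Measurability of the first is the standing assumption, and for the second it follows by composing with the continuous map $g\mapsto gh$. This also shows $R(h)$ is well defined as a unitary, using right-invariance of the Haar measure, although the identity itself only needs its action on $V\ket\psi$.
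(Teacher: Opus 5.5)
Your proof is correct and follows the paper's argument. Both compute $(Vf(h)-R(h)V)\ket\psi$ pointwise as $g\mapsto (f(g)f(h)-f(gh))\ket\psi$, then evaluate the $\rho$-seminorm through the $\Ell^2(G,\calH)$ inner product; the only difference is that the paper writes $(Vf(h)-R(h)V)^\dagger(Vf(h)-R(h)V)$ as a Haar average of operators and traces against $\rho$, whereas you expand $\rho$ spectrally and swap sum and expectation by Tonelli.
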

\begin{proof}
Because each $f(g)$ is unitary, it is immediate that $V$ is an isometry.
It remains to prove the norm equality.
To this end, we note that
\[ \left( \left( V f(h) - R(h) V \right)\ket\psi \right)(g) = \left( f(g) f(h) - f(gh) \right) \ket\psi \]
for all $g, h \in G$ and $\ket\psi\in\calH$, hence
\[ \left( V f(h) - R(h) V \right)^\dagger \left( V f(h) - R(h) V \right) = \E_{g \in G} \left( f(g) f(h) - f(gh) \right)^\dagger \left( f(g) f(h) - f(gh) \right), \]
and now the claim follows by taking the trace with $\rho$.
\end{proof}

Despite the simplicity of the proof, the theorem has a direct algorithmic interpretation.
Note that we can canonically identify $\Ell^2(G,\mathcal H) \cong \Ell^2(G) \ot \mathcal H$.
As is plain from the formula, the isometry~$V$ in \cref{thm:GH} can be implemented by the composition of the following two subroutines:
\begin{enumerate}[noitemsep]
\item
Add a register with Hilbert space~$\Ell^2(G)$ initialized with the constant function everywhere equal to one (denote $\ket u$ below).
When $G$ is a finite group, this is simply the uniform superposition of all group elements.
\item Apply the \emph{controlled} approximate representation, which is the unitary $U_f$ on $\Ell^2(G, \calH)$ given by $(U_f \varphi)(g) = f(g) \varphi(g)$.
\end{enumerate}
Schematically as a quantum circuit:
\begin{align*}
    V = \begin{quantikz}
        & \wireoverride{n} \gategroup[2,steps=3,style={rounded corners, inner xsep=2pt},background,label style={label position=below,anchor=north,yshift=-0.2cm}]{} &\wireoverride{n} \lstick{$\ket u$} & \ctrl{1} & \\
        & & & \gate{f} &
    \end{quantikz}
\end{align*}
In particular, $V$ can be implemented efficiently in any setting where the uniform superposition over all group elements and the controlled approximate representation can be implemented efficiently.

In applications one is often interested in rounding towards a \emph{specific} irreducible representation---say, the spin representation of the Pin group, its restriction to the Clifford subgroup, or the defining representation of the Weyl-Heisenberg group.
This can be achieved if the representation satisfies a natural hypothesis that generalizes, e.g., the condition that an approximate representation of the Weyl--Heisenberg group generated by the Pauli X and Z operators satisfies $f(-P) = -f(P)$ for all elements~$P$.

\begin{theorem}\label{thm:GH irrep}
Let $\lambda \colon G \to \U(\mathcal H_\lambda)$ be a continuous irreducible representation with character~$\chi_\lambda$ and dimension~$d_\lambda$.
Let $f \colon G \to \mathrm{U}(\mathcal H)$ be a function such that $d_\lambda \E_{h \in G} \overline{\chi_\lambda(h)} f(gh) = f(g)$ for all~$g \in G$.
Then the image of the isometry $V$ from \cref{thm:GH} is in the $\lambda$-isotypical subspace of~$\Ell^2(G,\calH)$, which using the Fourier transform of~$G$ identifies with $\End(\calH_\lambda) \ot \calH \cong \calH_\lambda \ot \calH_\lambda^* \ot \calH$.
Thus, the corresponding restriction
\begin{align*}
    V_\lambda \colon \calH \to \End(\calH_\lambda) \ot \calH \cong \calH_\lambda \ot \calH_\lambda^* \ot \calH, \quad V_\lambda \ket\psi = \sqrt{d_\lambda} \E_{g \in G} \lambda(g)^\dagger \ot f(g) \ket\psi
\end{align*}
is an isometry and it satisfies, for every $h \in G$ and any state $\rho$ on $\mathcal H$:
\begin{align*}
     \norm{ V_\lambda f(h) - \bigl( \lambda(h) \ot I_{\calH_\lambda^*} \ot I_{\calH} \bigr) V_\lambda }_\rho^2 = \E_{g \in G} \norm{ f(g) f(h) - f(gh) }_\rho^2.
\end{align*}
\end{theorem}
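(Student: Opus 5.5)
The plan is to reduce everything to \cref{thm:GH} by showing that the image of $V$ lies in the $\lambda$-isotypical component, and then to identify that component concretely via Peter--Weyl.

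\textbf{Step 1: the isotypical projection.} On $\Ell^2(G,\calH)$ the right regular representation $R$ has the isotypical projector
\[
P_\lambda = d_\lambda \E_{h\in G}\overline{\chi_\lambda(h)}R(h).
\]
This is a standard fact, obtained from Schur orthogonality applied to the right regular representation on $\Ell^2(G)$, tensored with $I_\calH$. Applying it to $V\ket\psi$ gives
\[
(P_\lambda V\ket\psi)(g) = d_\lambda\E_h\overline{\chi_\lambda(h)}f(gh)\ket\psi = f(g)\ket\psi
\]
by the hypothesis. Hence $P_\lambda V = V$, so the image of $V$ lies in the $\lambda$-isotypical subspace.

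\textbf{Step 2: identifying that subspace.} By Peter--Weyl, the $\lambda$-isotypical subspace of $\Ell^2(G)$ under $R$ is spanned by the matrix coefficients $g\mapsto \tr[X\lambda(g)]$ for $X\in\End(\calH_\lambda)$. The map
\[
W\colon \varphi \mapsto \sqrt{d_\lambda}\E_g \lambda(g)^\dagger\varphi(g)
\]
restricts on this subspace to a unitary onto $\End(\calH_\lambda)$ equipped with the Hilbert--Schmidt inner product. Its inverse is $X\mapsto \sqrt{d_\lambda}\tr[\lambda(g)X]$, and unitarity again follows from Schur orthogonality. The same holds after tensoring with $\calH$. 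Moreover $W$ intertwines $R(h)$ with left multiplication by $\lambda(h)$:
\[
W(R(h)\varphi) = \sqrt{d_\lambda}\E_g\lambda(g)^\dagger\varphi(gh) = \sqrt{d_\lambda}\E_{g}\lambda(h)\lambda(g)^\dagger\varphi(g),
\]
using the substitution $g\to gh^{-1}$ and right invariance of Haar measure. Left multiplication on $\End(\calH_\lambda)\cong\calH_\lambda\ot\calH_\lambda^*$ is exactly $\lambda(h)\ot I_{\calH_\lambda^*}$.

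\textbf{Step 3: conclusion.} Set $V_\lambda = WV$. Since $W$ is unitary on the isotypical component, $V_\lambda$ is an isometry, and its formula is exactly the one in the statement. Moreover
\[
V_\lambda f(h) - (\lambda(h)\ot I\ot I)V_\lambda = W\bigl(Vf(h)-R(h)V\bigr),
\]
where we use that $R(h)V$ also lands in the isotypical subspace, since $R(h)$ commutes with $P_\lambda$. Because $W$ is isometric on that subspace, the $\rho$-seminorm is preserved. The claimed identity then follows from \cref{thm:GH}.

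The main technical point is Step 2, namely verifying the unitarity of $W$ on the isotypical component, including the normalizations $\sqrt{d_\lambda}$ and the correct side of the action. I would handle it by writing everything in matrix coefficients $\lambda_{ij}$ and using
\[
\E_g \overline{\lambda_{ij}(g)}\lambda_{kl}(g) = \delta_{ik}\delta_{jl}/d_\lambda.
\]
Measurability issues are covered by the standing assumptions on Bochner measurability.
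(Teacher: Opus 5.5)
Your proposal is correct and follows the same route as the paper. Both arguments show $P_\lambda V = V$ via the isotypic projector $d_\lambda \E_{h\in G}\overline{\chi_\lambda(h)}R(h)$, identify that component with $\End(\calH_\lambda)\ot\calH$ through the Fourier coefficient $\sqrt{d_\lambda}\E_{g\in G}\lambda(g)^\dagger\varphi(g)$ (which intertwines $R(h)$ with $\lambda(h)\ot I$), and transfer the norm identity from \cref{thm:GH}. You additionally spell out the Schur-orthogonality check and the fact that $Vf(h)-R(h)V$ lands in the isotypic subspace, both of which the paper leaves implicit.
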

\begin{proof}
The orthogonal projection on the $\lambda$-isotypical subspace of a representation~$R$ is given by $P_\lambda = d_\lambda \E_{g \in G} \overline{\chi_\lambda(g)} R(g)$.
Let $V$ denote the isometry from \cref{thm:GH}.
Then:
\begin{align*}
    (P_\lambda V \ket\psi)(g)
=  d_\lambda \E_{h \in G} \overline{\chi_\lambda(h)} f(gh) \ket\psi
= f(g) \ket\psi
= (V \ket\psi)(g) \qquad (g \in G, \ket\psi\in\calH).
\end{align*}
Thus, $P_\lambda V = V$, which confirms that the image of $V$ is in the $\lambda$-isotypical subspace.
To obtain a concrete formula for the restriction, recall the Peter--Weyl theorem:
there exists a unitary, called the \emph{Fourier transform} of the group~$G$, that identifies
\[ \mathcal F \colon \Ell^2(G) \to \bigoplus_{\lambda \in \Lambda} \End(\calH_\lambda) \cong \bigoplus_{\lambda \in \Lambda} \calH_\lambda \ot \calH_\lambda^*, \]
where $\Lambda$ is a complete set of pairwise-inequivalent irreducible unitary representations of~$G$, and the representation~$R(h)$ on $\Ell^2(G)$ corresponds to \emph{left} multiplication by $\lambda(h)$ on each $\End(\calH_\lambda)$ or, equivalently, by $\lambda(h) \ot I_{\calH_\lambda^*}$ on each $\calH_\lambda \ot \calH_\lambda^*$.
Thus, the $\lambda$-isotypical component corresponds to the $\lambda$-direct summand.
Concretely, a function $\varphi$ is mapped to the collection $\hat\varphi = (\hat\varphi(\lambda))_{\lambda\in\Lambda}$, where $\hat\varphi(\lambda) \coloneqq \sqrt{d_\lambda} \E_{g \in G} \varphi(g) \lambda(g)^\dagger$; cf.~\cite{vidicknotes}.
The above generalizes directly to $\Ell^2(G, \calH) \cong \Ell^2(G) \ot \calH$, resulting in the displayed formula for $V_\lambda$ by composing $V$, the Fourier transform, and the restriction to the $\lambda$-direct summand.
Now the norm equality follows from \cref{thm:GH}.
\end{proof}

As described in \cref{thm:GH irrep}, the isometry $V_\lambda$ can be implemented by first implementing~$V$ as described above, then applying the Fourier transform for the group~$G$ and finally discarding the irrep label (which will always be $\lambda$).
Schematically as a quantum circuit:
\begin{align*}
    V_\lambda = \begin{quantikz}[row sep=0.1cm]
        \setwiretype{n} &  \gategroup[4,steps=6,style={rounded corners, inner xsep=2pt},background,label style={label position=below,anchor=north,yshift=-0.2cm}]{} &  &  & \gate[3]{\mathcal F} & \setwiretype{q} \rstick{$\ket\lambda$} \\
        & \wireoverride{n} & \wireoverride{n} \lstick{$\ket u$} & \ctrl{2} & & & & \\
        & \wireoverride{n} & \wireoverride{n} & \wireoverride{n} & \wireoverride{n} & & & \\
        & & & \gate{f} & & & &
    \end{quantikz}
\end{align*}
In particular, $V_\lambda$ can be implemented efficiently in any setting where the uniform superposition over the group can be efficiently prepared and the Fourier transform of the group (if the inverse is also available, this can be used to realize the uniform superposition) and the controlled approximate representation can be implemented efficiently.

\section*{Acknowledgments}
AK, GM, SS and MW acknowledge support by the Deutsche Forschungsgemeinschaft (DFG, German Research Foundation) under Germany's Excellence Strategy - EXC 2092 CASA - 390781972.
AK and MW acknowledge support by the Deutsche Forschungsgemeinschaft (DFG, German Research Foundation) under Germany's Excellence Strategy~--~EXC-2111~--~390814868.
GM acknowledges support by the European Union (ERC Grant ObfusQation, 101077455).
MW acknowledges support by the European Union (ERC Grant SYMOPTIC, 101040907), by the Deutsche Forschungsgemeinschaft (DFG, German Research Foundation, 556164098), by the German Federal Ministry of Research, Technology and Space (QuSol, 13N17173), and by the Klaus Tschira Foundation.

\bibliographystyle{alpha}
\bibliography{references}
\end{document}